\title{Statistical Inference for Fisher Market Equilibrium}
\author{Luofeng Liao, Yuan Gao, Christian Kroer \\
Department of Industrial Engineering and Operations Research\\
Columbia University\\
\texttt{\{ll3530,yg2541,christian.kroer\}@columbia}
}
\setlist[itemize]{topsep=0pt,}
\newcommand{\ubar}[1]{\underaccent{\bar}{#1}}
\def\##1\#{\begin{align}#1\end{align}}
\def\$#1\${\begin{align*}#1\end{align*}}
\def\tr{\mathop{\text{tr}}\kern.2ex}
\def \gam {{ \gamma }}
\def \indi {{ \mathds{1} }}
\def \calME {{ \mathcal{ME}}}
\def \calMEgam {{ \mathcal{ME}^\gamma}}
\def \QME {{ \mathcal{QME}}}
\def \QMEgam {{ \mathcal{QME}^\gam}}
\def\R{{\mathbb R}}
\def\Rp{{\mathbb R _+}}
\def\Rn{{\mathbb R^n}}
\def\Rnp{{\mathbb R^n_+}}
\def\Rnmpp{{\mathbb R^{n-1}_{++}}}
\def\Rnpp{{\mathbb R^n_{++}}}
\def\Q{{\mathbb Q}}
\def\P{{\mathbb P}}
\def\E{{\mathbb E}}
\def\cB{{\mathcal{B}}}
\def\cH{{\mathcal{H}}}
\def\cN{{\mathcal{N}}}
\def\must{\mu^*}
\def\t{\theta}
\def \sfs {{\mathsf{s}}}
\def\rg{{\rangle}}
\def\lg{{\langle}}
\newtheorem{example}{Example}
\newtheorem{remark}{Remark}
\newtheorem{claim}{Claim}
\newtheorem{fact}{Fact}
\newtheorem{lemma}{Lemma}
\newtheorem{defn}{Definition}
\newtheorem{corollary}{Corollary}
\newlist{enumconditions}{enumerate}{1} 
\setlist[enumconditions]{label = \thelemma.\alph*}
\crefname{enumconditionsi}{Condition}{Conditions}
\newlist{enumlmresult}{enumerate}{1} 
\setlist[enumlmresult]{label = \thelemma.\arabic*}
\crefname{enumlmresulti}{Part}{Parts}
\newlist{enumthmresult}{enumerate}{1} 
\setlist[enumthmresult]{label = \thetheorem.\arabic*}
\crefname{enumthmresulti}{Part}{Parts}
\newlist{enumassumption}{enumerate}{1} 
\setlist[enumassumption]{label = \theAssumption.\arabic*}
\crefname{enumassumptioni}{Condition}{Conditions}
\newtheorem{theorem}{Theorem}
\renewcommand*{\thetheorem}{\arabic{theorem}}
\def \I {{\mathrm{{I}}}}
\def \II {{\mathrm{{II}}}}
\def \LNSW {{\small \mathrm{{NSW}}}}
\def \EG {{\scriptscriptstyle \mathrm{{EG}}}}
\def \calME {{\mathcal{{ME}}}}
\def \ME {{\scriptscriptstyle \mathrm{{ME}}}}
\def \QEG {{\scriptscriptstyle \mathrm{{QEG}}}}
\def \REV {{\small \mathrm{{REV}}}}
\def \NSW {{ \scriptscriptstyle \mathrm{{N}}}}
\def \cBst {{ \mathcal{B}^*}}
\def \cBgam {{ \mathcal{B}^\gamma}}
\def \cBgamS {{ \mathcal{B}^\gamma_S}}
\def \cBgamC {{ \mathcal{B}^\gamma_C}}
\def \cBgamX {{ \mathcal{B}^\gamma_X}}
\def \cBstS {{ \mathcal{B}^*_S}}
\def \cBstC {{ \mathcal{B}^*_C}}
\def \cBstX {{ \mathcal{B}^*_X}}
\def \Hst {{H^*}}
\def \hatOmesqi {{ \hat { \Omega}_i^2 }}
\def \ust {{ u^* }}
\def \ugam { u^\gamma }
\def \ugami { u^\gamma_i }
\def \ugamtaui { u^{\gamma,\tau}_i }
\def \usti {{ u^*_i }}
\def \ubarbetai {{\ubar\beta_i}}
\def \ubarbetaQi {{\ubar\beta_{Q,i}}}
\def \vithetau {{ v_i(\theta^\tau) }}
\def \vithe {{ v_i(\theta) }}
\def \vbar {{ \bar v }}
\def \vbarsq {{ \bar v ^2 }}
\def \vbarit {{ \bar v_i^t }}
\def \fbar {{\bar f}}
\def \var {{ \operatorname {var} }}
\def \nui {{ \nu_i}}
\def \nubar {{ \bar\nu}}
\def \sumtau {{\sum_{\tau=1}^{t}}}
\def \sumiton {{ \sum_{i=1}^n }}
\def \sumi {{ \sum_{i} }}
\def \ptau {{ p ^ \tau }}
\def \pgamtau {{ p ^ {\gamma, \tau} }}
\def \pgam {{ p ^ {\gamma} }}
\def \xtaui {{ x^\tau_i }}
\def \xgam {{ x^{\gamma} }}
\def \xgamtaui {{ x^{\gamma,\tau}_i }}
\def \ttinf {{t\to \infty}}
\def \thetau {{ \theta^\tau }}
\def \eps{{ \epsilon }}
\def \sighatsq {{ \hat \sigma ^2}}
\def \pst {{p^*}}
\def \betabar {{\bar \beta}}
\def \betabarQ {{\bar \beta _Q}}
\def \betadia {{\beta^\diamond}}
\def \betast {{\beta^*}}
\def \betasti {{\beta^*_i}}
\def \betastX {{\beta^*_X}}
\def \betai {{\beta_i}}
\def \betagam {\beta^\gamma}
\def \betagami {\beta^\gamma_i}
\def \cov {{\operatorname{Cov}}}
\def \xst {x^*}
\def \xsti {{x^*_i}}
\def \inv {^{-1}}
\def \sq {^{2}}
\def \tp {^{\top}}
\def \st {^{*}}
\def \dom {{ \operatorname*{Dom}\,}}
\def \Diag {{ \operatorname*{Diag}}}
\newcommand{\defeq}{\vcentcolon=}
\newcommand*\diff{\mathop{}\!\mathrm{d}}
\def \toas {{ \,\overset{\mathrm{{a.s.\;}}}{\longrightarrow} \,}}
\def \toprob {{ \,\overset{\mathrm{p}}{\to}\, }}
\def \toepi {{ \,\overset{\mathrm{epi}}{\longrightarrow}\, }}
\def \tod {{ \,\overset{\mathrm{d}}{\to}\, }}
\DeclareMathOperator*{\argmax}{arg\,max}
\DeclareMathOperator*{\argmin}{arg\,min}
\DeclareMathOperator*{\esssup}{ess\,sup}
\begin{document}
\maketitle
\begin{abstract}
    Statistical inference under market equilibrium effects has attracted increasing attention recently. 
    In this paper we focus on the specific case of linear Fisher markets. They have been widely use in fair resource allocation of food/blood donations and budget management in large-scale Internet ad auctions. In resource allocation, it is crucial to quantify the variability of the resource received by the agents (such as blood banks and food banks) in addition to fairness and efficiency properties of the systems. For ad auction markets, it is important to establish statistical properties of the platform's revenues in addition to their expected values. To this end, we propose a statistical framework based on the concept of infinite-dimensional Fisher markets. In our framework, we observe a market formed by a finite number of items sampled from an underlying distribution (the ``observed market'') and aim to infer several important equilibrium quantities of the underlying long-run market. These equilibrium quantities include individual utilities, social welfare, and pacing multipliers. Through the lens of sample average approximation (SSA), we derive a collection of statistical results and show that the observed market provides useful statistical information of the long-run market. In other words, the equilibrium quantities of the observed market converge to the true ones of the long-run market with strong statistical guarantees.
    These include consistency, finite sample bounds, asymptotics, and confidence. As an extension we discuss revenue inference in quasilinear Fisher markets.
 \end{abstract}
\section{Introduction}

In a Fisher market there is a set of $n$ buyers that are interested in buying goods from a distinct seller. 
A market equilibrium (ME) is then a set of prices for the goods, along with a corresponding allocation, such that demand equals supply.

One important application of market equilibrium (ME) is fair allocation using the competitive equilibrium
from equal incomes (CEEI) mechanism~\citep{varian1974equity}. 
In CEEI, each individual is given an endowment
of faux currency and reports her valuations for items; then, a market equilibrium is computed, and
the items are allocated accordingly. The resulting allocation has many desirable properties such as Pareto optimality, envy-freeness and proportionality. For example, Fisher market equilibrium has been used for 
fair work allocation, impressions allocation in certain recommender systems, course seat allocation and scarce computing resources allocation; see \cref{sec:related_works} for an extensive overview.

Despite numerous algorithmic results available for computing Fisher market equilibria, to the best of our knowledge, no statistical results were available for quantifying the randomness of market equilibrium. 
Given that CEEI is a fair and efficient mechanism, such statistical results are useful for quantifying variability in CEEI-based resource allocation.
For example, for systems that assign blood donation to hospitals and blood banks~\citep{mcelfresh2020matching}, 
or donated food to charities in different neighborhoods~\citep{aleksandrov2015online,sinclair2021fairness}, 
it is crucial to quantify the variability of the amount of resources (blood or food donation) received by the participants (hospitals or charities) of these systems as well as the variability of fairness and efficiency metrics of interest in the long run.
Making statistical statements about these metrics is crucial for both evaluating and improving these systems.

In addition to fair resource allocation, statistical results for Fisher markets can also be used in revenue inference in Internet ad auction markets. 
While much of the existing literature uses expected revenue as performance metrics, 
statistical inference on revenue is challenging due to the complex interaction among bidders under coupled supply constraints and common price signals.
As shown by~\citet{conitzer2022pacing}, in budget management through repeated first-price auctions with pacing, the optimal pacing multipliers correspond to the  ``prices-per-utility'' of buyers in a quasilinear Fisher market at equilibrium.
Given the close connection between various solution concepts in Fisher market models and first-price auctions, a statistical framework enables us to quantify the variability in long-run revenue of an advertising platform. 
Furthermore, a statistical framework would also help answer other statistical questions such as the study of counterfactuals and theoretical guarantees for A/B testing in Internet ad auction markets.

For a detailed survey on related work in the areas of statistical inference, applications of Fisher market models, and equilibrium computation algorithms, see~\cref{sec:related_works}. 

Our contributions are as follows.

\textbf{A statistical Fisher market model}.
We formulate a statistical estimation problem for Fisher markets based on the continuous-item model of \citet{gao2022infinite}.
We show that when a finite set of goods are sampled from the continuous model, the observed ME is a good approximation of the long-run market.
In particular, we develop consistency results, finite-sample bounds, central limit theorems, and asymptotically valid confidence interval for various quantities of interests, such as individual utility, Nash social welfare, pacing multipliers, and revenue (for quasilinear Fisher markets).

\textbf{Technical challenges}. In developing central limit theorems for pacing multipliers and utilities in Fisher markets~(\cref{thm:clt_beta_u}), we note that the dual objective is potentially not twice differentiable. This is a required condition, which is common in the sample average approximation or M-estimation literature. We discover three types of market where such differentiability is guaranteed.
Moreover, the sample function is not differentiable, which requires us to verify a set of stochastic differentiability conditions in the proofs for central limit theorems. 
Finally, we achieve a fast statistical rate of the empirical pacing multiplier to the population pacing multiplier measured in the dual objective by exploiting the local strong convexity of the sample function.

\begin{figure}[t]
    \centering
    \includegraphics[scale=.6]{./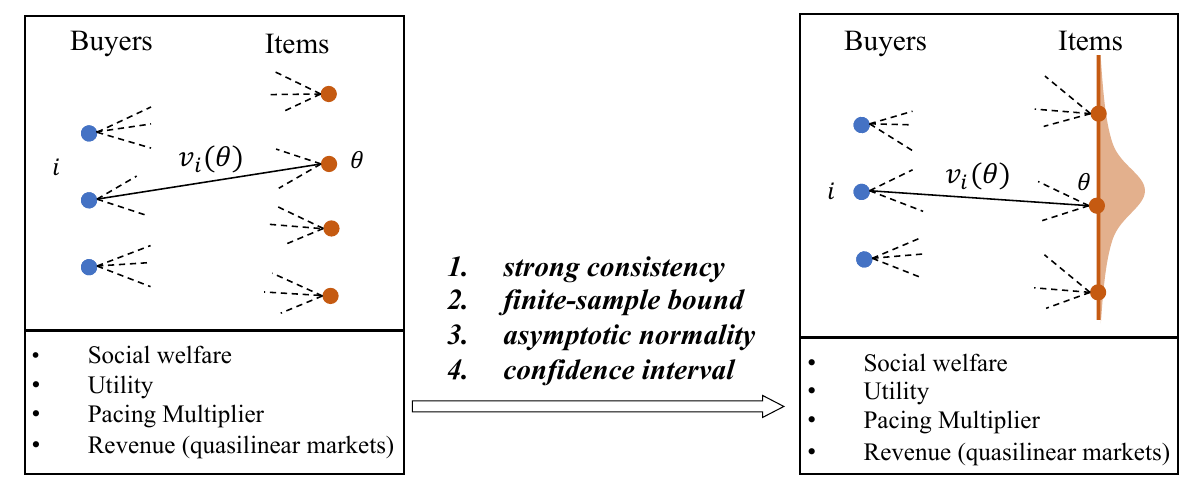}
    \caption{
        \small
        Our contributions. 
    Left panel: a Fisher market with a finite number of divisible
    items. Buyer $i$ has value $v_i(\theta)$ for item $\theta$. The goal is to allocate items 
    so that equilibrium conditions are met (\cref{def:observed_market}).
    Right panel: an infinite-dimensional Fisher market with a continuum of items. 
    Middle arrow: this paper provides various forms of statistical guarantees to characterize the convergence of observed finite Fisher market (left) to 
    the long-run market (right) when the items are drawn from a distribution corresponding to the supply function in the long-run market.}
    \vspace{-4mm}
\end{figure}

\noindent\textbf{Notation.} 
For a sequence of events $A_n$ we define the set limit by $ \liminf _{n \rightarrow \infty} A_{n}=\bigcup_{n \geq 1} \bigcap_{j \geq n} A_{j} =\{ A_t \text{ eventually}\}$ and $\limsup_{n \to \infty } A_{n}=\bigcap_{n \geq 1} \bigcup_{j \geq n} A_{j} = \{ A_t \text{ i.o.}\}$.
For vector $a,b\in \Rn$ we let $a{\cdot} b$ be the elementwise product and let $a_{-i} \in \R^{n-1}$ be the vector $a$ with the $i$-th entry removed. For vector $a$ we let $[a_{(1)},\dots, a_{(n)}]$ denote the sorted entries of $a$ from greatest to least.
Let $[n] = \{1,\dots, n\}$. We use $1_t$ to denote the vector of ones of length $t$ 
and $e_j$ to denote the vector with one in the $j$-th entry and zeros in the others.
For a sequence of random variables $\{X_n\}$, we say $X_n = O_p(1)$ if for any $\epsilon > $ there exists a finite $M_\epsilon$ and a finite $N_\epsilon$ such that $\P(|X_n| > M_\epsilon) < \epsilon$ for all $n\geq N_\epsilon$. We say $X_n = O_p(a_n)$ if $X_n/a_n = O_p(1)$.
We use subscript for indexing buyers and superscript for items. If a function $f$ is twice continuously differentiable at a point $x$, we say $f$ is $C^2$ at $x$.
\section{Problem Setup}

\subsection{The Estimands}\label{sec:estimands}
Following~\citet{gao2022infinite}, we consider a Fisher market with $n$ buyers (individuals), each having a budget $b_i>0$ and a (possibly continuous) set of items $\Theta$. 
We let $L^p$ (and $L^p_+$, resp.) denote the set of (nonnegative, resp.) $L^p$ functions on $\Theta$ for any $p\in [1, \infty]$ (including $p=\infty$). 
The item \emph{supplies} are given by a function  $ s \in L^\infty_+$, i.e., item $\theta\in \Theta$ has supply $s(\theta)$. 
The \emph{valuation} for buyer $i$ is a function $v_i \in L^1_+$, i.e., buyer $i$ has valuation $v_i(\theta)$ for item $\theta\in \Theta$. 
For buyer $i$, an \emph{allocation} of items $x_i \in L^\infty_+$ gives a utility of 
\[ u_i(x_i) := \langle v_i, x_i \rangle := \int_\Theta v_i(\theta) x_i(\theta) \diff \mu(\theta),\] 
where the angle brackets are based on the notation of applying a bounded linear functional $x_i$ to a vector $v_i$ in the Banach space $L^1$ and the integral is the usual Lebesgue integral. 
We will use $x\in (L^\infty_+)^n$ to denote the aggregate allocation of items to all buyers, i.e., the concatenation of all buyers' allocations. 
The \emph{prices} of items are modeled as $p\in L^1_+$. The price of item $\theta\in \Theta$ is $p(\theta)$.
Without loss of generality, we assume  a unit total supply $\int_\Theta s \diff \mu = 1$.
We let $S(A) \defeq \int_A s(\theta)\diff \mu(\theta)$ be the probability measure induced by the supply $s$.

\begin{defn}[The long-run market equilibrium]
The {market equilibrium (ME)} $\calME (b,v,s)$ is an allocation-utility-price tuple $(x^*, u^*, p^*) \in (L^\infty_+)^n \times \Rnp \times L^1_+$ such that the following holds.
(i) Supply feasibility and market clearance: $\sum_i x^*_i \leq s$ and $\langle p^*, s - \sum_i x^*_i \rangle = 0$. 
(ii)
Buyer optimality: $x^*_i \in D_i (p^*)$ and $u^* = \lg v_i, x_i \rg$ for all $i$ where 
        the {demand} $D_i$ of buyer $i$ is its set of utility-maximizing allocations given the prices and budget:
        \[ D_i (p) := \argmax \{ \langle v_i, x_i \rangle : x_i \in L^\infty_+,\, \langle p, x_i\rangle \leq b_i \}.\] 

\label{def:long-run_market}
\end{defn}

Linear Fisher market equilibrium can be characterized by convex programs. 
We state the following result from~\citet{gao2022infinite} which establishes existence and uniqueness of market equilibrium, and more importantly the convex program formulation of the equilibrium. We define the Eisenberg-Gale (EG) convex programs which as we will see are dual to each other.

\vspace{-.7cm}
\begin{align}
\label{eq:pop_eg}
\max_{x \in L^\infty_+(\Theta), u\geq 0}
\bigg\{ 
    \sumiton b_i \log  (u_i)
\;\bigg|\;  
u_i   \leq  \big\langle v_i, {x}_{i} \big\rangle 
  \;\; \forall i \in [n] 
,\;\;
    \sumiton {x}_{i} \leq s
\bigg\} \;,
\tag{\small{P-EG}}
\end{align}
\vspace{-.5cm}
\begin{align}
    \min_{\beta > 0}
    \bigg\{ H(\beta) = \int_\Theta  \Big(\max_{i\in[n]} \beta_i v_i(\theta) \Big) S(\diff \theta)  
    -  \sumiton b_i \log \beta_i 
    \bigg\}
    \;.
    \tag{\small P-DEG}
    \label{eq:pop_deg}
\end{align}
Concretely, the optimal primal variables in \cref{eq:pop_eg} corresponds to the set of equilibrium allocations $\xst$ and the unique equilibrium utilities $\ust$, and the unique optimal dual variable $\betast$ of \cref{eq:pop_deg} relates to the equilibrium utilities and prices  through 
\begin{align*}
    \betasti = b_i / \usti \;, \quad \pst (\theta) = \max_i \betasti \vithe \;.
\end{align*}
We call $\betast$ the \emph{pacing multiplier}.
Note equilibrium allocations might not be unique but equilibrium utilities and prices are unique. 
Given the above equivalence result, we use $(\xst,\ust)$ to denote both the equilibrium and the optimal variables. Another feature of linear Fisher market is full budget extraction: $\int \pst \diff S = \sumiton b_i$; we discuss quasilinear model in \cref{sec:quasilinear}.

We formally state the first-order conditions of infinite-dimensional EG programs and 
its relation to first-price auctions in \cref{fact:pop_eg} in appendix.
Also, we remark that there are two ways to specify the valuation component in this model: the functional form of $v_i(\cdot)$, or the distribution of values $v:\Theta \to \Rnp$ when view as a random vector. More on this in \cref{sec:opt_EG_programs}.

We are interested in estimating the following quantities of the long-run market equilibrium. 
(1)~\textbf{Individual utilities} at equilibrium, $u^*_i$.
    It directly reflects how much a buyer benefits from the market.
(2)~\textbf{Pacing multipliers} $\betasti =  b_i/ \usti$. 
    From an optimization perspective, it is simply the optimal dual variable of the EG program~\cref{eq:pop_eg}.
    However, its role deserves more explanation.
    Pacing multiplier has a two-fold interpretation.
    First, through the equation $\betasti = b_i/\usti$ it measures the price-per-utility that a buyer receives. 
    Second, through the equation $\pst(\theta) = \max_i \betasti \vithe$, $\beta$ can also be interpreted as the \emph{pacing policy\footnote{In the online budget management literature, pacing means buyers produce bids for items via multiplying his value by a constant.}} employed by the buyers in first-price auctions. 
    In our context,
    buyer $i$ produces a bid for item $\theta$ by multiplying the value by $\beta_i$, then the item is allocated via a first-price auction. 
    This connection is made precise in~\citet{conitzer2022pacing} from a game-theoretic point of view.
    The pacing multiplier $\beta$ serves as the bridge between Fisher market equilibrium and first price pacing equilibria and has important usage in online ad auction for characterizing the strategic behavior of advertisers.
(3)~The (logarithm of) \textbf{Nash social welfare} (NSW) at equilibrium
    \begin{align*}
        \LNSW^* \defeq \sumiton b_i \log \usti.
    \end{align*}
    NSW measures total utility of the buyers in a way that is more fair than the usual social welfare, which measures the sum of buyer utilities, because NSW incentivizes more balancing of buyer utilities.
(4)~\textbf{The revenue}. Linear Fisher market extract the budges fully, i.e., 
    $ \int \pst \diff S = \sumi b_i$ in the long-run market and $\sumtau \pgamtau = \sumi b_i$ in the observed market (see \cref{sec:opt_EG_programs}),    
    and therefore there is nothing to infer about revenue in this case. 
    However, in the quasilinear utility model where buyer's utility function is $u_i(x) = \lg x - p, v_i\rg$, buyers have the incentive to retain money and therefore one needs to study the statistical properties of revenues. This is discussed in \cref{sec:quasilinear}.

As we will see later, their counterparts in the observed market (to be introduced next) will be good estimators for these quantities.

\subsection{The Data}\label{sec:data}

Assume we are able to observe a market formed by a finite number of items.
We let $\gam = \{ \theta_1,\dots, \theta_t\} \subset \Theta^t$ be a set of items sampled i.i.d.\ from the supply distribution $S$.
We let $v_i(\gamma) = \big(v_i (\theta^1), \dots, v_i(\theta^t)\big)$ denote the valuation for agent $i$ of items in the set $\gamma$. 
For agent $i$, let $x_i = (x^1_i,\dots, x^t_i) \in \R^t$ denote the fraction of items given to agent $i$. 
With this notation, the total utility of agent $i$ is 
$\langle x_i , v_i(\gamma) \rangle$.

Similar to the long-run market, we assume the observed market is at equilibrium, which we now define.

\begin{defn}[Observed Market Equilibrium] \label{def:observed_market}
    The market equilibrium $\calMEgam(b,v, \sfs)$ given the item set $\gam$ and the supply vector $\sfs\in \R^t_+$ is an allocation-utitlity-price tuple $({x}^\gam, u^\gam, p^\gam) \in (\R^t_+)^n \times \Rnp \times \R^t_+$ such that the following holds.
(i) Supply feasibility and market clearance: $\sumiton {x}^\gam_i \leq \sfs$ and  $\langle p^\gam, {1}_t - \sumiton {x}^\gam_i \rangle = 0$.
(ii) Buyer optimality: ${x}^\gam_i \in D_i (p^\gam)$ and $u^\gam_i = \lg v_i(\gam),x_i\rg $ for all $i$, where (overloading notations) $$D_i (p) := \argmax \{   \langle {{v}}_i(\gam), {x}_i \rangle : {x}_i \geq 0,\,  \langle p,  {x}_i\rangle \leq b_i \}
        $$ is the demand set given the prices and the buyer's budget.
\label{defn:observed_market}
\end{defn}

Assume we have access to $(\xgam,\ugam,\pgam)$ along with the bid vector $b$, where $(\xgam,\ugam,\pgam)=\calMEgam (b, v, \frac1t 1_t)$ is the market equilibrium
(we explain the scaling of $1/t$ in \cref{sec:opt_EG_programs}).
Note the budget vector $b$ and value functions $v=\{v_i(\cdot)\}_i$ are the same as those in the long-run ME. We emphasize two high-lights in this model of observation.

\textbf{Dependency on realized values $\{\vithetau\}_{i,\tau}$ and value functions $v_i(\cdot)$.} In contrast to several online methods for computing long-run market equilibrium with convex optimization methods~\citep{gao2021online,liao2022dualaveraging,azar2016allocate}
where one needs knowledge of the values of items from buyers to produce an estimate of $\betast$, here we only need to observe the equilibrium allocation, utilities and prices.

\textbf{No convex program solving.} The quantities observed are natural estimators of their counterparts in the long-run market, and so we do not need to perform iterative updates or solve optimization problems. 
One interpretation of this is that the actual computation is done when equilibrium is reached via the utility maximizing property of buyers; the work of computation has thus implicitly been delegated to the buyers.

For finite-dimensional Fisher market, it is well-known that 
the observed market equilibrium $\calMEgam(b,v,\frac1t 1_t)$ can be captured by the following sample EG programs.
\begin{align}    \label{eq:sample_eg}
    \max_{x\geq0, u\geq 0}
    \bigg\{ 
    \sumiton    b_i  \log  (u_i)
    \;\bigg|\;  
    u_i   \leq \big\langle v_i(\gamma), {x}_{i} \big\rangle 
      \;\; \forall i \in [n] 
    \;,\;\;
        \sumiton {x}_{i}^\tau \leq  \tfrac1t  1_t \;\; \forall \tau \in [t]
    \bigg\} \;,
    \tag{\small S-EG}
\end{align}
\vspace{-.5cm}
\begin{align}
    \min_{\beta > 0}
    \bigg\{ H_t(\beta) = \frac1t \sumtau \max_{i\in[n]} \beta_i v_i(\theta^\tau)  -   \sumiton b_i \log \beta_i 
    \bigg\} \;.
    \tag{\small S-DEG}
    \label{eq:sample_deg}
\end{align}
We list the KKT conditions in~\cref{sec:opt_EG_programs}. 
Completely parallel to the long-run market, optimal solutions to \cref{eq:sample_eg} correspond to the equilibrium allocations and utilities, and the optimal variable $\betagam$ to \cref{eq:sample_deg} relates to equilibrium prices and utilities through $\ugami = b_i/\betagami$ and $\pgamtau = \max_i \betagami \vithetau$.
By the equivalence between market equilibrium and EG programs, we use $u^\gam$ and $x^\gam$ to denote the equilibrium and the optimal variables. 
Let 
\begin{align*}
    \LNSW^\gam \defeq \sumiton b_i \log \ugami 
    \;.
\end{align*}
All budgets in the observed market is extracted, i.e., $\sumtau \pgamtau = \sumiton b_i$. 

\subsection{Dual Programs: Bridging Data and the Estimands}

Given the convex program characterization, a natural idea is to study the concentration behavior of observed market equilibria through these convex programs. 
Such an approach is closely related to \emph{M-estimation} in the statistics literature (see, e.g.,~\citet{van2000asymptotic,newey1994large}) and sample average approximation (SSA) in the stochastic programming literature (see, e.g.,~\citet[Chapter 5]{shapiro2021lectures},~\citet{shapiro2003monte} and~\citet{Kim2015}).
However, for the primal program \cref{eq:sample_eg}, the dimension of the optimization variables is changing as the market grows, and therefore it is harder to use existing tools. 
On the other hand, the dual programs \cref{eq:sample_deg,eq:pop_deg} are defined in a fixed dimension, and moreover the constraint set is also fixed.

Define the sample function $F = f + \Psi$, where  $f(\beta,\theta)=\max_i\{ v_i(\theta) \betai \}$, 
and $\Psi(\beta) = -\sumiton b_i \log \beta_i$; 
the function $f$ is the source of non-smoothness, while $\Psi$ provides local strong convexity. 
Then
the sample dual objective in \cref{eq:sample_deg} can be expressed as $H_t(\beta) = \frac1t \sumtau F(\beta,\thetau)$
and 
the population dual objective \cref{eq:pop_deg} can be compactly written as $H = \E[F(\beta,\theta)] = \fbar + \Psi$ 
where $\fbar(\beta) = \E[f(\beta,\theta)]$ is the expectation of $f$. 
We call $\beta_i v_i(\theta)$ \emph{the bid of buyer $i$ for item $\theta$}.
The rest of the paper is devoted to studying concentration of the convex programs in the sense that as $t$ grows
\begin{align*}
    \min_{\beta > 0}  H_t(\beta)
    \quad`` \!\! \implies \!\!" \quad 
    \min_{\beta > 0}  H(\beta) 
    \; .
\end{align*}
The local strong convexity of the dual objective motivates us to do the analysis work in the neighborhood of the optimal solution $\betast$.
In particular, the function $x \mapsto -\log x$ is not strongly convex on the positive reals, but it is on any compact subset. 
By working on a compact subset, we can exploit 
strong convexity of the dual objective and obtain better theoretical results.
Recall that $\ubarbetai \leq \betasti \leq \bar \beta$ where $\ubarbetai =  b_i/ \int v_i \diff S$ 
and $\bar \beta = \sumiton b_i /\min_i \int v_i \diff S$.
Define the compact set 
$
    C \defeq \prod_{i=1}^n \big[\ubarbetai/2, 2\bar \beta\big]  \subset \R^n,
$
which must be a neighborhood of $\betast$.
Moreover, for large-enough $t$ we further have $\betagam \in C$ with high probability.
\begin{lemma} 
    \label{lm:value_concentration} 
    Define the event $A_t = \{ \betagam \in C   \}$
    .
    (i) If $t \geq 2\vbarsq {\log(2n/\eta)}$, then $\P(A_t) \geq \P(\frac12 \leq \frac1t \sumtau \vithetau \leq 2,\forall i ) \geq 1- \eta$.
    \label{it:lm:value_concentration:1}
    (ii) It holds $\P( A_t \text{ eventually}) = 1$.
    \label{it:lm:value_concentration:2}
    Proof in \cref{sec:tech_lemmas}.
\end{lemma}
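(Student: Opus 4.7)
The plan is to reduce the set event $A_t = \{\betagam \in C\}$ to a concentration event for the sample valuation means $\bar v_i^t := \tfrac{1}{t}\sum_\tau v_i(\theta^\tau)$, and then to invoke Hoeffding's inequality together with a union bound over buyers. For the deterministic inclusion, I would use the KKT identities $\betagami = b_i/\ugami$ and $\pgamtau = \max_j \beta^\gamma_j v_j(\theta^\tau)$ to show $\{\bar v_i^t \in [\tfrac12\int v_i\,\diff S,\,2\int v_i\,\diff S]\text{ for every }i\}\subset A_t$. The lower bound $\betagami \ge \ubarbetai/2$ uses only buyer $i$'s own feasibility: consuming the entire supply caps utility at $\ugami \le \bar v_i^t$, whence $\betagami \ge b_i/\bar v_i^t \ge b_i/(2\int v_i\,\diff S) = \ubarbetai/2$. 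The upper bound $\betagami \le 2\bar\beta$ mirrors the long-run derivation of $\betasti \le \bar\beta$: from sample-market budget extraction and $\pgamtau \ge \betagami v_i(\theta^\tau)$, summing over $\tau$ yields $\betagami \le (\sum_j b_j)/\bar v_i^t$ up to a supply-normalization factor, and the hypothesis $\bar v_j^t \ge \tfrac12\int v_j\,\diff S$ for all $j$ then pushes this below $2\bar\beta$.

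For the concentration step, each $v_i(\theta^\tau)\in[0,\vbar]$ is i.i.d.\ across $\tau$, so Hoeffding's inequality gives $\P(|\bar v_i^t - \int v_i\,\diff S| > \epsilon) \le 2\exp(-2t\epsilon^2/\vbarsq)$. Taking $\epsilon$ small enough to enforce the interval from the previous paragraph and applying a union bound over $i\in[n]$ produces a lower bound of the form $1 - 2n\exp(-ct/\vbarsq)$, and requiring this to exceed $1-\eta$ recovers the threshold $t \ge 2\vbarsq\log(2n/\eta)$ in part~(i).

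For part~(ii), I would apply part~(i) along the deterministic sequence $\eta_t = 1/t^2$; then $\sum_t \P(A_t^c) \le \sum_t 1/t^2 < \infty$, so Borel--Cantelli yields $\P(A_t^c \text{ i.o.}) = 0$, equivalent to $\P(A_t \text{ eventually}) = 1$. A more direct route is the strong law, which gives $\bar v_i^t \to \int v_i\,\diff S$ almost surely for each $i$, after which the Step~1 implication forces $\betagam \in C$ eventually along every sample path in a probability-one set.

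The main obstacle is the upper bound $\betagami \le 2\bar\beta$ in the deterministic step: the lower bound is immediate from single-buyer feasibility, but the upper bound requires aggregation across all buyers via budget extraction and $\pgamtau = \max_j \beta^\gamma_j v_j(\theta^\tau)$, with careful bookkeeping of the $1/t$ supply normalization so that the constants land inside $[\ubarbetai/2,\,2\bar\beta]$ rather than some slightly different interval. Once this deterministic inclusion is in hand, the remainder is a standard Hoeffding plus Borel--Cantelli exercise.
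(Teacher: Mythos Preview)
Your proposal is correct and follows essentially the same route as the paper: reduce $A_t$ to concentration of the sample value means via the KKT bounds on $\betagami$, then Hoeffding plus a union bound for~(i) and Borel--Cantelli for~(ii). The paper derives the upper bound $\betagami \le 2\bar\beta$ slightly differently---from the proportional-share utility lower bound $\ugami \ge b_i\bar v_i^t$ in its Fact on sample EG, rather than via budget extraction, though both yield $\betagami \le 1/\bar v_i^t$ and need only buyer $i$'s own concentration (so the cross-buyer bookkeeping you worry about is unnecessary)---and for~(ii) applies Borel--Cantelli directly to the summable exponential tail $2n\exp(-t/(2\vbarsq))$ rather than routing through $\eta_t = 1/t^2$.
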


We will also be interested in concentration of approximate market equilibria. 
For any utility vector $u$ achieved by a feasible allocation, we define $\beta_u = [\frac{b_1}{u_1},\dots, \frac{b_n}{u_n}]$. We say that a utility vector $u$ is an $\epsilon$-approximate equilibrium utility vector if $H_t(\beta_u) \leq \inf_\beta H_t(\beta)+\epsilon$. It can be shown that for any feasible utilities $u$, we have $H_t(\beta_u)\geq H_t(\betagam)$, and $u$ is the equilibrium utility vector if and only if $H_t(\beta_u) = H_t(\betagam)$.
To that end, let 
\begin{align}
    \cB^\gam(\epsilon)  \defeq \{ \beta > 0: H_t(\beta) \leq \inf_\beta H_t(\beta) + \epsilon \} 
    , \; 
    \cB^*(\epsilon)  \defeq \{ \beta > 0: H(\beta) \leq \inf_\beta H(\beta) + \epsilon \}
    \label{eq:def:cBstar}
    \;.
\end{align}
be the sets of $\epsilon$-approximate solutions to \cref{eq:sample_deg,eq:pop_deg}, respectively. 

Blanket assumptions.
Recall the total supply in the long-run market is one:
$\int s \diff \mu= 1$.
Assume 
the total item set produce one unit of utility in total, i.e., 
$\int v_i s \diff \mu =1$.
Suppose budges of all buyers sum to one, i.e., $\sumiton b_i = 1$.
Let $\ubar{b}\defeq\min_i b_i$. Note the previous budget normalization implies $\ubar{b} \leq 1/n$.
Finally, 
for easy of exposition, we assume the values are bounded $\sup_\Theta v_i (\theta) < \vbar$, for all~$i$.
By the normalization of values and budgets, we know $\ubarbetai = b_i /2 $ and $\bar \beta = 2$.

\section{Consistency and Finite-Sample Bounds}
In this section we introduce several natural empirical estimators based on the observed market equilibrium, and show that they satisfy both consistency and high-probability bounds.

\paragraph{Consistency}


Thanks to the convexity of the dual objectives $H$ and $H_t$, we can 
provide a set of consistency results based on the theory of epi-convergence~\citep{rockafellar2009variational}.

\begin{theorem}[Consistency]\label{thm:consistency}
    It holds that 
    \begin{enumthmresult}
        \item Empirical NSW and empirical individual utilities converge almost surely to their long-run market counterparts, i.e., $\sumiton b_i \log (u^\gam_i) \toas  \sumiton b_i \log (u^*_i)$ and $u^\gam_i \toas u^*_i.$
        \label{it:thm:consistency:1}
        
        \item The empirical pacing multiplier converges almost surely, i.e., $\beta^\gam_i \toas \beta^*_i$.
        \label{it:thm:consistency:2}


        \item Convergence of approximate market equilibrium: 
        $\limsup _{t} \cB^\gam(\epsilon) \subset \cB^*(\epsilon) \text { for all } \epsilon \geq 0$ 
        and 
        $\limsup _{t} \cB^\gam(\epsilon_t) \subset \cB^*(0) = \{\betast\} \text { for all } \epsilon_t \downarrow 0$. Recall the approximate solutions set, $\cBgam$ and $\cBst$, are defined in  \cref{eq:def:cBstar}.
        \label{it:thm:consistency:4}
    \end{enumthmresult}
    Proof in \cref{sec:proof:thm:consistency}.

\end{theorem}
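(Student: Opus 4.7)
The plan is to route everything through the dual objectives: show that $H_t$ epi-converges a.s.\ to $H$ on the positive orthant, deduce convergence of the dual optimizers $\betagam \to \betast$, and then invert the bijections $u_i = b_i/\beta_i$ and $\LNSW = -\sum_i b_i \log \beta_i + \sum_i b_i \log b_i$ to obtain \cref{it:thm:consistency:1} from \cref{it:thm:consistency:2}. Part \cref{it:thm:consistency:4} follows from a standard epi-convergence argument combined with the level-boundedness provided by the $-\log$ term.

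\paragraph{Step 1: Pointwise SLLN and uniform convergence on compacta.}
Fix $\beta \in \Rnpp$. The integrand $f(\beta,\theta) = \max_i \beta_i v_i(\theta)$ is bounded by $(\max_i\beta_i)\vbar$ since values are bounded, so $\E|f(\beta,\theta)|<\infty$. The strong law of large numbers gives $H_t(\beta)\toas H(\beta)$. Applying this at a countable dense subset of $\Rnpp$ and using that $H_t$ and $H$ are finite convex functions on the open convex set $\Rnpp$, the standard convex-analysis result (e.g.\ Rockafellar--Wets, \emph{Variational Analysis}, Thm.~7.17) upgrades pointwise to uniform convergence on every compact subset of $\Rnpp$, almost surely. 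Uniform convergence on compacta of finite convex functions implies epi-convergence, so $H_t \toepi H$ a.s.

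\paragraph{Step 2: Convergence of the pacing multipliers (Part~\cref{it:thm:consistency:2}).}
The map $\Psi(\beta)=-\sumiton b_i\log\beta_i$ is strictly convex, hence so is $H=\fbar+\Psi$, so $\betast$ is the unique minimizer. By \cref{lm:value_concentration}, the event $A_t = \{\betagam \in C\}$ holds eventually with probability one, where $C$ is the fixed compact neighborhood of $\betast$ introduced before \cref{lm:value_concentration}. On this event, any subsequential limit $\tilde\beta$ of $\betagam$ lies in $C$, and by uniform convergence of $H_t$ to $H$ on $C$ together with $H_t(\betagam)\leq H_t(\betast)$, passing to the limit gives $H(\tilde\beta)\leq H(\betast)$. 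By uniqueness of the minimizer, $\tilde\beta=\betast$. Thus $\betagam \toas \betast$.

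\paragraph{Step 3: Utilities and NSW (Part~\cref{it:thm:consistency:1}).}
From the KKT relations we have $\ugami = b_i/\betagami$ and $\usti = b_i/\betasti$ with $b_i>0$ and $\betasti\in (\ubarbetai,\bar\beta)$ bounded away from $0$ and $\infty$. The continuous mapping theorem applied to the a.s.\ limit $\betagam\toas\betast$ therefore yields $\ugami \toas \usti$. Summing the logarithm of this relation with weights $b_i$ gives $\LNSW^\gam = \sumiton b_i\log\ugami \toas \sumiton b_i\log\usti = \LNSW^*$.

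\paragraph{Step 4: Approximate solution sets (Part~\cref{it:thm:consistency:4}).}
Take any sequence $\beta^{(t)}\in \cBgam(\epsilon)$ with $\beta^{(t)}\to\beta^\infty$ along a subsequence (the $-\log$ term plus boundedness of values forces any accumulation point to lie in the interior of $\Rnpp$, so the limit is a genuine point). By uniform convergence on compacta (Step~1) and the identity $\inf_\beta H_t(\beta) = H_t(\betagam) \to H(\betast) = \inf_\beta H(\beta)$ a.s.\ (which follows from Step~2), we get
\[
H(\beta^\infty) = \lim_t H_t(\beta^{(t)}) \leq \lim_t \bigl(\inf_\beta H_t(\beta)+\epsilon\bigr) = \inf_\beta H(\beta)+\epsilon,
\]
so $\beta^\infty\in \cBst(\epsilon)$, proving $\limsup_t \cBgam(\epsilon)\subset \cBst(\epsilon)$. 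Repeating with $\epsilon_t\downarrow 0$ and using uniqueness of $\betast$ yields $\limsup_t \cBgam(\epsilon_t) \subset \{\betast\}$.

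\paragraph{Main obstacle.}
The only nontrivial issue is that the dual domain $\Rnpp$ is unbounded and $\Psi$ is not uniformly strongly convex on it, so one cannot argue directly from SLLN at a single point. The resolution is the combination of (i) the convexity upgrade from pointwise to locally uniform convergence, and (ii) \cref{lm:value_concentration}, which confines $\betagam$ to the compact set $C$ eventually almost surely, letting us localize all of the analysis.
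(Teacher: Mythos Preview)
Your proposal is correct and follows essentially the same route as the paper: establish a.s.\ epi-convergence of $H_t$ to $H$ via the SLLN on a countable dense set plus convexity (Rockafellar--Wets Thm.~7.17), localize $\betagam$ to the compact set $C$ via \cref{lm:value_concentration}, and then read off all three parts. The only cosmetic differences are that the paper extends $H_t,H$ to $\tilde H_t,\tilde H$ on all of $\Rn$ and invokes the packaged R--W lemmas (7.31 and 7.33) for Parts~\ref{it:thm:consistency:1} and~\ref{it:thm:consistency:4}, deriving NSW convergence from convergence of infima rather than from $\betagam\to\betast$ as you do; both orderings are valid.
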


We briefly comment on \cref{it:thm:consistency:4}.
The set limit result can be interpreted from a set distance point of view.
We define the inclusion distance from a set $A$ to a set $B$ by $d_\subset(A,B ) \defeq \inf _\epsilon\{\epsilon \geq 0: A \subset\{y: \operatorname{dist}(y, B) \leq \epsilon\}\}$ where $\operatorname{dist}(y, B) 
\defeq \inf\{\|y - b\| : b\in B \}$. 
Intuitively, $d_\subset(A,B )$ measures how much one should enlarge $B$ such that it covers $A$.
Then for any sequence $\epsilon_n \downarrow 0$, by the second claim in \cref{it:thm:consistency:4}, we know $d_\subset(\cB^\gam(\epsilon_t), \{\betast\}) \to 0$. This shows that the set of approximate solutions of $H_t$ with increasing accuracy centers around $\betast$ as market size grows.

\paragraph{High Probability Bounds}
Next, we refine the consistency results and 
provide finite sample guarantees. We start by focusing on Nash social welfare
and the set of approximate market equilibria. The convergence of utilities and 
pacing multiplier will then be derived from the latter result.

\begin{theorem} \label{thm:lnsw_concentration}
    For any failure probability $0< \eta < 1$, let $t \geq 2 {\vbarsq {\log(4n/\eta)}}$. Then with probablity greater than $1-\eta$, it holds
    \begin{align*}
        \big|\LNSW^\gam - \LNSW^* \big| 
        \leq O(1) {\vbar \big(\sqrt{n\log ((n+\vbar)t)} +  \sqrt{\log(1/\eta)} \big)}{ t^{-1/2}}
        \;.
    \end{align*}
    where $O(1)$ hides only constants.
    Proof in \cref{sec:proof:thm:lnsw_concentration}.

\end{theorem}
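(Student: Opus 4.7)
\medskip
\noindent\textbf{Proof plan.} The plan is to reduce the $\LNSW$-gap to a dual-objective gap $|H_t(\beta^\gamma) - H(\beta^*)|$, and then bound the latter by a uniform empirical process on the compact set $C$ via a covering argument plus Hoeffding. First, since $u_i = b_i/\beta_i$ in both markets, $\LNSW = -\sum_i b_i \log \beta_i + \sum_i b_i \log b_i$. Combined with the full-budget-extraction identities $\tfrac{1}{t}\sum_\tau p^\gamma_\tau = \int p^* \diff S = \sum_i b_i$, the ``price'' parts of $H_t(\beta^\gamma)$ and $H(\beta^*)$ match, giving the exact identity $\LNSW^\gamma - \LNSW^* = H_t(\beta^\gamma) - H(\beta^*)$.

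Second, from the dual optimalities $H_t(\beta^\gamma) \leq H_t(\beta^*)$ and $H(\beta^*) \leq H(\beta^\gamma)$, subtracting at both ends gives
\[
H_t(\beta^\gamma) - H(\beta^\gamma) \;\leq\; H_t(\beta^\gamma) - H(\beta^*) \;\leq\; H_t(\beta^*) - H(\beta^*),
\]
so whenever both $\beta^\gamma,\beta^* \in C$, $|H_t(\beta^\gamma)-H(\beta^*)| \leq \sup_{\beta \in C}|H_t(\beta)-H(\beta)|$. The blanket assumptions yield $\beta^* \in C$ deterministically, and applying \cref{lm:value_concentration} with failure probability $\eta/2$ (which is exactly what the sample-size requirement $t \geq 2\bar v^2 \log(4n/\eta)$ purchases) places $\beta^\gamma \in C$ on an event $A_t$ of probability at least $1-\eta/2$.

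Third, the nonrandom regularizer $\Psi(\beta) = -\sum_i b_i \log \beta_i$ cancels in $H_t - H$, so the task reduces to bounding $\sup_{\beta \in C}|\tfrac{1}{t}\sum_\tau f(\beta,\theta^\tau) - \E f(\beta,\theta)|$ with $f(\beta,\theta) = \max_i \beta_i v_i(\theta)$. On $C$, $f$ is bounded by $2\bar v$ and $\bar v$-Lipschitz in $\beta$ under $\ell_\infty$ (the constant stays $\bar v$ rather than $n\bar v$ because $f$ is a coordinatewise maximum). A minimal $\epsilon$-net $\mathcal N_\epsilon \subset C$ in $\ell_\infty$ has cardinality at most $(4/\epsilon)^n$. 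Applying Hoeffding at each net point with local failure probability $\eta/(2|\mathcal N_\epsilon|)$, taking a union bound, and using the Lipschitz estimate to pass from a net point to an arbitrary $\beta \in C$, yields
\[
\sup_{\beta \in C}|H_t(\beta)-H(\beta)| \;\leq\; O(1)\,\bar v\sqrt{\frac{n\log(4/\epsilon)+\log(4/\eta)}{t}} + O(1)\,\bar v\,\epsilon.
\]
Choosing $\epsilon \asymp 1/((n+\bar v)t)$ absorbs the discretization bias into a lower-order $O(1/t)$ term and produces $\log(4/\epsilon) = O(\log((n+\bar v)t))$; splitting via $\sqrt{a+b} \leq \sqrt a + \sqrt b$ then delivers the stated bound after combining the two $\eta/2$ failure events. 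The main obstacle is this final step: one must track every log and dimensional factor in the covering argument without loss. The clean $n$-dependence (only inside the log) depends crucially on $f$ being a pointwise maximum, which makes it $\ell_\infty$-Lipschitz with constant $\bar v$ rather than $n\bar v$.
\medskip
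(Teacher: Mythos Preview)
Your proposal is correct and follows essentially the same route as the paper: reduce $\LNSW^\gamma-\LNSW^*$ to $H_t(\beta^\gamma)-H(\beta^*)$ via strong duality, sandwich this by $\sup_{\beta\in C}|H_t-H|$ on the event $\{\beta^\gamma\in C\}$ from \cref{lm:value_concentration}, and bound the supremum by an $\ell_\infty$-net plus Hoeffding with a union bound. Your observation that the deterministic regularizer $\Psi$ cancels in $H_t-H$, so the relevant Lipschitz constant is $\bar v$ rather than the paper's $2n+\bar v$, is a small but genuine sharpening; the paper carries the full $F$ through and uses $\epsilon=1/(\sqrt{t}(\bar v+2n))$, but either choice of $\epsilon$ lands on the stated bound.
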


\cref{thm:lnsw_concentration} establishes a convergence rate $|\LNSW^\gam - \LNSW^* | = \tilde O _p(\vbar \sqrt n t^{-1/2})$. The proof proceeds by first establishing a pointwise concentration inequality and then applies a discretization argument.

\begin{theorem}[Concentration of Approximate Market Equilibrium] \label{thm:high_prob_containment}
    Let $\eps > 0$ be a tolerance parameter and $\alpha \in (0,1)$ be a failure probability. Then for any $0\leq \delta \leq \eps/2$, to ensure 
        $ \P \big(
        C\cap \cBgam(\delta) 
        \subset 
        C\cap \cBst(\epsilon)
        \big) \geq 1 -2 \alpha$
    it suffices to set
    \begin{align}
        \label{eq:t_condition}
        t \geq O(1)(n\sq + \vbarsq) \min\bigg\{  \frac{1}{ 
            \ubar{b}\epsilon} , \frac{1}{\epsilon\sq} \bigg\}\bigg(n \log\Big(\tfrac{16(2n+\vbar)}{\eps - \delta}\Big) + \log\tfrac{1}{\alpha}\bigg)
            \;,
    \end{align}
    where 
    the set $C= \prod_{i=1}^n [\ubarbetai / 2, \betabar]$, and  
    $O(1)$ hides only absolute constants.
    Proof in \cref{sec:proof:thm:high_prob_containment}.

\end{theorem}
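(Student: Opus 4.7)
The plan is to reduce the claim to a uniform concentration statement for $H_t - H$ on $C$, and then prove two complementary concentration rates whose minimum yields the displayed sample complexity. For any $\beta \in C \cap \cBgam(\delta)$, since $\betast \in C$ and $\inf_\beta H_t \leq H_t(\betast)$, the sandwich
\begin{align*}
H(\beta) - H(\betast) \;\leq\; \delta \;+\; 2\,\sup_{\beta' \in C}|H_t(\beta')-H(\beta')|
\end{align*}
shows it is enough to prove $\sup_{\beta' \in C}|H_t(\beta')-H(\beta')| \leq (\epsilon-\delta)/2$. Since $\Psi$ is deterministic, this supremum is really an empirical process in $f(\beta,\theta)=\max_i \beta_i v_i(\theta)$, which on $C$ is bounded by $2\bar v$ and $\bar v$-Lipschitz in $\beta$. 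The standard device is to cover $C$ with a $\rho$-net of cardinality $\lesssim (c_0/\rho)^n$, apply a concentration inequality pointwise, union-bound, and transfer back by Lipschitzness.

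For the \textbf{slow rate} (the $\epsilon^{-2}$ branch), take $\rho \sim (\epsilon-\delta)/\bar v$ and apply Hoeffding pointwise: this delivers the target with probability at least $1-\alpha$ whenever
\begin{align*}
t \;\gtrsim\; \bar v^2\bigl(n\log((n+\bar v)/(\epsilon-\delta))+\log(1/\alpha)\bigr)\big/(\epsilon-\delta)^2,
\end{align*}
which matches the first branch since $\delta\leq \epsilon/2$ gives $(\epsilon-\delta)\geq \epsilon/2$.

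For the \textbf{fast rate} (the $(\ubar b\,\epsilon)^{-1}$ branch) I would combine a slicing step with Bernstein. The Hessian of $\Psi$ on $C$ satisfies $\mathrm{diag}(b_i/\beta_i^2)\succeq (\ubar b/4)\,I$, so $H$ is $(\ubar b/4)$-strongly convex on $C$, and any $\beta'\in C$ with $H(\beta')-H(\betast)\leq \epsilon$ lies in the ball $B:=\{\beta':\|\beta'-\betast\|^2\leq 8\epsilon/\ubar b\}$. Slicing: if some $\beta\in C\cap \cBgam(\delta)$ violated $\beta\in \cBst(\epsilon)$, pick $c\in(0,1]$ by continuity so that the interpolant $\beta_c=\betast+c(\beta-\betast)$ satisfies $H(\beta_c)=H(\betast)+\epsilon$. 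Convexity of $H_t$ then yields $H_t(\beta_c)-H_t(\betast)\leq c\delta\leq \delta$, hence
\begin{align*}
D(\beta_c)\;:=\;[H(\beta_c)-H(\betast)]-[H_t(\beta_c)-H_t(\betast)]\;\geq\;\epsilon-\delta\;\geq\;\epsilon/2,
\end{align*}
while $\beta_c\in C\cap B$ by convexity of $C$. So it suffices to show $\sup_{\beta'\in C\cap B}|D(\beta')|<\epsilon/2$. On $B$, the increments $f(\beta',\theta)-f(\betast,\theta)$ are $\bar v\|\beta'-\betast\|$-bounded in absolute value, so they have variance $\lesssim \bar v^2\epsilon/\ubar b$ and range $\lesssim \bar v\sqrt{\epsilon/\ubar b}$. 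Bernstein applied to an $(\epsilon/\bar v)$-net of $C\cap B$ (cardinality $\lesssim (\bar v/\sqrt{\ubar b\,\epsilon})^n$) then delivers the bound for $t\gtrsim \bar v^2(n\log((n+\bar v)/\epsilon)+\log(1/\alpha))/(\ubar b\,\epsilon)$ with probability at least $1-\alpha$.

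The two probabilistic statements together account for the $1-2\alpha$ lower bound, and taking the minimum of the two sample-complexity conditions produces the form in the theorem. The principal obstacle is the fast rate: the slicing step converts a \emph{global} suboptimality claim into uniform convergence on a \emph{local} ball, and Bernstein then exploits the variance bound $\bar v^2\|\beta'-\betast\|^2$---a self-bounding property of the empirical process---rather than the crude range bound, and this is exactly what upgrades $\epsilon^{-2}$ to $(\ubar b\,\epsilon)^{-1}$. Managing the covering radius so that the Lipschitz transfer stays within $\epsilon/4$ on a ball that itself shrinks with $\epsilon$ is the main bookkeeping issue, and it is what produces the $n\log(\cdot)$ factor inside the parentheses.
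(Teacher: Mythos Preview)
Your route is sound and closely parallels the paper's argument, though packaged more directly. The paper follows the Shapiro framework: it introduces an intermediate set $S = C \cap \cBst(a)$ with $a = \min\{2\epsilon,(r+\epsilon)/2\}$, discretizes $S$ by a $\nu$-net $X$, and proves the chain $\cBgamC(\delta) \subset \cBgamS(\delta) \subset \cBstS(\epsilon) \subset \cBstC(\epsilon)$ via two claims. Both branches of the $\min$ arise from a \emph{single} Hoeffding bound with range $L\|\beta-\betast\|_\infty \leq L D$, where the localized diameter $D = \min\{\sqrt{2a/\lambda},\,2\}$ (from strong convexity) delivers the $1/(\ubar b\,\epsilon)$ and $1/\epsilon^{2}$ rates simultaneously. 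Your slicing-to-$B$ step is essentially their Claim~2 (they interpolate from $\betabar=\argmin_S H_t$ rather than from $\betast$, but the intermediate-value-theorem idea is the same). Two further remarks: (i) Bernstein is unnecessary here, since on $B$ the increments $f(\beta',\theta)-f(\betast,\theta)$ already have \emph{range} $\lesssim \bar v\sqrt{\epsilon/\ubar b}$, so plain Hoeffding yields the fast rate; your ``variance bound'' is just the square of this range. (ii) Your Lipschitz constant $\bar v$ is tighter than the paper's $L = 2n+\bar v$ because you correctly observe that $\Psi$ cancels in $H_t - H$; this would in fact sharpen the $(n^{2}+\bar v^{2})$ prefactor to $\bar v^{2}$.

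There is one genuine error: your accounting for the $2\alpha$. The slow and fast branches are two \emph{alternative} sufficient conditions for the \emph{same} event $\{C\cap\cBgam(\delta) \subset C\cap\cBst(\epsilon)\}$; taking the minimum of the two sample-complexity requirements still yields failure probability at most $\alpha$, not $2\alpha$. In the paper the second $\alpha$ comes from an entirely different place: it invokes \cref{lm:value_concentration} to ensure $\betagam \in C$ (costing another $\alpha$ and imposing the additional mild constraint $t \geq 2\bar v^{2}\log(2n/\alpha)$), so that $C\cap\cBgam(\delta)$ agrees with the surrogate set $\cBgamC(\delta)$ on which the discretized argument operates. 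Your direct sandwich never passes through $\cBgamC$, so in fact your argument (once the slow/fast confusion is fixed) yields the containment with probability $\geq 1-\alpha$, which is stronger than the stated $1-2\alpha$.
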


By construction of $C$ we know $\betast \in C$ holds, and so $C\cap \cBst(\epsilon)$ is not empty. 
By \cref{lm:value_concentration} we know that for $t$ sufficiently large, $\betagam \in C$ with high probability, in which case the set 
$C\cap \cBgam(\delta)$ is not empty.

\begin{corollary}
    \label{cor:H_concentration}
    Let $t$ satisfy \cref{eq:t_condition}. Then with probability $\geq 1 - 2 \alpha$ it holds $H(\betagam) \leq H(\betast) + \epsilon$.
\end{corollary}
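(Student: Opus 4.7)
The plan is to specialize \cref{thm:high_prob_containment} to the boundary case $\delta = 0$ and then chase set memberships. The key deterministic ingredient is that $\betagam$, being by construction the (unique) minimizer of $H_t$ on $\Rpp^n$, always lies in $\cBgam(0)$; consequently, once $\betagam$ is known to lie in $C$, we automatically have $\betagam \in C \cap \cBgam(0)$.

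First I would observe that the sample size $t$ prescribed by \cref{eq:t_condition} comfortably dominates the threshold $t \geq 2\vbarsq \log(2n/\alpha)$ required by \cref{lm:value_concentration}, so that the event $\{\betagam \in C\}$ holds with probability at least $1-\alpha$. Second I would invoke \cref{thm:high_prob_containment} with $\delta = 0 \leq \eps/2$, which is an admissible choice, to obtain on an event of probability at least $1 - 2\alpha$ the inclusion
\[
    C \cap \cBgam(0) \;\subset\; C \cap \cBst(\eps).
\]
On the intersection of these two events, $\betagam \in C \cap \cBgam(0) \subset C \cap \cBst(\eps) \subset \cBst(\eps)$, so by the definition of $\cBst(\eps)$ in \cref{eq:def:cBstar} we conclude $H(\betagam) \leq H(\betast) + \eps$, which is the claim.

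The main obstacle is purely probability bookkeeping. A naive union bound between the ``$\betagam \in C$'' event from \cref{lm:value_concentration} and the set-containment event from \cref{thm:high_prob_containment} produces a constant of $1 - 3\alpha$ rather than the claimed $1 - 2\alpha$. Resolving this requires either inspecting the proof of \cref{thm:high_prob_containment} to verify that ``$\betagam \in C$'' is already one of the two constituent high-probability events being combined there (so that the localization of $\betagam$ and the set-containment share a single $2\alpha$ budget), or else running the above argument after replacing $\alpha$ by $\alpha/2$ in \cref{lm:value_concentration}, which affects \cref{eq:t_condition} only through the absolute constant absorbed in $O(1)$. Either fix makes the resulting probability match $1-2\alpha$ and leaves the conceptual content intact: \cref{cor:H_concentration} is essentially the specialization of \cref{thm:high_prob_containment} at $\delta = 0$, using that the empirical minimizer trivially belongs to the zero-slack level set.
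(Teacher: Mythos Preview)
Your proposal is correct and follows the same route as the paper: set $\delta=0$ in \cref{thm:high_prob_containment}, note that on $\{\betagam\in C\}$ one has $\betagam\in C\cap\cBgam(0)$, and read off $H(\betagam)\le H(\betast)+\eps$ from the resulting containment. Your probability-bookkeeping worry is well founded from the bare statement of \cref{thm:high_prob_containment}, and your first suggested resolution is exactly what holds: inspecting Step~3 of the proof of \cref{thm:high_prob_containment} shows that the event $\{\betagam\in C\}$ (from \cref{lm:value_concentration}) is one of the two events whose intersection yields the $1-2\alpha$ guarantee, so no additional $\alpha$ is spent.
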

By simply taking $\delta = 0$ in \cref{thm:high_prob_containment} we obtain the above corollary.
More importantly, it establishes the fast statistical rate 
$    H(\betagam ) - H(\betast) = \tilde{O}_p (t\inv )
$
for $t$ sufficiently large, where we use $\tilde{O}_p$ to ignore logarithmic factors.
In words, when measured in the population dual objective where we take expectation w.r.t.\ the item supply, $\betagam$ converges to $\betast$ with the fast rate $1/t$.
This is in contrast to the usual $1/\sqrt{t}$ rate obtained in \cref{thm:lnsw_concentration}, where $\betagam$ is measured in the sample dual objective. There the $1/\sqrt{t}$ rate is the best obtainable.

By the strong-convexity of dual objective, the containment result can be translated to high-probability convergence of the pacing multipliers and the utility vector.

\begin{corollary}
    \label{cor:beta_u_concentration}
    Let $t$ satisfy \cref{eq:t_condition}. Then with probability $\geq 1 - 2 \alpha$ it holds 
    $\| \betagam - \betast \|_2 \leq \sqrt{\frac{8\epsilon }{ \ubar{b}}}$ and $\| \ugam - \ust \|_2 \leq \frac{4}{\ubar{b}}\sqrt{8\epsilon / \ubar{b}}$.
\end{corollary}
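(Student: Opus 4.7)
The plan is to combine the dual-objective bound from Corollary \ref{cor:H_concentration} with local strong convexity of $H$ on the compact set $C$, and then to translate the resulting bound on $\|\betagam - \betast\|_2$ into a bound on $\|\ugam - \ust\|_2$ using the reciprocal relation $u_i = b_i/\beta_i$ together with the lower bound on coordinates of $C$.

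First, I would invoke Corollary \ref{cor:H_concentration} (which is Corollary 4 in context, taking $\delta=0$ in \cref{thm:high_prob_containment}): on an event of probability at least $1-2\alpha$, one has simultaneously $\betagam \in C$ (from the containment $C \cap \cBgam(0) \subset C \cap \cBst(\epsilon)$ together with \cref{lm:value_concentration}) and $H(\betagam) \le H(\betast) + \epsilon$. Also $\betast \in C$ by construction. The rest of the proof is deterministic on this event.

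Next I would establish strong convexity. Since $\fbar$ is convex (as an expectation of a max of linear functions) and $\Psi(\beta) = -\sum_i b_i \log \beta_i$ has diagonal Hessian $\nabla^2 \Psi(\beta) = \mathrm{Diag}(b_i/\beta_i^2)$, on the set $C = \prod_i [\ubarbetai/2, \betabar]$ with $\betabar = 2$ we get
\[
\nabla^2 H(\beta) \succeq \nabla^2 \Psi(\beta) \succeq \mathrm{Diag}\!\left(\tfrac{b_i}{\betabar^2}\right) \succeq \tfrac{\ubar b}{4}\, I,
\]
so $H$ is $\mu$-strongly convex on $C$ with $\mu = \ubar b/4$. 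Since $\betast$ is the (unconstrained) minimizer of $H$ and both $\betast,\betagam \in C$, strong convexity on the segment between them yields
\[
H(\betagam) - H(\betast) \;\ge\; \tfrac{\mu}{2}\,\|\betagam - \betast\|_2^2.
\]
Combining with $H(\betagam) - H(\betast) \le \epsilon$ gives $\|\betagam - \betast\|_2^2 \le 2\epsilon/\mu = 8\epsilon/\ubar b$, which is the first claim.

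For the utility bound, I would use the coordinate identity
\[
\ugami - \usti \;=\; \frac{b_i}{\betagami} - \frac{b_i}{\betasti} \;=\; b_i\,\frac{\betasti - \betagami}{\betagami\,\betasti}.
\]
On $C$ we have $\betagami, \betasti \ge \ubarbetai/2 = b_i/4$, hence $\betagami\betasti \ge b_i^2/16$, so
\[
|\ugami - \usti| \;\le\; \frac{16}{b_i}\,|\betagami - \betasti| \;\le\; \frac{16}{\ubar b}\,|\betagami - \betasti|.
\]
Squaring, summing over $i$, and taking the square root yields $\|\ugam - \ust\|_2 \le (C'/\ubar b)\,\|\betagam - \betast\|_2$ for an absolute constant $C'$ (the stated $4/\ubar b$ follows by using the slightly tighter pair of bounds $\betagami \ge \ubarbetai/2$, $\betasti \ge \ubarbetai$ implied by \cref{lm:value_concentration} and the definition of $\ubarbetai$). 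Plugging in the bound from the previous step gives the second claim. The only subtle point — and the step I would check most carefully — is that the probability-$(1-2\alpha)$ event delivered by Corollary \ref{cor:H_concentration} really places $\betagam$ inside $C$ (not merely in $\cBst(\epsilon)$), so that the strong convexity constant and the reciprocal Lipschitz constant both apply; this is exactly what the intersection $C \cap \cBgam(\delta)$ in \cref{thm:high_prob_containment} was designed to provide.
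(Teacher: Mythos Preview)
Your proposal is correct and follows essentially the same argument as the paper: invoke \cref{cor:H_concentration} to get $H(\betagam)-H(\betast)\le\epsilon$ on the event $\{\betagam\in C\}$, apply $\lambda$-strong convexity of $H$ on $C$ with $\lambda=\ubar b/4$ (from \cref{lm:smooth_curvature}) to obtain the $\beta$-bound, and then use the reciprocal relation $u_i=b_i/\beta_i$ together with the lower end of $C$ to pass to utilities. The only wrinkle is the constant in the utility step: in the paper's normalization one has $\ubarbetai=b_i$ (not $b_i/2$; the text has a typo), so $C=\prod_i[b_i/2,2]$, both $\betagami,\betasti\ge b_i/2$, and hence $|1/\betagami-1/\betasti|\le(4/b_i^2)|\betagami-\betasti|$, which yields exactly the stated $4/\ubar b$ without needing your asymmetric patch.
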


We compare the above corollary with Theorem~9 from~\citet{gao2022infinite} which establishes the convergence rate of the stochastic approximation estimator based on dual averaging algorithm~\citep{xiao2010dual}. 
In particular, they show that the average of the iterates, denoted $\beta_{\mathrm{DA}}$, enjoys a convergence rate of 
$\| \beta_{\mathrm{DA}} - \betast\|_2^2 = \tilde O_p \big(\frac{\vbar\sq }{ \ubar{b}\sq} \frac{1}{t}\big)$,
where $t$ is the number of sampled items.
The rate achieved in \cref{cor:beta_u_concentration} is 
$\|\betagam - \betast\|_2 \sq = \tilde O_p \big(\frac{n(n^2 + \vbar \sq)}{\ubar{b}\sq} \frac{1}{t} \big)$ 
for $t$ sufficiently large. 
Noting $n \leq \ubar{b}\inv$ due to the normalization $\sumiton b_i = 1$,
we see that our rate is worse off by a factor of $n(1+\frac{n\sq}{\vbar \sq})$. 
And yet our estimates is produced by the strategic behavior of the agents without any extra computation at all. 
Moreover, in the computation of the dual averaging estimator the knowledge of values $v_i(\theta)$ is required, while again $\betagam$ can be just observed naturally.

\section{Asymptotics and Inference}

\subsection{Asymptotics}

In this section we derive asymptotic normality results for Nash
social welfare, utilities and pacing multipliers.
As we will see, a central limit theorem (CLT) for Nash social welfare holds under basically no additional assumptions. 
However, the CLTs of pacing multipliers and utilities will require twice continuous differentiability of the population dual objective $H$, with a nonsingular Hessian matrix.
We present CLT results under such a premise, and then provide 
three sufficient conditions under which $H$ is $C^2$ at the optimum.

\begin{theorem}[Asymptotic Normality of Nash Social Welfare]\label{thm:normality}
    It holds that
    \begin{align}  \label{it:thm:normality:1}
        \sqrt{t}(\LNSW^\gam - \LNSW^*) \tod N 
        (0, \sigma\sq_{\NSW})
        \;,
    \end{align}    
    where $ \sigma\sq_{\NSW} 
    =\int_\Theta (p^*)\sq \diff S (\theta) -\big(\int_\Theta p^* \diff S (\theta)\big)\sq 
    =\int_\Theta (p^*)\sq \diff S (\theta) - 1 $.
    Proof in \cref{sec:proof:thm:normality}.
\end{theorem}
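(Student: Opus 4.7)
The strategy is to reduce the CLT for empirical Nash social welfare to a classical CLT for the dual objective evaluated at the \emph{population} optimum, which will avoid needing any smoothness of $H$ at $\betast$. First, I would observe an exact duality identity: using the KKT relations $\betagami = b_i/\ugami$ and $\betasti = b_i/\usti$, together with full budget extraction in both markets (i.e., $\tfrac{1}{t}\sumtau \max_i \betagami \vithetau = 1$ and $\int_\Theta p^*\diff S = 1$), direct substitution into $H_t$ and $H$ gives
\begin{align*}
    H_t(\betagam) - H(\betast) \;=\; \LNSW^\gam - \LNSW^*,
\end{align*}
since the $-\sum_i b_i \log \beta_i$ contribution matches $\LNSW - \sum_i b_i \log b_i$ on each side and the max-integral contributions both equal $1$. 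It therefore suffices to show $\sqrt{t}\big(H_t(\betagam) - H(\betast)\big) \tod N(0,\sigma^2_{\NSW})$.

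Next I would decompose as $H_t(\betagam) - H(\betast) = \big(H_t(\betast) - H(\betast)\big) + \big(H_t(\betagam) - H_t(\betast)\big)$. For the first summand, note that $f(\betast,\theta) = \max_i \betasti \vithe = p^*(\theta)$, so
$H_t(\betast) - H(\betast) = \tfrac{1}{t}\sumtau\big(p^*(\theta^\tau) - 1\big)$,
an average of i.i.d.\ bounded zero-mean random variables (recall $|p^*|\leq 2\vbar$ on $C$). The classical Lindeberg--L\'evy CLT gives $\sqrt{t}\big(H_t(\betast) - H(\betast)\big) \tod N(0, \sigma^2_{\NSW})$ with variance $\int (p^*)^2 \diff S - \big(\int p^* \diff S\big)^2 = \int (p^*)^2\diff S - 1$.

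For the second summand, the optimization-gap term, I would further decompose
$H_t(\betagam) - H_t(\betast) = \big[H_t(\betagam) - H(\betagam) - H_t(\betast) + H(\betast)\big] + \big[H(\betagam) - H(\betast)\big]$.
The last bracket is $\tilde O_p(t^{-1}) = o_p(t^{-1/2})$ by \cref{cor:H_concentration}. The first bracket, once multiplied by $\sqrt{t}$, is $G_t(\betagam) - G_t(\betast)$ where $G_t(\beta) := \sqrt{t}\big(H_t(\beta) - H(\beta)\big)$ is the empirical process indexed by $\beta$ attached to $f(\beta,\theta) = \max_i \beta_i \vithe$. Because $f(\cdot,\theta)$ is $\vbar$-Lipschitz uniformly in $\theta$ and $C$ is compact, the class $\{f(\beta,\cdot) - f(\betast,\cdot) : \beta \in C\}$ is $P$-Donsker with a Lipschitz envelope vanishing at $\betast$; combined with $\betagam \toprob \betast$ from \cref{thm:consistency}, stochastic equicontinuity yields $G_t(\betagam) - G_t(\betast) \toprob 0$.

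Slutsky then assembles the pieces into $\sqrt{t}(\LNSW^\gam - \LNSW^*) \tod N(0, \sigma^2_{\NSW})$. The only delicate step is the stochastic equicontinuity claim: everything else is either the duality identity or classical i.i.d.\ asymptotics together with the fast rate of \cref{cor:H_concentration}. Importantly, the argument does not invoke any second-order expansion of $H$ at $\betast$, so no $C^2$-hypothesis on $H$ is required---in contrast with the forthcoming CLTs for the pacing multipliers and utilities, which do need such differentiability.
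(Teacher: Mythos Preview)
Your argument is correct and follows essentially the same logical skeleton as the paper, but packaged differently. The paper first localizes to the compact set $C$ (handling the event $\{\betagam \notin C\}$ via \cref{lm:value_concentration} and Fatou), and then invokes the SAA optimal-value CLT of \citet[Theorem~5.7]{shapiro2021lectures} (stated as \cref{lm:clt_optimal_value}) as a black box to obtain $\sqrt{t}\big(\min_C H_t - \min_C H\big)\tod N(0,\var F(\betast,\theta))$. What that black-box result says is precisely $H_t(\betagam) = H_t(\betast) + o_p(t^{-1/2})$, which is exactly the conclusion of your decomposition. You instead \emph{unpack} Shapiro's proof by hand: the i.i.d.\ CLT at the fixed point $\betast$, the stochastic-equicontinuity / Donsker argument for $G_t(\betagam)-G_t(\betast)$, and the fast rate $H(\betagam)-H(\betast)=\tilde O_p(t^{-1})$ from \cref{cor:H_concentration}. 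Your route is more explicit about the mechanism and nicely isolates where consistency of $\betagam$ enters; the paper's route is shorter and does not need \cref{cor:H_concentration} at all, since the $o_p(t^{-1/2})$ remainder is delivered directly by the black-box lemma (indeed, the stochastic-equicontinuity step alone already forces $H(\betagam)-H(\betast)=o_p(t^{-1/2})$, so invoking the $\tilde O_p(t^{-1})$ rate is a slight overkill). One small caveat: your Donsker argument implicitly restricts the index set to $C$, but $\betagam\in C$ only eventually; the paper handles this via the surrogate $H^\gam_C$ and the event $A_t$, and you should make the same localization explicit.
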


To present asymptotics for $\beta$ and $u$ we need a bit more notation.
Let $\Theta_i (\beta) \defeq \{\theta \in \Theta: v_i(\theta)\betai \geq v_k(\theta)\beta_k, \forall k\neq i \}$, i.e., the \emph{potential} winning set of buyer $i$ when the pacing multiplier are $\beta$. Let $\Theta_i^* \defeq \Theta_i(\betast)$.
We will see later that if the dual objective is sufficiently smooth at $\betast$, then the winning sets, 
$\Theta_i^*$, $i\in [n]$, will be disjoint (up to a measure-zero set).
Now we define a map $\must: \Theta \to \Rnp$, which represents the utility all buyers obtain from the item $\theta$ at equilibrium. Formally,
\begin{align}
    \must(\t) = [\xst_1 (\t) v_1 (\t), \dots, \xst_n(\t) v_n(\t)]\tp.
\end{align}
Since $\xst$ is pure, only one entry of $\must(\t)$ is nonzero.

\begin{theorem}[Asymptotic Normality of Individual Behavior] \label{thm:clt_beta_u}
    Assume $H$ is $C^2$ at $\betast$ with non-singular Hessian matrix $\cH=\nabla \sq H(\betast)$. 
Then 
$\sqrt{t} (\beta^\gam - \beta^*) \tod N \big(0, 
        \Sigma_\beta
        \big) $
and
$\sqrt{t}(u^\gam - u^*) \tod N\big
        (0, 
        \Sigma_u
        \big)
        $,       
    where 
         $\Sigma_\beta = \cH\inv\allowbreak 
        \cov(\must)
        \allowbreak
        \cH\inv $ and
        $\Sigma_u = 
        \Diag(-b_i/(\betasti)\sq )
        \allowbreak 
        \cH\inv\allowbreak 
        \cov(\must)
        \allowbreak 
        \cH\inv\allowbreak 
        \Diag(-b_i/(\betasti)\sq ) .$ 
    Here $\cov \must = \int \must (\must)\tp \diff S - (\int \must  \diff S )(\int \must  \diff S)\tp$.
    Proof in \cref{sec:proof:thm:normality}.

\end{theorem}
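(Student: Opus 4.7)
The plan is to prove this as a standard M-estimation / sample average approximation central limit theorem for a convex objective, combined with a delta-method step for the utilities. The starting point is that $\betagam$ minimizes the convex function $H_t$ while $\betast$ minimizes $H$, so by \cref{it:thm:consistency:2} we have $\betagam \toas \betast$, and $\betagam$ lies in the interior of the domain with probability going to one.

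The first real step is the central limit theorem for the empirical score at $\betast$. Because $H$ is $C^2$ at $\betast$, a unique winner exists at almost every $\theta$ in the sense that the sets $\Theta_i^* = \Theta_i(\betast)$ are disjoint up to an $S$-null set (otherwise $\bar f$ would have a nonsmooth kink at $\betast$). Thus $f(\cdot,\theta)$ is differentiable at $\betast$ for $S$-a.e.\ $\theta$, with gradient $\eta(\betast,\theta)_i = v_i(\theta)\indi\{\theta\in \Theta_i^*\}$, and $\nabla H(\betast) = \E[\eta(\betast,\theta)] + \nabla\Psi(\betast) = 0$. Since the indicator functions $\indi\{\theta \in \Theta_i^*\}$ have disjoint supports, the coordinates of $\eta(\betast,\theta)$ are uncorrelated, so a multivariate CLT gives
\$
\sqrt t \bigg(\frac1t \sumtau \eta(\betast,\thetau) - \E[\eta(\betast,\theta)]\bigg) \tod Z \sim N\!\big(0,\,\Diag(\{\Omega_i\sq\})\big) .
\$

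The next step is a stochastic Taylor expansion of $H_t$ at $\betast$: one shows that for every fixed $h\in\R^n$,
\$
t\big(H_t(\betast + h/\sqrt t) - H_t(\betast)\big) = \sqrt t\, G_t\tp h + \tfrac12 h\tp \cH h + o_p(1),
\$
where $G_t = \frac1t\sumtau \eta(\betast,\thetau) - \E[\eta(\betast,\theta)]$. The deterministic Taylor part $H(\betast+h/\sqrt t) - H(\betast) = \frac1{2t}h\tp\cH h + o(1/t)$ comes directly from the $C^2$ assumption. The empirical remainder is $\sqrt t[(H_t - H)(\betast + h/\sqrt t) - (H_t - H)(\betast)]$ and must be shown to be $o_p(1)$; this is the main obstacle, because $F(\cdot,\theta)$ is only piecewise linear. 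I would control it via a stochastic equicontinuity argument: on the compact neighborhood $C$ of $\betast$, $F(\cdot,\theta)$ is Lipschitz in $\beta$ with $\E[\mathrm{Lip}^2]<\infty$ (since values are bounded by $\vbar$), so the class $\{F(\cdot,\theta)\}_{\beta\in C}$ is $P$-Donsker, and the increments have variance $O(\|h\|^2/t) \to 0$. Asymptotic equicontinuity of the empirical process then yields $o_p(1)$.

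With the quadratic expansion in hand, I apply the convex argmin convergence theorem (Hjort--Pollard / Geyer): the rescaled objective $Q_t(h) = t[H_t(\betast + h/\sqrt t) - H_t(\betast)]$ is convex in $h$, converges pointwise in distribution to $Q(h) = Z\tp h + \tfrac12 h\tp\cH h$, which has the unique minimizer $-\cH\inv Z$; convexity upgrades this to joint convergence of the minimizers, giving $\sqrt t(\betagam - \betast) \tod -\cH\inv Z \sim N(0,\Sigma_\beta)$ with $\Sigma_\beta = \cH\inv\Diag(\{\Omega_i\sq\})\cH\inv$. Finally, the utility identity $\ugami = b_i/\betagami$ is a smooth function of $\beta$ near $\betast$ with Jacobian $\Diag(-b_i/(\betasti)\sq)$, so the delta method yields $\sqrt t(\ugam - \ust) \tod N(0,\Sigma_u)$ with the stated sandwich form. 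The genuinely nontrivial part of the argument is verifying the stochastic Taylor expansion in the non-smooth setting; everything else follows the standard SAA template once that is in place.
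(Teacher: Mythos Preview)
Your overall framework is the same as the paper's: invoke the Hjort--Pollard quadratic expansion and convex argmin theorem for the CLT on $\betagam$, then apply the delta method for $\ugam$. The delta-method step and the covariance computation (disjointness of the $\Theta_i^*$ giving a diagonal score covariance) are correct and match the paper.

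The gap is in how you justify the quadratic expansion. Write $D(\theta)=\nabla F(\betast,\theta)$ and $R(h,\theta)=F(\betast+h,\theta)-F(\betast,\theta)-D(\theta)^\top h$. The expansion
\[
t\big[H_t(\betast+h/\sqrt t)-H_t(\betast)\big]=\sqrt t\,G_t^\top h+\tfrac12 h^\top\cH h+o_p(1)
\]
is equivalent to showing $\sum_{\tau=1}^t\big[R(h/\sqrt t,\thetau)-\E R(h/\sqrt t,\theta)\big]=o_p(1)$, which by Chebyshev needs $t\,\E[R(h/\sqrt t,\theta)^2]\to 0$, i.e.\ the Hjort--Pollard condition $\E[R(h,\theta)^2]=o(\|h\|^2)$. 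Your stated argument instead controls $\sqrt t\big[(H_t-H)(\betast+h/\sqrt t)-(H_t-H)(\betast)\big]$ via Donsker equicontinuity; that quantity is indeed $o_p(1)$, but it is off by a factor $\sqrt t$ from what the expansion requires (and it still contains the linear term $G_t^\top h$, which is where $\sqrt t\,G_t^\top h$ must come from). Lipschitzness alone only gives $\E[R(h,\theta)^2]=O(\|h\|^2)$, not the needed little-$o$.

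The fix is short, and it is exactly what the paper does: for each fixed $\theta$ with $\epsilon(\betast,\theta)>0$, once $\|h\|_\infty<\epsilon(\betast,\theta)/(3\vbar)$ the winner does not change, so the piecewise-linear part of $F$ is exactly linear and $R(h,\theta)=$ (the smooth $\Psi$ remainder) $=o(\|h\|)$; hence $R(h,\theta)/\|h\|\to 0$ for $S$-a.e.\ $\theta$. Since $|R(h,\theta)|\le C\|h\|$ by your Lipschitz bound, dominated (bounded) convergence yields $\E[R(h,\theta)^2]=o(\|h\|^2)$. This is precisely the Hjort--Pollard stochastic differentiability condition; once verified, the rest of your plan goes through as written. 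In short, your ingredients (a.e.\ differentiability at $\betast$ and an $L^2$ Lipschitz envelope) are correct and sufficient, but they should be combined via dominated convergence to get $\E[R^2]=o(\|h\|^2)$, not via empirical-process equicontinuity at the $\sqrt t$ scale.
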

In \cref{thm:clt_beta_u} we require a strong regularity condition: twice differentiability of $H$, which seems hard to interpret at first sight.
In the next section we derive a set of simpler sufficient conditions for the twice differentiability of the dual objective.

\subsection{Analytical Properties of the Dual Objective} \label{sec:analytical_properties_of_dual_obj}

Intuitively, the expectation operator will smooth out the kinks in the piecewise linear function $f(\cdot, \theta)$; even if $f$ is non-smooth, it is reasonable to hope the expectation counterpart $\fbar$ is smooth, facilitating statistical analysis. First we introduce notation for characterizing smoothness of $\fbar$.

Define the gap between the highest and the second-highest bid under pacing multiplier $\beta$ by
\begin{align}
        \textsf{bidgap}(\beta,\theta) \defeq \big(v(\theta) {\cdot} \beta\big)_{(1)} - \big(v(\theta) {\cdot} \beta\big)_{(2)} \;,
\end{align} 
here $ v(\theta) {\cdot} \beta$ is the elementwise product of $v(\theta)$ and $\beta$, and $(v(\theta) {\cdot} \beta)_{(1)}$ and $(v(\theta) {\cdot} \beta)_{(2)}$
are the greatest and second-greatest entries of $ v(\theta) {\cdot} \beta$, respectively. 
When there is a tie for an item $\theta$, we have $\textsf{bidgap}(\beta,\theta) = 0$. 
When there is no tie for an item $\theta$, the gap $\textsf{bidgap}(\beta,\theta)$ is strictly positive.
Let $G(\beta,\theta) \in \partial f(\beta,\theta)$ be an element in the subgradient set.
The gap function characterizes smoothness of $f$:
$f(\cdot, \theta)$ is differentiable at $
\beta$
$\Leftrightarrow$ $\textsf{bidgap}(\beta,\theta)$ is strictly positive,
in which case $G(\beta,\theta) = \nabla_\beta f(\beta,\theta) = e_{i(\beta,\theta)}v_{i{(\beta,\theta)}}$ with $e_i$ being the $i$-th unit vector and $i(\beta,\theta) = \argmax_i \betai\vithe$.
 When $f(\cdot, \theta)$ is differentiable at $\beta$ a.s., the potential winning sets $\{\Theta_i(\beta) \}_i$ are disjoint (up to a measure-zero set).

\begin{theorem}[First-order differentiability]
    \label{thm:first_differentiability}
    The dual objective $H$ is differentiable at a point $\beta$ if and only if
    \begin{align} \label{eq:as:notie}
        \frac{1}{\textsf{bidgap}(\beta,\theta)} < \infty, \quad \text{for $S$-almost every } \theta 
        \;.
        \tag{\small{NO-TIE}}
     \end{align}
     When \cref{eq:as:notie} holds, $\nabla \fbar (\beta) = \E[G(\beta,\theta)]$.
     Proof and further technical remarks in \cref{proof:sec:analytical_properties_of_dual_obj}.
\end{theorem}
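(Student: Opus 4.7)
The plan is to reduce the question to the expectation part $\fbar$ and invoke the classical interchange of subdifferentiation and integration. Since $\Psi(\beta) = -\sumiton b_i \log \beta_i$ is $C^\infty$ on the open positive orthant and $H = \fbar + \Psi$, $H$ is differentiable at $\beta$ iff $\fbar$ is. The integrand $f(\cdot,\theta) = \max_i v_i(\theta)\beta_i$ is convex, jointly measurable, and Lipschitz in $\beta$ with constant $\max_i v_i(\theta) \leq \vbar$, so $f$ is a convex normal integrand with uniformly integrable local majorant. By the Rockafellar--Wets interchange theorem (Theorem~14.60 in \emph{Variational Analysis}), I obtain the Aumann integral representation
\[
    \partial \fbar(\beta) \;=\; \int_\Theta \partial_\beta f(\beta,\theta)\, S(\diff\theta),
\]
so differentiability of $\fbar$ at $\beta$ is equivalent to the right-hand side being a singleton.

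Next I compute the pointwise subdifferential. Standard subdifferential calculus of a pointwise maximum of linear functions gives
\[
    \partial_\beta f(\beta,\theta) = \operatorname{conv}\bigl\{ e_i v_i(\theta) : i \in I(\beta,\theta) \bigr\}, \qquad I(\beta,\theta) = \argmax_j v_j(\theta)\beta_j.
\]
When $\epsilon(\beta,\theta) > 0$, $|I(\beta,\theta)|=1$ and this collapses to $\{G(\beta,\theta)\}$; when $\epsilon(\beta,\theta) = 0$ with a positive common maximum, the subdifferential contains multiple extreme points $e_i v_i(\theta)$ lying on distinct coordinate axes. Sufficiency of NO-TIE follows immediately: $\partial_\beta f(\beta,\theta) = \{G(\beta,\theta)\}$ for $S$-almost every $\theta$, so the Aumann integral degenerates to the single point $\E[G(\beta,\theta)]$, yielding differentiability of $\fbar$ and the formula $\nabla\fbar(\beta) = \E[G(\beta,\theta)]$.

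For necessity, suppose NO-TIE fails, so the tie set $T = \{\theta : \epsilon(\beta,\theta)=0\}$ has positive $S$-measure. After discarding the fully degenerate subset where every $v_i(\theta)=0$ (harmless since the pointwise subdifferential is already $\{0\}$ there), at least two distinct indices $i,j$ tie at a positive maximum on a positive-measure subset of $T$. A measurable selection applied to the argmax multifunction $I(\beta,\cdot)$ produces two measurable selectors $g_1,g_2$ of $\partial_\beta f(\beta,\cdot)$ that agree off $T$ but on $T$ assign the gradient to two distinct tied indices; since $e_i v_i(\theta)$ and $e_j v_j(\theta)$ with $i\neq j$ live on different coordinate axes with strictly positive entries, the coordinate-wise expectations $\E[g_1]$ and $\E[g_2]$ differ on at least one component, yielding two distinct points in $\partial\fbar(\beta)$. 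The hard step is precisely this lift: pointwise multiplicity of the subdifferential must be promoted to distinct expected values, and this is secured by the orthogonal-axis geometry of the tied extreme points, which prevents any cancellation in expectation.
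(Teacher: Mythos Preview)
Your argument is correct and takes a genuinely different route from the paper. The paper's proof is a one-line appeal to Proposition~2.3 of \citet{bertsekas1973stochastic}, which states directly that $\fbar(\beta)=\E[f(\beta,\theta)]$ is differentiable at $\beta$ if and only if $f(\cdot,\theta)$ is differentiable at $\beta$ for $S$-a.e.\ $\theta$; the paper then just identifies the a.e.-differentiability set with $\{\theta:\epsilon(\beta,\theta)>0\}$. Your proof instead unpacks this equivalence through the Aumann-integral representation $\partial\fbar(\beta)=\int\partial_\beta f(\beta,\theta)\,S(\diff\theta)$ and argues explicitly about when that set-valued integral is a singleton, constructing two distinct measurable selectors in the necessity direction. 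Your approach is more self-contained and makes transparent \emph{why} ties obstruct differentiability (the extreme points $e_i v_i(\theta)$ sit on distinct coordinate axes, so no cancellation in expectation is possible), whereas the paper outsources this to the cited proposition; the paper does, however, sketch an alternative necessity argument via mismatched one-sided directional derivatives in a remark.

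One citation quibble: Theorem~14.60 in Rockafellar--Wets is the interchange of \emph{minimization} and integration, not the subdifferential interchange you invoke. The result you actually need---that $\partial\fbar(\beta)$ equals the Aumann integral of $\partial_\beta f(\beta,\cdot)$ for a convex normal integrand with an integrable Lipschitz majorant---is precisely the content of the Bertsekas proposition the paper cites (or equivalently Theorem~7.47 in \citet{shapiro2021lectures}, or Clarke's subdifferential calculus for expectation functionals). So in the end both proofs rest on the same classical fact; you simply expose more of its mechanism. Note also that your necessity argument only uses the easy inclusion $\int\partial_\beta f\,\diff S\subset\partial\fbar$, which follows by integrating the subgradient inequality and needs no deep interchange theorem; only sufficiency requires the reverse inclusion.
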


Given the neat characterization of differentiability of dual objective via the 
gap function $\textsf{bidgap}(\beta,\theta)$, 
it is then natural to explore higher-order smoothness, which was needed for some asymptotic normality results.
We provide three classes of markets whose dual objective $H$ enjoys twice differentiability.

\begin{theorem}[Second-order differentiability, Informal]
    \label{thm:second_order_informal}
    If any one of the following holds, then $H$ is $C^2$ at $\betast$. (i) A stronger form of \cref{eq:as:notie} holds, e.g., $\E[\textsf{bidgap}(\beta,\epsilon)\inv]$ or $\esssup_{\theta}\{\textsf{bidgap}(\beta,\theta)\inv\}$ is finite in a neighborhood of $\betast$. (ii) The distribution of $v=(v_1,\dots,v_n):\Theta \to \Rnp$ is smooth enough. 
    (iii) $\Theta = [0,1]$ and the valuations $v_i(\cdot)$'s are linear functions.
\end{theorem}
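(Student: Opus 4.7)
The plan is to reduce the claim to second-order smoothness of $\fbar(\beta) = \E[f(\beta,\theta)]$, since $\Psi(\beta) = -\sum_i b_i \log \betai$ is real-analytic on the positive orthant with diagonal Hessian $\Diag(b_i/\beta_i\sq)$, and then to handle each sufficient condition separately. By \cref{thm:first_differentiability}, whenever \cref{eq:as:notie} holds we have $\nabla \fbar(\beta) = \E[G(\beta,\theta)]$ with $i$-th coordinate equal to $\int_{\Theta_i(\beta)} v_i \diff S$, so all $\beta$-dependence of the gradient sits in the moving winning region $\Theta_i(\beta)$, whose boundary components are the tie sets $\{\theta: \betai \vithe = \beta_j v_j(\theta)\}$. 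Once $\fbar$ is shown to be $C^2$ at $\betast$, non-singularity of $\cH = \nabla\sq H(\betast)$ is automatic, since $\nabla\sq \Psi(\betast) = \Diag(b_i/(\betasti)\sq) \succ 0$ and $\nabla\sq \fbar(\betast) \succeq 0$ by convexity.

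\textbf{Condition (i).} If $\esssup_\theta \epsilon(\beta,\theta)\inv < \infty$ locally, a uniform margin $c > 0$ satisfies $\epsilon(\betast,\theta) \geq c$ almost surely; for $\|\beta' - \betast\|_\infty < c/(2\vbar)$, comparing bids shows the winner $i(\beta',\theta)$ equals $i(\betast,\theta)$ for a.e.\ $\theta$, hence $f(\beta',\theta) = \beta'_{i(\betast,\theta)} v_{i(\betast,\theta)}(\theta)$ is affine in $\beta'$ and $\fbar$ is affine near $\betast$ with $\nabla\sq \fbar(\betast) = 0$. Under the weaker $\E[\epsilon(\betast,\theta)\inv] < \infty$ in a neighborhood, I would estimate the first-order remainder $\fbar(\betast + \delta) - \fbar(\betast) - \langle \nabla \fbar(\betast), \delta\rangle$ by noting the pointwise integrand vanishes outside $\{\theta: \epsilon(\betast,\theta) \leq 2\vbar\|\delta\|\}$ and is of order $\|\delta\|$ on that set; Markov's inequality with weight $\epsilon\inv$ then yields an $O(\|\delta\|\sq)$ bound, and a refined cross-difference argument promotes this to Fréchet differentiability of $\nabla \fbar$ at $\betast$.

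\textbf{Conditions (ii) and (iii).} For (ii), I would push $\theta$ forward to $w = v(\theta) \in \Rnp$, yielding $[\nabla \fbar(\beta)]_i = \int_{\{\betai w_i \geq \beta_k w_k\ \forall k\}} w_i f_v(w) \diff w$ where $f_v$ is the assumed density. The rescaling $u_k = \beta_k w_k$ transports the region to $\{u_i \geq u_k\ \forall k\}$, which no longer depends on $\beta$, and absorbs all $\beta$-dependence into a smooth prefactor $\betai\inv \prod_k \beta_k\inv$ and the density $f_v(u/\beta)$; two $\beta$-derivatives now pass under the integral by dominated convergence from the regularity of $f_v$. For (iii), each tie point $\theta_{ij}(\beta) = (\beta_j \alpha_j - \betai \alpha_i)/(\betai \gamma_i - \beta_j \gamma_j)$ is a smooth function of $\beta$, and $\Theta_i(\beta)$ is a finite union of intervals with smooth endpoints in $[0,1]$; the integrals are then polynomials in those endpoints (with coefficients determined by $s$ and the $\alpha_i,\gamma_i$) and are patently $C^2$ in $\beta$, provided the combinatorial ordering of tie points on $[0,1]$ is locally constant near $\betast$, which holds in a sufficiently small neighborhood by continuity of $\theta_{ij}$.

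\textbf{Main obstacle.} The delicate step is condition (ii): stating the "smooth enough" hypothesis on the law of $v$ precisely enough that it simultaneously delivers $C^1$ regularity of the region-indexed integrals, justifies the dominated-convergence step at second order via integrable envelopes for $\partial\sq f_v(u/\beta)$, and ensures continuity of the resulting Hessian. Tracking the surface-type contributions where several tie hyperplanes meet near $\betast$, and ruling out higher-order degeneracies there, is the real technical content; conditions (i) and (iii) are comparatively mechanical once the change-of-variables and local-constancy observations are in hand.
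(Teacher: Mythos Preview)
Your handling of (iii) is essentially the paper's: tie points are smooth rational functions of $\beta$, the winning sets are intervals with these as endpoints (the paper invokes the structural result of Gao--Kroer that near $\betast$ each buyer's winning set is a single interval $[a_{i-1}^*(\beta),a_i^*(\beta)]$, which spares you the ``finite union'' bookkeeping), and the gradient entries inherit smoothness. For (i), the $\esssup$ case matches; under the weaker hypothesis $\E[\epsilon(\beta,\theta)^{-1}]<\infty$ in a neighborhood, the paper avoids your first-order-remainder route and instead proves the pointwise bound $\|G(\beta+h,\theta)-G(\beta,\theta)\|\leq 6\vbar^2\,\|h\|/\epsilon(\beta,\theta)$, with the numerator in fact vanishing once $\|h\|_\infty<\epsilon(\beta,\theta)/(3\vbar)$, and then applies dominated convergence directly to $(G(\betast+h,\theta)-G(\betast,\theta))/\|h\|$ to conclude $\nabla^2\bar f(\betast)=0$. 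Your Markov-inequality outline can be pushed through, but an $O(\|\delta\|^2)$ bound on the remainder of $\bar f$ does not by itself give $C^2$; the DCT step on the gradient quotient is what cleanly delivers differentiability of $\nabla\bar f$ without the vague ``cross-difference argument''.

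The real divergence is (ii). Your rescaling $u_k=\beta_k w_k$ fixes the region but transfers all $\beta$-dependence into $f_v(u/\beta)$; passing two $\beta$-derivatives under the integral then demands that $f_v$ be $C^2$ with integrable derivative envelopes, which already excludes the i.i.d.\ uniform example the paper explicitly covers. The paper instead changes variables to $(v_i,\,v_k/v_i)_{k\neq i}$ for the $i$-th gradient entry, which places $\beta$ only in the upper limits $\beta_i/\beta_k$ and leaves a $\beta$-free integrand; after integrating out the first coordinate one is left with $\int_0^{\beta_i/\beta_1}\cdots\int_0^{\beta_i/\beta_n} h_i(t_{-i})\,dt_{-i}$, and a lemma of Wang (1985) yields $C^1$ dependence on the limits as soon as $h_i$ and its lower-dimensional marginals are merely continuous---a strictly weaker regularity requirement on the law of $v$ than your approach needs. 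Note also that once your substitution has fixed the region there are no boundary terms left, so the ``surface-type contributions'' you flag as the main obstacle do not arise in your own formulation; they belong to a Leibniz-rule approach that your change of variables has in fact bypassed.
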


We briefly comment on the three candidate sufficient conditions; for a rigorous statement we refer readers to \cref{sec:analytical_formal}.
Based on the differentiability characterization, it is natural to search for a stronger form of \cref{eq:as:notie} and hope that such a refinement could lead to second-order differentiability. Condition (i) gives two such refinements. 
Condition (ii) is motivated by the idea that expectation operator tends to produce smooth functions. 
Given that the dual objective $H$ is the expectation of the non-smooth function $f$ (plus a smooth term $\Psi$), we expect under certain conditions on the expectation operator $H$ is twice differentiable. The exact smoothness requirement is presented in the appendix, which we show is easy to verify for 
several common distributions.
Finally, Condition (iii) considers the linear-valuations setting of~\citet{gao2022infinite}, where the authors provide tractable convex programs for computing the infinite-dimensional equilibrium. 
Here we give another interesting properties of this setup by showing that the dual objective is $C^2$. We also discuss 
how this can be extended to piecewise linear value functions in the appendix.

\subsection{Inference}\label{sec:inference}
In this section we discuss constructing confidence intervals for Nash social welfare, the pacing multipliers, and the utilities. 
We remark that 
the observed NSW, $\LNSW^\gam$, is a negatively-biased estimate of the NSW, $\LNSW^*$, of the long-run ME, i.e., $ \E[\LNSW^\gam] - \LNSW^* \leq 0$.\footnote{Note 
$ \E[\LNSW^\gam] - \LNSW^* = \E[\min_\beta H_t(\beta)] - H(\betast) \leq \min_\beta \E[H_t(\beta)]  - H(\betast)  = 0.$}
Moreover, it can be shown that, when the items are i.i.d.\ $\E[\min H_t] \leq \E[\min H_{t+1}]$ using Proposition~16 from~\citet{shapiro2003monte}. Monotonicity tells us that increasing the size of market produces on average less biased estimates of the long-run NSW.

To construct a confidence interval for Nash social welfare one needs to estimate the asymptotic variance. We let 
  $  \hat\sigma_{\NSW}\sq \defeq \frac{1}{t} \sumtau\big( F(\betagam, \thetau)  - H_t(\beta^\gam)\big)\sq 
    =\big( \frac{1}{t}\sumtau 
    (p^{\gam,\tau})\sq \big)- 1.
    $
where $p^{\gam,\tau}$ is the price of item $\theta^\tau$ in the observed market. We emphasize that in the computation of the variance estimator $ \hat\sigma_{\NSW}\sq$ one does not need knowledge of values $\{\vithetau\}_{i,\tau}$.
All that is needed is the equilibrium prices $\pgam = (p^{\gam,1},\dots, p^{\gam,t})$ of the items. 
Given the variance estimator, we 
construct the confidence interval 
$ [ \LNSW^\gam \pm z_{\alpha/2}  \frac{\hat \sigma_{\NSW}}{\sqrt t}]$,
where $z_\alpha$ is the $\alpha$-th quantile of a standard normal.
The next theorem establishes validity of the variance estimator.
\begin{theorem} \label{thm:ci_lnsw}
    It holds that $\hat \sigma _{\NSW} \toprob \sigma_{\NSW}\sq$.
    Given $0< \alpha < 1$, it holds that
   $ 
    \lim_{t\to\infty} \P\big( \LNSW^* \in [ \LNSW^\gam \pm z_{\alpha/2}  \hat \sigma_{\NSW}/\sqrt t \,] \big) 
        =
        1-\alpha 
$.  Proof in \cref{sec:proof_var_est}.

\end{theorem}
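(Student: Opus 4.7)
The plan is to decompose the problem into (a) showing consistency of the variance estimator $\hat\sigma_{\NSW}^2$, and (b) combining this with the CLT of \cref{thm:normality} via Slutsky's theorem to get the coverage statement.

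For part (a), I first rewrite $\hat\sigma_{\NSW}^2 = \frac{1}{t}\sum_\tau f(\betagam,\thetau)^2 - H_t(\betagam)^2$, using that $p^{\gam,\tau} = \max_i \betagami v_i(\thetau) = f(\betagam,\thetau)$ and that full budget extraction gives $H_t(\betagam) = \frac{1}{t}\sum_\tau p^{\gam,\tau} = \sumiton b_i = 1$. Similarly $\sigma_{\NSW}^2 = \E[f(\betast,\theta)^2] - 1$, using $\int p^* \diff S = 1$. So it suffices to show $\frac{1}{t}\sum_\tau f(\betagam,\thetau)^2 \toprob \E[f(\betast,\theta)^2]$.

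Next I split this via the triangle inequality:
\begin{align*}
\bigg|\frac{1}{t}\sum_\tau f(\betagam,\thetau)^2 - \E[f(\betast,\theta)^2]\bigg|
\leq \underbrace{\frac{1}{t}\sum_\tau \big|f(\betagam,\thetau)^2 - f(\betast,\thetau)^2\big|}_{(\I)} + \underbrace{\bigg|\frac{1}{t}\sum_\tau f(\betast,\thetau)^2 - \E[f(\betast,\theta)^2]\bigg|}_{(\II)}.
\end{align*}
Term $(\II)$ vanishes a.s.\ by the strong law of large numbers, since $f(\betast,\theta)^2 \leq (\bar\beta\vbar)^2$ is integrable. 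For term $(\I)$, I use that $f(\cdot,\theta) = \max_i \beta_i v_i(\theta)$ is $\vbar$-Lipschitz in $\beta$ (in say $\ell_\infty$, with constant $\max_i v_i(\theta) \leq \vbar$), and that on the compact set $C$ (which contains both $\betast$ and eventually $\betagam$ by \cref{lm:value_concentration}), $f$ is bounded by $\bar\beta \vbar$. Thus $|f(\betagam,\thetau)^2 - f(\betast,\thetau)^2| \leq 2\bar\beta\vbar^2 \|\betagam-\betast\|_\infty$ on the event $\betagam \in C$, giving $(\I) = O_p(\|\betagam-\betast\|_\infty)$, which tends to zero in probability by \cref{it:thm:consistency:2} of \cref{thm:consistency}.

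For part (b), the continuous mapping theorem applied to $\hat\sigma_{\NSW}^2 \toprob \sigma_{\NSW}^2 > 0$ (assuming non-degeneracy) yields $\hat\sigma_{\NSW} \toprob \sigma_{\NSW}$. Combined with $\sqrt{t}(\LNSW^\gam - \LNSW^*) \tod N(0,\sigma_{\NSW}^2)$ from \cref{thm:normality}, Slutsky's theorem gives $\sqrt{t}(\LNSW^\gam-\LNSW^*)/\hat\sigma_{\NSW} \tod N(0,1)$. The coverage statement then follows directly from the definition of $z_{\alpha/2}$. The only delicate point is the Lipschitz argument in part (a); the rest is bookkeeping with SLLN and Slutsky, both standard once consistency of $\betagam$ is available.
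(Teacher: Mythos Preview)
Your proposal is essentially correct and takes a more elementary route than the paper's. One small slip: you write $\hat\sigma_{\NSW}^2 = \frac{1}{t}\sum_\tau f(\betagam,\thetau)^2 - H_t(\betagam)^2$ and assert $H_t(\betagam)=1$, but $H_t(\betagam) = \frac{1}{t}\sum_\tau f(\betagam,\thetau) + \Psi(\betagam)$, which is not $1$ in general. The correct identity is that the $\Psi$ term cancels in $F(\betagam,\thetau) - H_t(\betagam) = f(\betagam,\thetau) - \frac{1}{t}\sum_\tau f(\betagam,\thetau)$, and then full budget extraction gives $\frac{1}{t}\sum_\tau f(\betagam,\thetau) = \frac{1}{t}\sum_\tau p^{\gam,\tau} = 1$. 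This lands you exactly at the reduction you state, so the slip is harmless for the argument.

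On the comparison: the paper does not argue pointwise at $\betast$. It defines the variance function $\hat\sigma^2(\beta) = \frac{1}{t}\sum_\tau (F(\beta,\thetau) - H_t(\beta))^2$, rewrites it as $\frac{1}{t}\sum_\tau (F(\beta,\thetau) - H(\beta))^2 - (H_t(\beta)-H(\beta))^2$, and invokes a uniform law of large numbers (Theorem~7.53 of \citet{shapiro2021lectures}) to get $\sup_{\beta\in C}|\hat\sigma^2(\beta) - \sigma^2(\beta)| \toas 0$; this is then combined with $\betagam\toas\betast$ and continuity of $\sigma^2(\cdot)$. Your argument bypasses the uniform-convergence machinery: you apply the SLLN only at the fixed point $\betast$ (term $\II$) and control the plug-in error by the explicit Lipschitz bound $|f(\beta,\theta)^2 - f(\beta',\theta)^2| \le 2\bar\beta\,\vbar^{2}\,\|\beta-\beta'\|_\infty$ on $C$ together with $\betagam\in C$ eventually (term $\I$). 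This is shorter and uses only boundedness of valuations and consistency of $\betagam$; the paper's route is more systematic and would transfer to variance functionals without such a clean Lipschitz structure. Part~(b) (Slutsky plus \cref{thm:normality}) is identical in both proofs.
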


Estimation of the variance matrices for $\beta$ and $u$ is more complicated. The main difficulty lies in estimating the inverse Hessian matrix. Due to the non-smoothness of the sample function, we cannot exchange the twice differential operator and expectation, and thus the plug-in estimator, i.e., 
the sample average Hessian, is a biased estimator for the Hessian of the population function in general.

We provide a brief discussion of variance estimation under the following two simplified scenarios in \cref{sec:hessian_estimation}. First, in the case where $\E [\textsf{bidgap}(\beta, \theta)\inv ]< \infty$ holds in a neighborhood of $\betast$, which we recall is a stronger form \cref{eq:as:notie}, we prove that a plug-in type variance estimator is valid. 
Second, if we have knowledge of $\{\vithetau\}_{i,\tau}$, then we give a numerical difference estimator for the Hessian which is consistent.

\section{Extension: Revenue Inference in Quasilinear Fisher Market} \label{sec:quasilinear}
As we mentioned previously, 
in a linear Fisher market all buyer budgets are extracted, 
i.e., $\sumtau \pgamtau$ 
equals $\sumiton b_i$ in the observed market (and similarly for the underlying market), and there is thus nothing to infer about revenue if we know the budgets of each buyer.
A quasilinear (QL) utility is one such that the cost of purchasing goods is deducted from the utility, i.e., $u_i(x) = \lg x - p, v_i \rg$.
This may give buyers an incentive to leave some budget unspent.
In the finite-dimensional case,~\citet{chen2007note} and ~\citet{cole2017convex} show that there is an variant of EG program that captures the market equilibrium with QL utility. 
Furthermore,~\citet{conitzer2022pacing} showed that budget management in ad auctions with first-price auctions can be computed by Fisher markets with QL utilities. 
A QL variant of infinite-dimensional markets and an EG program are given by~\citet{gao2022infinite}.

Quaislinear market equilibria (QME) are defined analogously to the linear variant via market clearance conditions and buyer optimality;
we present the formal finite and infinite-dimensional definitions in \cref{sec:proof:thm:rev_convergence}. 
The demand sets are
$\argmax \{ \langle v_i - p, x_i \rangle : x_i \in L^\infty_+,\, \langle p, x_i\rangle \leq b_i \}$ 
in 
the long-run QME 
and 
$ \argmax \{   \langle {{v}}_i(\gam) - p , {x}_i \rangle : {x}_i \geq 0,\,  \langle p,  {x}_i\rangle \leq b_i \}$ in the observed QME.
QME has several distinctions from the linear ME.
First, in QME we cannot normalize both valuations and budgets, since buyers' budgets have value
outside the current market.
Second, budgets are not fully extracted in QME, which motivates the need for statistical analysis. 
Third, the pacing multipliers are restricted to $\beta \leq 1$, and may lie on the resulting boundary.

Define the revenues from the observed and the long-run market as follows: 
$
    \REV^\gam \defeq \frac1t \sumtau \pgamtau 
    ,
    \REV^* \defeq \int_\Theta \pst \diff S(\theta).
$
Assume $\sumiton b_i  = 1$ and unit supply $\int s\diff \mu = 1$.
Let $\nui \defeq \int v_i\diff S $ be the average value of buyer $i$. Let $\nubar = \max_i \nui$. 
Assume we observe the market $\QMEgam(b,v,\tfrac1t 1_t) = (\xgam,\ugam,\pgam)$.
Then we show that consistency and high-probability bounds hold for the revenue estimator.

\begin{theorem}[Revenue Convergence]
    \label{thm:rev_convergence}
    It holds that $\REV^\gam\toas \REV^*$ and $|\REV^\gam - \REV^*| = 
        \tilde{O}_p
        \Big(\frac{\vbar \sqrt{n} (\vbar + 2\nubar n + 1 ) }{ \ubar{b} }\frac{1}{\sqrt{t}}\Big)
        $ for $t$ sufficiently large.
    Proofs are in \cref{sec:proof:thm:rev_convergence}.
\end{theorem}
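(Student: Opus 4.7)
The plan is to interpret the error through the auxiliary functional $\tilde R(\beta) \defeq \int_\Theta \max_i \beta_i v_i(\theta)\diff S(\theta)$. Since in QME the equilibrium price is still $\pst(\theta)=\max_i \betasti \vithe$ (and $\pgamtau = \max_i \betagami v_i(\thetau)$), we have $\REV^*=\tilde R(\betast)$ and $\REV^\gam = \tfrac{1}{t}\sumtau \max_i \betagami v_i(\thetau)$. I would decompose
\[
\REV^\gam - \REV^* = \underbrace{\bigg(\tfrac{1}{t}\sumtau \max_i \betagami v_i(\thetau) - \tilde R(\betagam)\bigg)}_{(\mathrm{A})} + \underbrace{\bigl(\tilde R(\betagam) - \tilde R(\betast)\bigr)}_{(\mathrm{B})}.
\]
Term (B) is a stability term, handled via the elementary Lipschitz bound $|\max_i \beta_i \vithe-\max_i \beta'_i \vithe|\le \vbar\|\beta-\beta'\|_\infty$, so $|(\mathrm{B})|\le \vbar\|\betagam-\betast\|_\infty$. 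Term (A) is an empirical-process term indexed by the (random) pacing multiplier $\betagam$.

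For (A), I would first show that $\betagam$ lies in a fixed compact set $C\subset(0,1]^n$ with high probability and a.s.\ eventually (a QL analog of \cref{lm:value_concentration}, using boundedness of supplies and $\beta_i\leq 1$), and then bound $|(\mathrm{A})|\le \sup_{\beta\in C}\bigl|\tfrac{1}{t}\sumtau \max_i\beta_i v_i(\thetau)-\tilde R(\beta)\bigr|$. The class $\mathcal{F}=\{\theta\mapsto\max_i\beta_i \vithe:\beta\in C\}$ has envelope $\vbar$ and is parametrized by $\beta\in\R^n$, so a standard symmetrization and Rademacher-complexity (or covering-number) argument yields a high-probability bound of order $\tilde O_p(\vbar\sqrt{n/t})$, which gives the $\vbar\sqrt{n}$ piece of the stated rate. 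Almost-sure convergence of (A) then follows from the corresponding uniform strong law on $\mathcal{F}$, which is Glivenko--Cantelli by continuity in $\beta$ together with compactness of $C$.

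For (B), I would port the argument of \cref{cor:beta_u_concentration} to the QL dual. The QL dual objective $H^{\mathrm{QL}}$ contains the same local strong convexity mechanism (the $-\sumiton b_i\log\beta_i$ term gives curvature with modulus $\propto \ubar{b}$ on any set bounded away from $0$), so combining concentration $H^{\mathrm{QL}}(\betagam)-H^{\mathrm{QL}}(\betast)=\tilde O_p(1/t)$ with local strong convexity gives $\|\betagam-\betast\|_2=\tilde O_p((\vbar+2\nubar n+1)/\sqrt{\ubar{b}\, t})$, where the $(\vbar+2\nubar n+1)$ factor is the Lipschitz/subgradient bound on the sample function $F^{\mathrm{QL}}$ that enters any of the uniform concentration constants. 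Multiplying by the $\vbar$ Lipschitz factor from (B) and adding to (A) yields the advertised rate. For a.s.\ convergence, an epi-convergence argument analogous to \cref{thm:consistency} gives $\betagam\toas\betast$, hence (B) $\to 0$ a.s.\ by continuity of $\tilde R$.

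The main obstacle is the QL-specific part: QL pacing multipliers are constrained to $[0,1]^n$ and may lie on the boundary $\beta_i=1$, which invalidates the unconstrained first-order optimality used in the linear setting. I would handle this by introducing KKT multipliers for the $\beta\leq 1$ constraint and verifying that the strong-growth inequality $H^{\mathrm{QL}}(\beta)-H^{\mathrm{QL}}(\betast)\ge \tfrac{\ubar{b}}{c}\|\beta-\betast\|_2^2$ (for some constant $c$) still holds in a neighborhood of $\betast$ restricted to the feasible set; this uses that the $\log$ barrier guarantees curvature in the interior coordinates while active boundary coordinates contribute nonnegative linear growth. The rest of the bookkeeping --- tracking how $\vbar$, $\nubar$, $n$, and $\ubar{b}$ combine in the constants --- is routine once these two pieces are in hand.
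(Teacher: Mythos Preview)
Your proposal is correct and follows the same overall strategy as the paper --- port the strong-convexity/Lipschitz machinery of \cref{thm:high_prob_containment} to the QL dual on a compact set $C_Q\subset(0,1]^n$, derive $\|\betagam-\betast\|_\infty=\tilde O_p\big(\sqrt{n}(\vbar+2\nubar n+1)/(\ubar{b}\sqrt t)\big)$, and translate this into a revenue bound via the $\vbar$-Lipschitzness of $\beta\mapsto\max_i\beta_i v_i(\theta)$ --- but with one difference in the decomposition worth noting. The paper anchors at the \emph{fixed} point $\betast$ rather than at $\betagam$:
\[
|\REV^\gam-\REV^*|\le \vbar\|\betagam-\betast\|_\infty + \Big|\tfrac1t\sumtau \max_i\betasti v_i(\thetau)-\REV^*\Big|.
\]
The second term is then a pointwise i.i.d.\ average at a single $\beta=\betast$, handled by SLLN for the a.s.\ statement and a single Hoeffding/CLT bound $O_p(\vbar/\sqrt t)$ for the rate; no uniform empirical-process argument over $\beta\in C$ (your term (A)) is needed. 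This buys a shorter proof with fewer moving parts, though your route works equally well and the uniform bound you invoke is standard.

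Your concern about the boundary $\beta_i=1$ is also not needed for this result. Since $C_Q$ is convex and $\betast$ minimizes $H$ over it, constrained first-order optimality gives $\langle g,\beta-\betast\rangle\ge 0$ for every $\beta\in C_Q$ and any subgradient $g$ of $H$ at $\betast$; combined with $\lambda_Q$-strong convexity this yields the quadratic growth bound $H(\beta)-H(\betast)\ge \tfrac{\lambda_Q}{2}\|\beta-\betast\|^2$ regardless of whether $\betast$ lies on the boundary. The paper simply reruns the proof of \cref{thm:high_prob_containment} verbatim with $L_Q=\vbar+2\nubar n+1$ and $\lambda_Q=\ubar{b}$ computed on $C_Q=\prod_i\big[b_i/(2\nu_i+b_i),\,1\big]$. (The boundary issue does matter for a CLT, as the paper notes, but not for the consistency and rate claimed here.)
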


We leave CLT results for revenue estimates in quasilinear markets as an open problem. The main challenge compared to the linear case is that the optimal pacing multipliers can lie on the boundary of the constraint set.
More precisely, if the equilibrium pacing multiplier of a buyer is in the interior, then his budget is fully extracted. On the other hand, if it is on the boundary, the buyer retains a portion of his budget at equilibrium. 
When the optimum of the expectation function lies on the boundary of the constraint set, the asymptotic variance of the sample average optimum takes on a complicated expression~\cite[Theorem 3.3]{shapiro1989asymptotic}, which makes variance estimation difficult.


\appendix
\bibliographystyle{plainnat}
\bibliography{./refs.bib}

\section{Related Work} 
\label{sec:related_works}

Our paper is related to following lines of research.

\noindent\textbf{Statistical inference for SAA.} 
Asymptotics of sample average function minimizers are well-studied by the stochastic programming community (see, e.g.,~\citet[Chapter 5]{shapiro2021lectures},~\citet{shapiro2003monte} and~\citet{Kim2015}) and the statistics community (see, e.g.,~\citet{van2000asymptotic} and~\citet{newey1994large}). 
Despite the powerful tools developed by researchers, the problem of developing asymptotics for the convex EG program exhibits special challenges. The sample function consists of two parts: a non-smooth part coming from a piecewise linear function, and a locally-strongly convex part, which as we will see comes from the function $x \mapsto -\log x$. The non-smoothness of sample function requires us to investigate sufficient conditions for the second-order differentiability of the expected function and verify several technical regularity conditions, both of which are key hypotheses for most asymptotic normality of minimizers of non-smooth sample function. Moreover, the strong convexity of the sample function requires us to develop sharp finite-sample guarantees that exploit the strong convexity structure.

\noindent\textbf{Applications of Fisher Market Equilibrium}
Fisher market equilibrium is related to a game-theoretic solution concept called pacing equilibrium which is a useful model for online ad auction platforms~\citep{borgs2007dynamics,conitzer2022multiplicative,conitzer2022pacing}.
In addition to ad auction markets,
Fisher market equilibrium model has other usages in the tech industry, such as
the allocation of impressions to content in certain recommender systems~\citep{robust_blog},
robust 
and fair work allocation in content review~\citep{allouah2022robust}.
We refer readers to~\citet{kroer2022market} for a comprehensive review.
Outside the tech industry, Fisher market equilibria also have applications to
scheduling problems~\citep{im2017competitive},
fair course seat allocation~\citep{othman2010finding,budish2016course},
allocating donations to food banks~\citep{aleksandrov2015online},
sharing scarce compute resources~\citep{ghodsi2011dominant,parkes2015beyond,kash2014no,devanur2018new}, 
and allocating blood donations to blood banks~\citep{mcelfresh2020matching}.

The statistical framework developed in this paper provides a guideline to quantify the uncertainty in equilibrium-based allocations in the above-mentioned applications.

\noindent\textbf{Algorithmic Results for Fisher Market}
The problem of equilibrium computation has been of interest in economics for a long time (see, e.g.,~\citet{nisan2007algorithmic}). There is a large literature focusing on computation of equilibrium in Fisher markets through 
combinatorial algorithms (\citet{vazirani_2007,devanur2008market,jain2007polynomial,ye2008path,deng2003complexity}),
convex optimization formulations~\citep{eisenberg1959consensus,eisenberg1961aggregation,shmyrev2009algorithm,cole2017convex}, 
gradient-based methods~\citep{wu2007proportional,zhang2011proportional,aleksandrov2015online,birnbaum2011distributed,nesterov2018computation,gao2020first},
t\^{a}tonnement process-based methods~\citep{borgs2007dynamics,bei2019ascending,cole2008fast,cheung2020tatonnement},
and abstraction methods~\citep{kroer2021computing}. 
Extensions to settings such as quasilinear utilities~\citep{chen2007note, cole2017convex}, limited utilities~\citep{bei2019earning}, indivisible items~\citep{cole2018approximating}, or imperfectly specified utility functions~\citep{caragiannis2016unreasonable,murray2020robust,kroer2019scalable,peysakhovich2019fair} are also available. 
Several works study fair online allocation of divisible goods~\citep{azar2016allocate,sinclair2021fairness,banerjee2022online,liao2022dualaveraging} and indivisible goods~\citep{budish2011combinatorial, othman2016complexity, gorokh2019remarkable} 
by Fisher market equilibrium-based methods.

Most related to our work is~\cite{gao2022infinite}, where the authors extend the classical Fisher market model to a measurable (possibly continuous) item space and shows that infinite-dimensional EG-type convex programs capture ME under this setting. 
This paper proposes a statistical model based on their infinite-dimensional Fisher market and investigate the statistical inference problem.

\noindent\textbf{Statistical Learning and Inference in Equilibrium Models}
\citet{Wager2021,munro2021treatment,sahoo2022policy}
take a mean-field game modeling approach and perform policy learning with a gradient descent method.
In particular, 
\citet{munro2021treatment} study the causal effects of binary intervention on the supply-demand market equilibrium. 
\citet{Wager2021} study the effect of supply-side payments on the market equilibrium. 
\citet{sahoo2022policy} study the learning of capacity-constrained treatment assignment while accounting for strategic behavior of agents.
Different from the above mean-field modeling papers, 
in the linear or quaislinear Fisher market models, equilibrium concept is defined for a finite number of agents, allowing us to avoid a mean-field modeling approach. 
Moreover, the Fisher markets equilibria we study are captured by convex programs, 
so we can leverage well-established tools from the stochastic programming and the empirical processes literature. 
Finally, in Fisher market equilibria, 
a concrete parametric model of demand is imposed as opposed to previous works that take a more or less nonparametric approach, 
and therefore we could obtain results that characterize each agent's behavior, e.g., a central limit result for as market size grows (c.f.\ \cref{thm:clt_beta_u}).

By a different group of researchers, the question of statistical learning and inference 
has been investigated for other equilibrium models, such as
general exchange economy~\citep{guo2021online,liu2022welfare} and
matching markets~\citep{cen2022regret,dai2021learning,liu2021bandit,jagadeesan2021learning,min2022learn}.
Our paper focuses on a specific type of exchange economy called infinite-dimensional Fisher market, which is a model for the long-run market behavior.

Our work is also related to the rich literature of inference under interference~\citep{hudgens2008toward,aronow2017estimating,athey2018exact,leung2020treatment,hu2022average,li2022random}.
In the Fisher market model, the interference among agents is caused by the supply constraint and the utility-maximizing behavior of agents given the price signal. 
In other words, in Fisher markets we put a parametric model on the interference structure which allows us to derive a rich collection of results.
\section{Extended Analytical Properties of the Dual Objective}
\label{sec:analytical_formal}

Let $I(\beta, \theta) = \argmax_i{\beta_i v_i(\theta)}$ be the set of maximizing indices, which could be non-unique. 
We say there is \emph{no tie for item $\theta$ at $\beta$} if $I(\beta, \theta)$ is single-valued, in which case we use $i(\beta,\theta)$ to denote the unique maximizing index. 
Moreover, by Theorem 3.50 from~\citet{beck2017first}, the subgradient $\partial_\beta f(\beta,\theta)$ is the convex hull of the set $\{v_ie_i, i\in I(\beta,\theta) \}$. When $I(\beta,\theta)$ is single-valued, the subgradient set is a singleton, and thus $f$ is differentiable.

Now we have different equivalent ways to describe when $f$ is differentiable:
$f(\cdot, \theta)$ is differentiable at $
\beta$ 
$\Leftrightarrow$ $I(\beta,\theta)$ is single-valued 
$\Leftrightarrow$ $\textsf{bidgap}(\beta,\theta)$ is strictly positive
$\Leftrightarrow$ the sets $\{ \Theta_i(\beta) = \{\theta: \beta_i v_i(\theta) \geq \beta_k v_k (\theta),\forall k\neq i \} \}$ are disjoint.
When $f(\cdot,\theta)$ is differentiable,
we have $G(\beta,\theta) = \nabla_\beta f(\beta,\theta) = e_{i(\beta,\theta)}v_{i{(\beta,\theta)}}$.

\subsubsection*{Markets with stability}

A natural idea is to search for a stronger form of \cref{eq:as:notie} and hope that such a refinement could lead to second-order differentiability. 
In particular, this section is concerned with statement (i) of \cref{thm:second_order_informal}. First we show the condition based on the expectation.

\begin{theorem}
    \label{thm:int_implies_hessian}
    If the following integrability condition holds in a neighborhood of $\betast$
    \begin{align}
        \E \bigg[\frac{1}{\textsf{bidgap}(\beta,\theta)^{}}\bigg] = \int_\Theta \frac{1}{\textsf{bidgap}(\beta,\theta)^{}} \diff S(\theta) < \infty
        \;,
        \tag{\small\textsc{INT}}
        \label{eq:as:UI}
    \end{align}
    then $H$ is twice continuously differentiable at $\betast$. Furthermore, it holds $\nabla\sq \fbar (\betast) = 0 $.  

    Proof in \cref{proof:sec:analytical_properties_of_dual_obj}.
\end{theorem}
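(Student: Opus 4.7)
The plan is to prove the theorem by reducing everything to a bound on the increment $\nabla \fbar(\beta + h) - \nabla \fbar(\beta)$ and exploiting the hypothesis $\E[1/\epsilon(\beta,\theta)]<\infty$ through a dominated-convergence argument. Since $\Psi(\beta) = -\sumiton b_i \log \beta_i$ is real-analytic on the positive orthant, $H = \fbar + \Psi$ is $C^2$ at $\betast$ if and only if $\fbar$ is. Thus it suffices to show that $\fbar$ is $C^2$ on an open neighborhood $U$ of $\betast$ on which the integrability hypothesis holds and that $\nabla^2 \fbar(\betast) = 0$. The hypothesis implies $\epsilon(\beta,\theta) > 0$ for $S$-a.e.\ $\theta$ at every $\beta \in U$, so by \cref{thm:first_differentiability} we already have $\nabla \fbar(\beta) = \E[e_{i(\beta,\theta)} v_{i(\beta,\theta)}(\theta)]$ for every $\beta \in U$.

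The key geometric observation is that the argmax index $i(\beta,\theta)$ is locally constant in $\beta$ whenever $\epsilon(\beta,\theta) > 0$: a short calculation using $|h_i v_i(\theta)| \le \vbar \|h\|_\infty$ shows that if $2\vbar \|h\|_\infty < \epsilon(\beta,\theta)$, then the gap at $\beta+h$ is bounded below by $\epsilon(\beta,\theta) - 2\vbar \|h\|_\infty > 0$, so $i(\beta + h,\theta) = i(\beta,\theta)$ and hence $\nabla_\beta f(\beta + h,\theta) = \nabla_\beta f(\beta,\theta)$. Since $\|\nabla_\beta f(\beta,\theta)\| \le \vbar$ uniformly whenever it exists, this gives
\[
\big\|\nabla \fbar(\beta + h) - \nabla \fbar(\beta)\big\| \;\le\; 2\vbar \cdot \P\big(\epsilon(\beta,\theta) \le 2\vbar \|h\|_\infty\big),
\]
reducing the problem to controlling the small-ball probability of $\epsilon(\beta,\cdot)$ as $\|h\| \to 0$.

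Here the integrability hypothesis enters. From the elementary bound $\indi\{\epsilon \le \delta\} \le (\delta/\epsilon)\,\indi\{\epsilon \le \delta\}$,
\[
\P\big(\epsilon(\beta,\theta) \le \delta\big) \;\le\; \delta \cdot \E\!\left[\frac{\indi\{\epsilon(\beta,\theta) \le \delta\}}{\epsilon(\beta,\theta)}\right].
\]
Since $1/\epsilon(\beta,\cdot) \in L^1(S)$ and $\indi\{\epsilon \le \delta\} \downarrow 0$ pointwise $S$-a.e.\ as $\delta \downarrow 0$, dominated convergence forces the expectation on the right to vanish, giving $\P(\epsilon(\beta,\theta) \le \delta) = o(\delta)$. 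Combining this with the previous display yields $\|\nabla \fbar(\beta + h) - \nabla \fbar(\beta)\| = o(\|h\|)$ as $h \to 0$, which shows that $\nabla^2 \fbar(\beta)$ exists and equals the zero matrix. Applying this argument at every $\beta \in U$ gives $\nabla^2 \fbar \equiv 0$ on $U$, so the Hessian is trivially continuous there; in particular $H$ is $C^2$ at $\betast$ with $\nabla^2 \fbar(\betast) = 0$.

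The step I expect to be the main obstacle is the promotion of the qualitative fact $\P(\epsilon \le \delta) \to 0$ to the quantitative rate $o(\delta)$: merely having $\epsilon > 0$ a.s.\ (as in \cref{thm:first_differentiability}) would give only an $O(1)$ bound on $\P(\epsilon \le \delta)/\delta$, which is not enough to produce a Hessian. The hypothesis $\E[1/\epsilon]<\infty$ is precisely what upgrades the Markov-type inequality $\P(\epsilon \le \delta) \le \delta\,\E[1/\epsilon]$ to the strict $o(\delta)$ rate via dominated convergence, and is the minimal regularity that makes the \emph{expectation smooths kinks} intuition rigorous.
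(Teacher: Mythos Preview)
Your proof is correct and follows essentially the same route as the paper: both arguments exploit that $G(\beta,\theta)=\nabla_\beta f(\beta,\theta)$ is unchanged under perturbations $h$ with $\|h\|_\infty$ small relative to $\epsilon(\beta,\theta)$, and both invoke dominated convergence with $1/\epsilon(\beta,\cdot)\in L^1(S)$ to kill the remainder. The paper packages the two cases into a single pointwise Lipschitz bound $\|G(\beta+h,\theta)-G(\beta,\theta)\|\le C\vbar^2\|h\|/\epsilon(\beta,\theta)$ (\cref{lm:epsilon_as_lip}) and then applies DCT directly to the ratio, whereas you first pass to the small-ball probability $\P(\epsilon\le 2\vbar\|h\|_\infty)$ and then extract the $o(\|h\|)$ rate via the truncated Markov inequality $\P(\epsilon\le\delta)\le\delta\,\E[\indi\{\epsilon\le\delta\}/\epsilon]$; these are equivalent rearrangements of the same estimate.
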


By the above theorem, if \cref{eq:as:UI} holds, then the variance matrices in \cref{thm:clt_beta_u} can be simplified as
\begin{align} \label{eq:simplified_variance_beta_u}
    \Sigma_\beta =  \Diag( \{ \Omega_i\sq (\beta^*_i) ^ 4 / ( b_i)\sq \}_{i=1}^n )
    \;,
    \; \Sigma_u = \Diag(\{ \Omega_i\sq \}_i)
    \;.
\end{align}
In this case components of $\betagam$ are asymptotically independent.

We compare the integrability condition in the above theorem with \cref{eq:as:notie}.
Both \cref{eq:as:UI} and \cref{eq:as:notie} can be interpreted as a form of robustness of the market equilibrium. The quantity $\textsf{bidgap}(\beta,\theta)$ measures the advantage the winner of item $\theta$ has over other losing bidders. The larger $\textsf{bidgap}(\beta,\theta)$ is, the more slack there is in terms of perturbing the pacing multiplier before affecting the allocation at $\theta$.
In contrast to \cref{eq:as:notie} which only imposes an item-wise requirement on the winning margin,
the above assumption requires the margin exists in a stronger sense. Concretely, such a moment condition on the margin function $\epsilon$ represents a balance between how small the margin could be and the size of item sets for which there is a small winning margin.

Second we consider the condition based on the essential supremum.
For any buyer $i$ and his winning set $\Theta_i^*$, there exists a positive constant $\eps_i > 0$ such that
        \begin{align} 
            \label{eq:as:win_margin}
            \betasti  v_i(\theta) \geq \max_{k\neq i} \beta^*_k v_k(\theta) + \eps_i \, , \forall \theta \in \Theta_i^*
            \tag{\small{GAP}} 
            \quad \Leftrightarrow \quad
            \esssup_{\theta \in\Theta} 1/\textsf{bidgap}(\beta,\theta) < K < \infty  
        \end{align} 
It requires that the buyer wins the items without tying bids uniformly over the winning item set. 
The existence of a constant $K<\infty$ such that $1/\textsf{bidgap}(\beta,\theta) < K$ for almost all items makes a stronger requirement than \cref{eq:as:UI}.
From a practical perspective, it is also evidently a very strong assumption: for example, it won't occur with many natural continuous valuation functions. Instead, the condition requires the valuation functions to be discontinuous at the points in $\Theta$ where the allocation changes.
Empirically, since $\betagam$ is a good approximation of $\betast$ for a market of sufficiently large size, 
\cref{eq:as:win_margin} can be approximately verified by replacing $\betast$ with $\betagam$.
As a trade-off, \cref{eq:as:UI} is a weaker condition than \cref{eq:as:win_margin} but is harder to verify in practical application.

Below we present two examples where \cref{eq:as:UI} holds.
\begin{example}[Discrete Values]
    Suppose the values are supported on a discrete set, i.e., $[v_1,\dots, v_n]\in  \{V_1,\dots, V_K  \} \subset \Rn$ a.s.\ 
    Suppose there is no tie for each item at $\betast$. Then \cref{eq:as:win_margin} and thus \cref{eq:as:UI} hold.
    $\blacksquare$
\end{example}

\begin{example}[Continuous Values]
    Here we give a numeric example of market with two buyers where \cref{eq:as:UI} holds.
    Suppose the values are uniformly distributed over the sets $A_1 =\{v \in \R^2_+: v_2 \leq 1, v_2 \geq 2v_1 \}$ and $A_2 = \{v \in \R^2_+: v_1 \leq 1, v_2 \leq  \frac12 v_1 \}$. One can verify on $B = \{ \beta\in \R^2: \frac12 \beta_1 < \beta_2 < 2 \beta_1 \}$ \cref{eq:as:UI} holds.
    To further verify this,
    by calculus, we can show the map $\fbar(\beta) = \E[\max\{v_1\beta_1, v_2 \beta_2 \}]$ is 
    \begin{align*}
       \fbar (\beta)
       = 
       \begin{cases}
        \big(\frac{5}{12} - \frac13 \frac{\beta_1}{\beta_2}\big) \beta_1 + \frac{2\beta_2}{3 \beta_1} \beta_2
        &\text{if $ \beta_2 \geq 2\beta_1$}
        \\
        \frac13 (\beta_1 + \beta_2) 
        &\text{if $\beta \in B$, i.e., $\tfrac12  \beta_1 < \beta_2 < 2\beta_1$}
        \\
        \big(\frac{5}{12} - \frac13 \frac{ \beta_2}{\beta_1}\big) \beta_2 + \frac{2\beta_1 }{3 \beta_2}\beta_1 
        &\text{if $\beta_2 \leq \frac12 \beta_1$}
       \end{cases}
       \;.
    \end{align*}
    We see that 
    $\nabla \sq \fbar = 0$ on $B$ which agrees with \cref{thm:int_implies_hessian}.

    However, \cref{eq:as:UI} fails to capture the fact that $\fbar$ is $C^2$ in other regions as well. To see this, note that in the region $\{\beta \in \R^2_{++}: \beta_2 > 2\beta_1\}$, the Hessian is
    \begin{align*}
        \nabla\sq \fbar(\beta)= \begin{bmatrix} \frac{2\,{\beta_{2}}^2}{3\,{\beta_{1}}^3} & -\frac{2\,\beta_{2}}{3\,{\beta_{1}}^2}\\ -\frac{2\,\beta_{2}}{3\,{\beta_{1}}^2} & \frac{2}{3\,\beta_{1}} \end{bmatrix}.
    \end{align*}
    The Hessian on the region $\{ \beta_2 < \frac12 \beta_1 \}$ has a completely symmetric expression by switching $\beta_1$ and $\beta_2$.
    From here we can see the function $\fbar$ is $C^2$ except on the lines $\beta_2=2\beta_1$ and $\beta_2 = \beta_1/2$. 
    Thus, the condition in \cref{eq:as:UI} does not provide the full picture of when twice differentiability holds.
    $\blacksquare$
\end{example}

\subsubsection*{Markets with linear values}
Now we consider the condition (iii) of \cref{thm:second_order_informal}: linear valuations.
To study linear valuations, we adopt the setup in Section~4 from~\citet{gao2022infinite}. 
Suppose the item space is $\Theta = [0,1]$ with supply $s(\theta) = 1$. The valuation of each buyer $i$ is linear and nonnegative: $v_i(\theta) = c_i \theta + d_i\geq 0$. Moreover, assume the valuations are normalized so that $\int_{[0,1]}v_i\diff \theta  = 1 \Leftrightarrow c_i/2 + d_i = 1$.
Assume the intercepts of $v_i$ are ordered such that $2 \geq d_{1}>\cdots>d_{n} \geq 0$. 

We briefly review the structure of equilibrium allocation in this setting. By Lemma~5 from~\citet{gao2022infinite}, there is a unique partition $0 = a^*_0 < a_1^*<\cdots<a_n^*=1$ such that buyer $i$ receives $\Theta_i=\left[a_{i-1}^*, a_i^*\right]$. In words, the item set $[0,1]$ will be partitioned into $n$ segments and assigned to buyers $1$ to $n$ one by one starting from the leftmost segments. Intuitively, buyer 1 values items on the left of the interval more than those on the right, which explains the allocation structure.
Moreover, the equilibrium prices $p^*(\cdot)$ are convex piecewise linear with exactly $n$ linear pieces, corresponding
to intervals that are the pure equilibrium allocations to the buyers.

\begin{theorem}\label{thm:linear_value_implies_hessian}
    In the market set up as above, the dual objective $H$ is $C^2$ at $\betast$.

    Proof in \cref{proof:sec:analytical_properties_of_dual_obj}.

\end{theorem}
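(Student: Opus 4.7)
The plan is to reduce the claim to showing that $\bar f(\beta) = \int_0^1 \max_i \beta_i v_i(\theta)\,d\theta$ is twice continuously differentiable at $\betast$, since $\Psi(\beta) = -\sum_i b_i \log \beta_i$ is $C^\infty$ on the positive orthant and $H = \bar f + \Psi$. The key structural input from~\citet{gao2022infinite} is the strict partition $0 = a_0^* < a_1^* < \cdots < a_n^* = 1$ with buyer $i$ winning exactly on $[a_{i-1}^*, a_i^*]$. Because the upper envelope of the $n$ affine functions $\theta \mapsto \beta_i^* v_i(\theta) = \beta_i^* c_i \theta + \beta_i^* d_i$ is convex, its pieces appear in order of strictly increasing slope from left to right, so
\[
\beta_1^* c_1 < \beta_2^* c_2 < \cdots < \beta_n^* c_n.
\]

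Next I would fix a neighborhood $N$ of $\betast$ on which both chains of strict inequalities — the slope ordering above, and $0 < a_1(\beta) < \cdots < a_{n-1}(\beta) < 1$ for the crossing points defined by $\beta_i v_i(a_i) = \beta_{i+1} v_{i+1}(a_i)$ — continue to hold. Solving the latter gives the explicit formula
\[
a_i(\beta) = \frac{\beta_{i+1} d_{i+1} - \beta_i d_i}{\beta_i c_i - \beta_{i+1} c_{i+1}},
\]
which is a rational function in $\beta$ whose denominator is bounded away from $0$ on $N$, hence $C^\infty$ there. Setting $a_0(\beta) \equiv 0$ and $a_n(\beta) \equiv 1$, we can then write, for every $\beta \in N$,
\[
\bar f(\beta) = \sum_{i=1}^n \beta_i \int_{a_{i-1}(\beta)}^{a_i(\beta)} (c_i\theta + d_i)\,d\theta = \sum_{i=1}^n \beta_i \Bigl[\tfrac{c_i}{2}\bigl(a_i(\beta)^2 - a_{i-1}(\beta)^2\bigr) + d_i\bigl(a_i(\beta) - a_{i-1}(\beta)\bigr)\Bigr],
\]
which is a polynomial in the smooth functions $a_i(\beta)$ and $\beta_i$, hence $C^\infty$ on $N$. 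Adding $\Psi$ gives that $H$ is $C^\infty$, in particular $C^2$, at $\betast$.

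The main obstacle is justifying that the envelope structure at $\betast$ genuinely persists on a full neighborhood: we must rule out that, under perturbation of $\beta$, (a) some buyer $i$ drops off the upper envelope, or (b) two non-consecutive buyers swap relative position, forcing a different partition formula. Both are prevented by the strict inequalities: non-degeneracy $a_{i-1}^* < a_i^*$ ensures each winning interval has positive length, so by continuity of the finitely many pairwise crossings of the $n$ affine functions, every buyer remains a winner and no new crossing enters $(0,1)$ on a small enough $N$. Once this combinatorial stability is in hand, the argument reduces to the elementary calculus differentiation of a smoothly parametrized integral with smoothly moving endpoints carried out above.
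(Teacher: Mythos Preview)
Your proposal is correct and follows essentially the same approach as the paper's proof: both exploit the rational-function formula $a_i(\beta) = (\beta_{i+1}d_{i+1} - \beta_i d_i)/(\beta_i c_i - \beta_{i+1}c_{i+1})$ for the breakpoints, argue that the strict orderings at $\betast$ persist on a neighborhood, and conclude smoothness of $\bar f$ from the resulting explicit formula. The only cosmetic differences are that the paper works with $\nabla \bar f$ rather than $\bar f$ directly and writes down an explicit radius $\delta$ for the neighborhood (in terms of the minimum gaps in the chains $\beta_i^* c_i$, $\beta_i^* d_i$, and $a_i^*$), whereas you invoke continuity of the finitely many pairwise crossings; both routes suffice.
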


The above result also extends to most cases of piecewise linear (PWL) valuations discussed in Section~4.3 of~\citet{gao2022infinite}). 
In the PWL setup there is a partition of $[0,1]$, $A_{0}=0 \leq A_{1} \leq \cdots \leq A_{K-1} \leq A_{K}=1$, such that all $v_{i}(\theta)$'s are linear on $ [A_{k-1}, A_{k}]$. 
At the equilibrium of a market with PWL valuations, we call an item $\theta$ an \emph{allocation breakpoint} if there is a tie, i.e., $I(\betast, \theta)$ is multivalued.
Now suppose the following two conditions hold:
(i) none of the allocation breakpoints coincide with any of the valuation breakpoints $\{A_k\}$, 
and 
(ii) at any allocation breakpoint there are exactly two buyers in a tie.
Under these two conditions, one can show that in a small enough neighborhood of the optimal pacing multiplier $\betast$, 
the allocation breakpoints are differentiable functions of the pacing multiplier.
This in turn implies twice differentiability of the dual objective by repeating the argument in the proof of \cref{thm:linear_value_implies_hessian}. 
However, if either condition (i) or (ii) mentioned above breaks, the dual objective is not twice differentiable.


\subsubsection*{Markets with smoothly distributed values}
Now we consider condition (ii) of \cref{thm:second_order_informal}: smoothing via the expectation operator.
Given that the dual objective $H$ is the expectation of the non-smooth function $f$ (plus a smooth term $\Psi$), we expect that under certain conditions on the expectation operator $H$ will be twice differentiable.
In this section, we make this precise.
First we introduce some extra notations. For each $i\in[n]$, define the map $\sigma_i: \Rnp \to \Rnp$,
$$\sigma_i(v) = [v_1v_i,\,  \dots,\,v_{i-1}v_i,\, v_i,\, v_{i+1}v_i,\, \dots,\, v_nv_i ] \tp $$ for $i\in[n]$, which multiplies all except the $i$-th entry of $v$ by $v_i$. 

\begin{defn}[Regularity]
    Let $f: \Rnp \to \Rp$ be the probability density function (w.r.t.\ the Lebesgue measure) of a positive-valued random vector with finite first moment.
    We say the density $f$ is {regular} if for all $ h_i(v_{-i} )\defeq \int_0^\infty f \big(\sigma_i(v)\big) v_i^n \diff v_i\,$, $ i \in [n]$, it holds
    (i) $h_i$ is continuous on $\R^{n-1}_{++}$, and
    (ii) all lower dimensional density functions of $h_i$ are continuous (treating $h_i$ as a scaled probability density function).  
\end{defn}

\begin{theorem}\label{thm:smooth_density_implies_hessian}
    Let $H$ be differentiable in a neighborhood of $\betast$.
    Assume the random vector $[v_1,\dots, v_n]: \Theta \to \Rnp$ has a distribution absolutely continuous w.r.t.\ the Lebesgue measure on $\Rn$ with density function $f_v$.
    If $f_v$ is regular, then $H$ is twice continuously differentiable on $\Rnpp$.

    Proof in \cref{proof:sec:analytical_properties_of_dual_obj}.
\end{theorem}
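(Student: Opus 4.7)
The plan is to rewrite each component of $\nabla \fbar(\beta)$ as an integral of a fixed density $h_i$ over a rectangle whose upper limits depend smoothly on $\beta$, and then differentiate once more via Leibniz's rule, using the regularity of $h_i$ to ensure the resulting expressions are continuous in $\beta$. The essential idea is that a multiplicative change of variables turns the $\beta$-dependent ``winning region'' into a product rectangle, shifting the $\beta$-dependence entirely from the integrand onto the geometry.

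\textbf{Step 1: rewriting the gradient.} Since $v$ is absolutely continuous, the tie set $\{v:\beta_i v_i = \beta_k v_k\}$ is a Lebesgue-null hyperplane for every $\beta \in \Rnpp$, so \cref{eq:as:notie} holds automatically at every such point and \cref{thm:first_differentiability} gives $[\nabla \fbar(\beta)]_i = \int_{W_i(\beta)} v_i\, f_v(v)\, dv$, where $W_i(\beta) = \{v\in \Rnp : v_k \leq (\beta_i/\beta_k)\,v_i,\, \forall k \neq i\}$. I then apply the change of variables $t = v_i$, $r_k = v_k/v_i$ for $k \neq i$, with Jacobian $t^{n-1}$. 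The winning region becomes $\{(t,r_{-i}): t \geq 0,\, r_k \in [0, \beta_i/\beta_k]\,\forall k \neq i\}$, and the inner integral in $t$ collapses by Fubini into the definition of $h_i$, yielding
\[ [\nabla \fbar(\beta)]_i \;=\; \int_{\prod_{k\neq i}[0,\,\beta_i/\beta_k]} h_i(r_{-i})\, dr_{-i}. \]

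\textbf{Step 2: Leibniz on a moving rectangle.} Because only the upper limits depend on $\beta$, for $j\neq i$ exactly one boundary contributes, $\partial_{\beta_j}[\nabla\fbar(\beta)]_i = -(\beta_i/\beta_j^2)\int_{\prod_{l\neq i,j}[0,\beta_i/\beta_l]} h_i(r_{-i})\big|_{r_j=\beta_i/\beta_j}\, dr_{-i,-j}$, while for $j = i$ one sums the analogous contributions over all $k \neq i$ with weights $1/\beta_k$. Each Hessian entry is thus a slice integral of $h_i$ against a product rectangle. The regularity hypothesis — continuity of $h_i$ on $\R^{n-1}_{++}$ together with continuity of the lower-dimensional marginals of $h_i$ — guarantees that the slice $h_i\big|_{r_j=c}$ is integrable against any compact rectangle and that this integral depends jointly continuously on $(c,\beta)$. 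Dominated convergence (using $h_i$ as dominating function on compact subsets of $\Rnpp$) then yields continuity of every entry of $\nabla^2 \fbar$, so $\fbar \in C^2(\Rnpp)$, and since $\Psi(\beta)=-\sum_i b_i\log\beta_i$ is smooth on $\Rnpp$, $H = \fbar + \Psi$ is $C^2$ there.

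\textbf{Main obstacle.} The principal technical burden is justifying the Leibniz differentiation and the continuity of the resulting slice integrals. This is precisely the role of the regularity hypothesis: the condition must be strong enough to control $h_i$ both near the coordinate hyperplanes (where it may blow up) and at infinity, and the continuity of lower-dimensional marginals is what prevents the boundary terms from developing jumps in $\beta$. A minor bookkeeping point is the behavior at $v_i=0$, where the change of variables is singular; since the integrand $v_i f_v(v)$ vanishes there, this null set contributes nothing and is harmless.
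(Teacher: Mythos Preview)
Your approach is essentially the same as the paper's: the identical multiplicative change of variables $t=v_i,\ r_k=v_k/v_i$ reduces each gradient entry to $\int_{\prod_{k\neq i}[0,\beta_i/\beta_k]} h_i(r_{-i})\,dr_{-i}$, after which one differentiates in the upper limits. The only difference is packaging: the paper invokes a lemma of \citet{wang1985distribution} stating that $u\mapsto\int_0^{u_1}\cdots\int_0^{u_{n-1}}h(t)\,dt$ is $C^1$ under exactly the regularity hypothesis on $h$, and then composes with the smooth map $\beta\mapsto(\beta_i/\beta_k)_{k\neq i}$, whereas you unpack this lemma by hand via Leibniz and dominated convergence.
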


The above regularity conditions are easy to verify when the values are i.i.d.\ draws from a distribution. In that case, many smooth distributions supported on the positive reals fall under the umbrella of the described regularity. Below we examine three cases: the truncated Gaussian distribution, the exponential distribution and the uniform distribution.

When values are i.i.d.\ truncated standard Gaussians, the joint density $f(v) =c_1 \prod_{i=1}^n \exp(- v_i\sq / 2)$ and 
$h_i (v_{-i}) 
= c_1  \int_\Rp v_i^n \exp ({- \frac12 v_i\sq ( 1 + \sum_{k \neq i} v_k\sq)}) \diff v_i = c_2 (\sum_{k \neq i} v_k\sq)^{-n/2},$
which are regular. Here $c_i$, $i=1,2$, are appropriate constants.
Similarly, for the i.i.d.\ exponential case with the rate parameter equal to one, the density $f(v) = \prod_{i=1}^n \exp(- v_i)$ and $h_i (v_{-i}) =  (\sum_{k\neq i} v_k)^{-n}$ satisfy the required continuity conditions. 
Finally, suppose the values are i.i.d.\ uniforms on $[0,1]$. The joint density is $f(v) = \prod_{i=1}^n \indi\{0<v_i < 1\}$ and for example, if $i=1$, $h_1(v_{-1}) = (\min\{1, v_2\inv, \dots, v_n\inv \})^{n+1}/(n+1)$, which also satisfies the required continuity conditions.

\section{Variance Estimation for $\beta$ and $u$}
\label{sec:hessian_estimation}

First, we discuss the restrictive case where a stronger form of \cref{eq:as:notie} holds:
$\E [\textsf{bidgap}(\beta, \theta)\inv ]< \infty$ in a neighborhood of $\betast$,
which we recall is a  sufficient condition for twice differentiability (\cref{thm:second_order_informal}).
Note that in the observed market the equilibrium allocation $\xgam$ might not be unique, and for our purpose we let $\xgam$ be any equilibrium allocation.
We construct the following estimator for $\Omega_i$. Let $\ugamtaui \defeq \xtaui \vithetau$ be the utility of buyer $i$ obtained from item $\thetau$. 
Then $\ugami = \sumtau \ugamtaui$.
Under the assumption that $H$ is differentiable at $\betast$ (c.f.\ \cref{thm:first_differentiability}), the equilibrium allocation is unique and pure, i.e., $\xsti = \indi \{ \Theta^*_i \}$.
By rewriting $\Omega_i\sq  = \int \big(v_i \xsti - (\int v_i \xsti \diff S)\big)\sq \diff S $, it is natural to consider the estimator $\hatOmesqi \defeq  \frac1t  \sumtau ( t \ugamtaui - \ugami)\sq$. 

\begin{theorem}\label{thm:estiamtion_Omega}
    If $\E [\textsf{bidgap}(\beta, \theta)\inv ]< \infty$  in a neighborhood of $\betast$, then $\hatOmesqi\toprob  \Omega_i\sq$. Proof in \cref{sec:proof_var_est}.
\end{theorem}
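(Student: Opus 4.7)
}

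The plan is to rewrite $\hat\Omega_i^2$ in a way that exposes it as a sample analogue of $\Omega_i^2$, and then reduce the convergence to a uniform law of large numbers together with a perturbation bound controlled by the gap function $\epsilon(\beta^*,\theta)$. Under \cref{eq:as:UI}, \cref{thm:first_differentiability} implies that the event $\{\epsilon(\beta^*,\theta)=0\}$ has $S$-measure zero, so the long-run winner $i(\beta^*,\theta)$ is unique $S$-a.s.\ and the equilibrium allocation is pure: $x_i^* = \mathds{1}\{\Theta_i^*\}$. Similarly, on the high-probability event $\beta^\gamma\in C$ with $\beta^\gamma$ close to $\beta^*$, we may take the observed allocation to be the pure assignment $x_i^\tau = (1/t)\mathds{1}\{i(\beta^\gamma,\theta^\tau)=i\}$. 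This lets me write
\[
\hat\Omega_i^2 \;=\; \frac{1}{t}\sum_\tau \mathds{1}\{i(\beta^\gamma,\theta^\tau)=i\}\,v_i(\theta^\tau)^2 \;-\; (u_i^\gamma)^2,
\]
since the indicator squares to itself, and $\tfrac{1}{t}\sum_\tau t u_i^{\gamma,\tau}=u_i^\gamma$ is exactly the sample mean.

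Next I would prove each term converges to its population counterpart. For the subtracted term, $u_i^\gamma\toprob u_i^*$ by \cref{it:thm:consistency:1}, so $(u_i^\gamma)^2\toprob (u_i^*)^2$. For the leading average, I split
\[
\frac{1}{t}\sum_\tau \mathds{1}\{i(\beta^\gamma,\theta^\tau)=i\}v_i(\theta^\tau)^2 \;=\; \frac{1}{t}\sum_\tau \mathds{1}\{\theta^\tau\in\Theta_i^*\}v_i(\theta^\tau)^2 \;+\; R_t,
\]
where the first term converges in probability to $\int_{\Theta_i^*}v_i^2\,\diff S$ by the classical law of large numbers, and
\[
|R_t| \;\le\; \vbar^2\cdot\frac{1}{t}\sum_\tau \mathds{1}\bigl\{\mathds{1}\{i(\beta^\gamma,\theta^\tau)=i\}\neq \mathds{1}\{i(\beta^*,\theta^\tau)=i\}\bigr\}.
\]
Combining these with $(u_i^*)^2 = (\int_{\Theta_i^*}v_i\,\diff S)^2$ gives exactly $\Omega_i^2$ in the limit, provided $R_t\toprob 0$.

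The main obstacle, as expected, is controlling $R_t$: the winning identity is a discontinuous functional of $\beta$, so we cannot use a direct Lipschitz argument. The key observation is that the winners under $\beta^\gamma$ and $\beta^*$ can disagree at $\theta^\tau$ only if the perturbation of the bids exceeds the winning margin at $\theta^\tau$, i.e.\ only if
\[
\epsilon(\beta^*,\theta^\tau) \;\le\; 2\vbar\,\|\beta^\gamma-\beta^*\|_\infty \;=:\; \delta_t,
\]
so $|R_t|\le \vbar^2\cdot \tfrac{1}{t}\sum_\tau \mathds{1}\{\epsilon(\beta^*,\theta^\tau)\le \delta_t\}$. By \cref{cor:beta_u_concentration}, $\delta_t\toprob 0$; meanwhile \cref{eq:as:UI} together with Markov's inequality gives
\[
\P\bigl(\epsilon(\beta^*,\theta)\le\delta\bigr) \;\le\; \delta\cdot\E\bigl[\epsilon(\beta^*,\theta)^{-1}\bigr] \;\xrightarrow[\delta\downarrow 0]{}\;0.
\]
A standard $(\eta,\delta)$ argument then finishes the job: for any $\eta>0$, fix $\delta$ so small that the population probability above is below $\eta/2$, bound $\P(\delta_t>\delta)\to 0$, and on the complementary event apply the classical LLN to the fixed-threshold indicator $\mathds{1}\{\epsilon(\beta^*,\theta^\tau)\le\delta\}$. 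This shows $R_t\toprob 0$, completing the proof that $\hat\Omega_i^2\toprob \Omega_i^2$.
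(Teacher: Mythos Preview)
Your approach is correct and takes a genuinely different route from the paper's. The paper defines a parameterized sample variance $\hat\Omega_i^2(\beta') = \tfrac{1}{t}\sum_\tau\big(G_i(\beta',\theta^\tau) - \tfrac{1}{t}\sum_s G_i(\beta',\theta^s)\big)^2$ with $G_i(\beta,\theta)=v_i(\theta)\indi\{i=i(\beta,\theta)\}$, identifies $\hat\Omega_i^2 = \hat\Omega_i^2(\beta^\gamma)$ once $\beta^\gamma$ lies in the neighborhood $N$, and then invokes a uniform law of large numbers (Theorem~7.53 of \citet{shapiro2021lectures}) to get $\sup_{\beta\in N}|\hat\Omega_i^2(\beta) - \var[G_i(\beta,\theta)]|\to 0$; combined with $\beta^\gamma\to\beta^*$ this yields the claim. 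You instead compute at $\beta^*$ via the ordinary LLN and isolate the winner-switching discrepancy $R_t$, controlling it through the inclusion $\{i(\beta^\gamma,\theta)\neq i(\beta^*,\theta)\}\subset\{\epsilon(\beta^*,\theta)\le 2\vbar\|\beta^\gamma-\beta^*\|_\infty\}$ and Markov's inequality applied to $\epsilon(\beta^*,\theta)^{-1}$. Your argument is more elementary, makes the role of the integrability hypothesis completely transparent, and in fact only uses it at the single point $\beta^*$; the paper's route is more modular but leans on a uniform-convergence black box for the discontinuous integrand $G_i$. One hand-wave shared by both writeups is the purity step $t x_i^\tau=\indi\{i(\beta^\gamma,\theta^\tau)=i\}$, since the estimator is defined for \emph{any} equilibrium allocation and ties at observed items under $\beta^\gamma$ are not ruled out; your perturbation bound already absorbs this, however, as any tie under $\beta^\gamma$ forces $\epsilon(\beta^*,\theta^\tau)\le 2\vbar\|\beta^\gamma-\beta^*\|_\infty$ and hence falls into the same $R_t$ term, with the discrepancy between $(t x_i^\tau v_i(\theta^\tau))^2$ and $\indi\{\cdot\}v_i(\theta^\tau)^2$ bounded by $\vbar^2$ per such item.
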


Having derived a consistent estimator for $\Omega_i\sq$, we can construct confidence interval for $\betagam$ and $\ugam$. By \cref{thm:int_implies_hessian} and \cref{eq:simplified_variance_beta_u}, the plug-in type estimators for $\Sigma_\beta$ and $\Sigma_u$ take the form $\hat \Sigma_\beta = \Diag( \{ \hatOmesqi (\betagami)^4 / b_i\sq \})$ and $\hat \Sigma_u = \Diag(\{ \hatOmesqi \})$.

Second, we discuss the case where we have the knowledge of $\{\vithetau\}_{i,\tau}$ under the general assumption that $H$ is $C^2$ at $\betast$. 
Following the discussion in Section~7.3 of~\citet{newey1994large}, we estimate the Hessian matrix by computing numerical difference. 
We choose a smoothing level $\eta_t$ and define 
\begin{align*}
      (\hat \cH) _{ij} & \defeq
       \frac{
       1 
        }{4 \eta_t\sq }\Big(H_t (\betagam + \eta_t(e_i + e_j)) 
        - H_t (\betagam + \eta_t(-e_i + e_j))
        \\
        & \quad \quad - H_t (\betagam + \eta_t(e_i - e_j)) 
        + H_t (\betagam + \eta_t(- e_i - e_j))  \Big)
        \;,
\end{align*}
which serves as an estimator of the $(i,j)$-th entry of the Hessian $\cH = \nabla \sq H(\betast)$. By Theorem 7.4 from~\citet{newey1994large}, one can show that if $\eta_t \to 0$ and $ \sqrt t\eta_t \to \infty$, then the estimator is consistent, i.e, $\hat \cH \toprob \cH$. 
Note in order to compute the value of $H_t$ at a perturbed $\betagam$ we need access to the values of the buyers.
Since those values may not always be available in practice, this estimator is not as practical as our other estimators which rely purely on equilibrium quantities.
Moreover, the estimator requires tuning of the smoothing parameter $\eta_t$.

\section{Further Properties of EG programs}
\label{sec:opt_EG_programs}

\begin{fact}\label{fact:pop_eg}
    Both optima in \cref{eq:pop_eg,eq:pop_deg} are attained.
    Let $(x^*_{\EG}, u^*_{\EG})$ and $\betast$ attain the optima the EG programs \cref{eq:pop_eg,eq:pop_deg}, respectively.
    
        \begin{itemize}
            \setlength\itemsep{-.5em}

            \item First-order conditions. Given $(x_{\EG}, u_{\EG})$ feasible to \eqref{eq:pop_eg} and $\beta$ feasible to \eqref{eq:pop_deg}, they are both optimal if and only if the following KKT conditions hold: 
            (i) $\langle p_{\EG}, s - \sum_i x_{\EG,i} \rangle = 0$ where $p_{\EG} = \max_i \betai  v_i$, 
            (ii) $ \langle p_{\EG} - \beta_i v_i, x_{\EG,i} \rangle = 0$, and
            (iii) $\langle v_i, x_{\EG,i} \rangle = u_{\EG,i} = b_i / \beta_i$. 
            
            \item Uniqueness. The equilibrium utility and prices are unique. The optimal solution $\betast$ to \cref{eq:pop_deg} is unique.
            
            \item Strong duality.
            $ \sumiton b_i \log u^*_{\EG,i} = H(\betast ) + \sumiton b_i(\log b_i - 1).$
            
            \item Equilibrium.
            Given any optimal solutions $(x^*_{\EG}, u^*_{\EG}, \betast)$ to \cref{eq:pop_eg,eq:pop_deg}, let $p^*_{\EG}(\cdot) = \max_i \betasti v_i(\cdot)$. Then $(x^*_{\EG}, u^*_{\EG}, p^*_{\EG})$   
            is a ME.
            Conversely, for a ME $(x^*, u^*, p^*)$, it holds that (i) $(x^*,u^*)$ is an optimal solution of \eqref{eq:pop_eg} and (ii) $\beta^*_{{\ME}}:= b_i / \langle v_i, x^*_i \rangle$ is the optimal solution of \eqref{eq:pop_deg}.

            \item Bounds on $\betast$. Define $\ubarbetai \defeq  b_i/ \int v_i \diff S$ 
            and $\bar \beta \defeq \sumiton b_i /\min_i \{\int v_i \diff S\}$. Then $\ubarbetai \leq \betasti \leq \bar \beta$.
        \end{itemize}
    \end{fact}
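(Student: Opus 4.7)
The plan is to proceed in four stages: (1) existence and uniqueness of the dual optimum via coercivity and strict convexity; (2) the KKT conditions via the Lagrangian of \cref{eq:pop_eg}, which then yield strong duality by partial minimization; (3) the equivalence with market equilibrium directly from the KKT conditions; and (4) the explicit bounds on $\betast$. To begin, I would observe that the dual objective $H$ is strictly convex on $\Rnpp$ (since each $-b_i \log \betai$ is) and coercive: $H\to \infty$ as any $\betai\to 0$ via the $-\log$ term, and as $\|\beta\|\to \infty$ via $\int \max_i\betai v_i \diff S\geq \max_i \betai\int v_i \diff S$. Hence a unique minimizer $\betast$ exists.

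For the KKT conditions, introduce multipliers $\betai\geq 0$ for the utility constraints $u_i \leq \langle v_i, x_i\rangle$ and a price functional $p\in L^1_+$ for the supply constraint in the Lagrangian $L(x,u;\beta,p) = \sumiton b_i\log u_i - \sumiton \betai(u_i - \langle v_i, x_i\rangle) - \langle p, \sumiton x_i - s\rangle$. Stationarity in $u_i>0$ gives $\betai=b_i/u_i$, which is KKT (iii); complementary slackness on the supply constraint gives KKT (i); and nonnegativity of $x_i$ forces $p(\theta)\geq \betai v_i(\theta)$ $S$-a.e.\ with equality on the support of $x_i$, which upon setting $p=\max_i \betai v_i$ becomes KKT (ii). Strong duality follows by partial maximization of $L$: maximizing over $u$ produces the term $\sumiton b_i\log(b_i/\betai)-\sumiton b_i$; maximizing over $x_i\geq 0$ enforces $p\geq \max_i\betai v_i$ (else unbounded); and minimizing over $p$ then saturates at $p=\max_i\betai v_i$, recovering the dual $\min_\beta H(\beta)$ and the identity $\sumiton b_i \log u^*_{\EG,i}=H(\betast)+\sumiton b_i(\log b_i-1)$. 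A Slater point (for instance, proportional allocation $x_i=(b_i/\sum_j b_j)s$ with $u_i=\tfrac12\langle v_i,x_i\rangle$) rules out a duality gap.

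To pass from KKT to market equilibrium, set $\pst=\max_i\betasti v_i$. Supply feasibility and market clearance are immediate; budget balance holds because $\langle \pst, x^*_i\rangle=\betasti\langle v_i, x^*_i\rangle=\betasti u^*_i=b_i$ by KKT (ii)–(iii); and buyer optimality follows from the bang-per-buck bound: for any $x_i\geq 0$ with $\langle \pst, x_i\rangle\leq b_i$, we have $\langle v_i, x_i\rangle\leq \langle \pst, x_i\rangle/\betasti\leq b_i/\betasti=\usti$, with equality at $x^*_i$. The converse is immediate by defining $\betasti=b_i/\langle v_i, x^*_i\rangle$ and verifying KKT from the ME axioms. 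Uniqueness of $\ust$ comes from strict concavity of $\sumiton b_i\log u_i$; uniqueness of $\betast$ from strict convexity of $H$; and uniqueness of $\pst$ is then forced by $\pst=\max_i\betasti v_i$. For the bounds on $\betasti=b_i/\usti$: the trivial upper bound $\usti\leq \int v_i \diff S$ gives $\betasti\geq \ubarbetai$; in the other direction, summing budget balance over $i$ yields the full-extraction identity $\langle \pst, s\rangle=\sum_j b_j$, so the proportional allocation $\tilde x_i=(b_i/\sum_j b_j)s$ is affordable at $\pst$, which gives $\usti\geq (b_i/\sum_j b_j)\int v_i \diff S$ and hence $\betasti\leq \bar\beta$.

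The main obstacle is executing this classical finite-dimensional argument rigorously in the $L^\infty$–$L^1$ duality. Three points need care: (a) existence of a primal optimum is most cleanly obtained by constructing $x^*$ from $\betast$ via a measurable selection from the winner set $\{i:\betasti v_i(\theta)=\max_k \beta^*_k v_k(\theta)\}$ at each $\theta$, then verifying feasibility and KKT by hand; (b) the partial-Lagrangian interchange of max/min and integration requires an infinite-dimensional Fenchel–Rockafellar argument rather than finite-dimensional LP duality, with Slater justified in the $L^1$–$L^\infty$ pairing; and (c) the stationarity condition $p\geq \betai v_i$ obtained from unrestricted $x_i\geq 0$ must be strengthened to hold $S$-almost everywhere via testing against indicators of measurable sets. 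Once these measure-theoretic issues are handled, the rest of the fact follows from the finite-dimensional-style computation above.
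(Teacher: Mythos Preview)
Your proposal is essentially correct, but there is nothing to compare it against: the paper does not prove \cref{fact:pop_eg}. It is stated as a \emph{fact} imported wholesale from \citet{gao2022infinite} (see the sentence immediately preceding the display of \cref{eq:pop_eg,eq:pop_deg}: ``We state the following result from~\citet{gao2022infinite}\ldots''), and the appendix simply records the statement without argument. So you have written a self-contained proof sketch for a result the paper treats as background.

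On the merits, your four-stage plan is sound and the individual computations are right. The coercivity/strict-convexity argument for the dual, the Lagrangian derivation of KKT, the bang-per-buck verification of buyer optimality, and the two bounds on $\betasti$ (via $u^*_i\le \int v_i\,\diff S$ and via full extraction plus the proportional allocation) are exactly how one would prove this. Your identification of the genuine infinite-dimensional subtleties---measurable selection of winners to construct $x^*$, Fenchel--Rockafellar in the $L^1$--$L^\infty$ pairing, and the a.e.\ interpretation of $p\ge \beta_i v_i$---is also accurate; these are precisely the points where \citet{gao2022infinite} do the heavy lifting. If you want to be fully rigorous you would either reproduce those arguments or, as the paper does, cite them.
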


    Based on the set of KKT conditions we comment on the structure of market equilibrium. 
    Condition~(ii) describes how the pacing multiplier relates to equilibrium allocation; buyer $i$ only receives items within its `winning set' $\{\theta: p^*(\theta)= \beta^*_i v_i(\theta) \}$. 
    This also hints at a connection between Fisher market and first-price auction: the equilibrium allocation can be thought of as the result of a first-price auction where each buyer bids $\betasti v_i(\theta)$ and then item goes to the highest bidder (with appropriate tie-break).
    Condition~(iii) shows pacing multipliers $\betast$ can be interpreted as price-per-utility. 
    Finally, all budgets are extracted, i.e., 
    $\lg \pst, s \rg = \sumiton b_i$.
    To see this, we apply all three KKT conditions and obtain $\lg \pst, s \rg = \lg \pst, \sumiton \xsti \rg = \sumiton \betasti \lg v_i,\xsti\rg = \sumiton \betasti (b_i/\betasti) = \sumiton b_i$.
    Intuitively, this is due to the fact that buyers only receives utilities from obtaining goods but not retaining money. In \cref{sec:quasilinear} we study an extension called \emph{quasilinear market} where buyers have the incentive to retain money.
    
\begin{fact} \label{fact:sample_eg}

Parallel to the population EG programs, we state the optimality conditions for the sample EG programs.
Consider the sample EG programs \cref{eq:sample_eg,eq:sample_deg}.
It holds 

    \begin{itemize}
     
   \item First-order conditions. Given $(x^\gam_{\EG}, u^\gam_{\EG})$ feasible to \eqref{eq:sample_eg} and $\beta^\gam$ feasible to \eqref{eq:sample_deg}, they are both optimal if and only if the following KKT conditions hold: 
        (i) $\langle p^\gam_{\EG}, \mathsf{s} - \sum_i x^\gam_{\EG,i} \rangle = 0$ where $p^\gam_{\EG} = \max_i \betagami v_i$, 
        (ii) $ \langle p^\gam_{\EG} - \beta^\gam_i v_i, x^\gam_{\EG,i} \rangle = 0$,
        (iii) and $\langle v_i, x^\gam_{\EG,i} \rangle = u^\gam_{\EG,i} = b_i / \beta^\gam_i$. 

        \item Strong duality.
        $ \sumiton b_i \log u^\gam_{\EG,i} = H_t(\betagam ) + \sumiton b_i(\log b_i - 1).$

        \item Uniqueness. The equilibrium utility and prices are unique. The optimal solution $\betagam$ to \cref{eq:sample_deg} is unique.

        \item Equilibrium.
        Any optimal solutions $(x^\gam_{\EG}, u^\gam_{\EG}, p^\gam_{\EG})$ to \cref{eq:sample_eg,eq:sample_deg}   
        is a ME.
        Conversely, for a ME $(x^\gam, u^\gam, p^\gam)$, it holds that (i) $(x^\gam,u^\gam)$ is an optimal solution of \eqref{eq:sample_eg} and (ii) $\beta^\gam_{{\ME}}:= b_i / \langle v_i, x^\gam_i \rangle$ is the optimal solution of \eqref{eq:sample_deg}.

        \item Bounds on $\betagam$ and $\ugam$. It holds 
        $ \frac{b_i}{ \sumiton b_i } \sumtau \mathsf{s}^\tau \vithetau \leq \ugami \leq\sumtau \mathsf{s}^\tau \vithetau  $ and 
        $\frac{b_i}{\sumtau \mathsf{s}^\tau \vithetau}\leq \betagami \leq \frac{\sumi b_i }{  \sumtau \mathsf{s}^\tau\vithetau }$ 
    \end{itemize}
\end{fact}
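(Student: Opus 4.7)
The plan is to prove Fact 2 by mirroring the argument behind Fact 1 in a strictly finite-dimensional setting, which is substantially cleaner because the program~\eqref{eq:sample_eg} has linear constraints and a concave objective. Strong duality and necessity/sufficiency of KKT therefore hold under only mild feasibility (e.g., the allocation $x_i = (b_i / \sumiton b_i)\mathsf{s}$ combined with $u_i < \lg v_i(\gam), x_i\rg$ is strictly interior whenever each $v_i(\gam)$ is not identically zero, which holds with probability one for $t$ large enough under the blanket assumption $\int v_i s \diff \mu = 1$). No additional topological hypotheses of the sort needed for infinite-dimensional Fisher markets are required.

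For the dual derivation and the KKT structure, I would form the Lagrangian
$L(x,u;\beta,p) = \sumiton b_i \log u_i + \sumiton \beta_i (\lg v_i(\gam), x_i\rg - u_i) + \sumtau p^\tau (\mathsf{s}^\tau - \sumiton x_i^\tau)$
with multipliers $\beta_i \geq 0$ and $p^\tau \geq 0$. Partial maximization in $u_i \geq 0$ yields the stationarity $u_i = b_i/\beta_i$ and contributes $\sumiton b_i(\log b_i - \log \beta_i - 1)$; partial maximization in $x_i^\tau \geq 0$ is bounded iff $p^\tau \geq \max_i \beta_i v_i(\theta^\tau)$, in which case the tight choice $p^\tau = \max_i \beta_i v_i(\theta^\tau)$ recovers $H_t(\beta) + \sumiton b_i(\log b_i - 1)$, establishing strong duality. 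Uniqueness of the dual optimum $\betagam$ follows from strict convexity of $\Psi = -\sumiton b_i \log \beta_i$ on $\Rnpp$ (the coercive $-\log$ barrier also forces $\betagami > 0$ at the optimum); this propagates to the unique $\ugami = b_i/\betagami$ and $p^{\gam,\tau} = \max_i \betagami v_i(\theta^\tau)$. To verify equivalence with \cref{defn:observed_market}, I would read the three KKT conditions against the definition: (i) gives market clearance and supply feasibility; (iii) gives the utility-budget identity; and (ii) is precisely buyer optimality, since a linear budget-$b_i$ buyer's demand concentrates on items minimizing $p^{\gam,\tau}/v_i(\theta^\tau)$, whose minimal value is exactly $1/\betagami$ when $\betagami = b_i/\ugami$.

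For the bounds, the upper bound $\ugami \leq \sumtau \mathsf{s}^\tau \vithetau$ is immediate from $x_i^{\gam,\tau} \leq \mathsf{s}^\tau$. The lower bound is a standard proportional-share argument resting on full budget extraction $\sumtau p^{\gam,\tau} \mathsf{s}^\tau = \sumiton b_i$, which I would derive by summing KKT~(iii) over $i$ to obtain $\sumiton b_i = \sumiton \betagami \lg v_i(\gam), x_i^\gam\rg$, and then rewriting via (ii) and (i) as $\lg p^\gam, \sumiton x_i^\gam \rg = \lg p^\gam, \mathsf{s}\rg$. Consequently, the deviation $\tilde x_i = (b_i/\sumiton b_i) \mathsf{s}$ costs exactly $b_i$ and lies in buyer $i$'s budget set, so buyer optimality forces $\ugami \geq \lg v_i(\gam), \tilde x_i\rg = (b_i/\sumiton b_i)\sumtau \mathsf{s}^\tau \vithetau$. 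Inverting $\betagami = b_i/\ugami$ transports the utility bounds to the stated bounds on $\betagami$. The only step that requires mild care is the full-budget-extraction identity, which must be deduced from all three KKT conditions together rather than stated as a standalone assumption; every other ingredient is routine finite-dimensional convex duality.
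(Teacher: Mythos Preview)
Your proposal is correct. The paper states Fact~2 without proof, treating it as a standard finite-dimensional counterpart to the infinite-dimensional results of \citet{gao2022infinite} (Fact~1). Your Lagrangian derivation, the strict-convexity argument for uniqueness of $\betagam$, and the proportional-share argument for the lower utility bound are exactly the standard route; indeed, the full-budget-extraction identity you derive from KKT (i)--(iii) is the same chain the paper writes out explicitly just after Fact~1 for the population case. There is no alternative proof in the paper to compare against.
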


We comment on the scaling $1/t$ in the observed market $\calMEgam(b,v, \frac1t 1_t)$.
Recall that in the long-run market the total supply of items is one, while each buyer $i$ has budget $b_i$. To match budget sizes and markets sizes, we require that in the observed market, the ratio between total supply and buyer $i$'s budget is also $1:b_i$.
Note that in a linear Fisher market, we can scale all budgets by any positive constant and the equilibrium does not change, except that prices are scaled by the same amount. 
Formally, if $\calMEgam(b,v,\sfs)=(x,u,p)$, then $\calMEgam (\delta b, (\alpha_1 v_1,\dots, \alpha_n v_n),\beta \sfs) = (x,(\alpha_1\beta u_1, \dots,\alpha_n\beta u_n), \delta p)$ for positive scalars $\delta,\beta,\{\alpha_i\}$.
Thus, our particular choice of supply normalization is not crucial. 
For example, we could equivalently work with the market $\calMEgam(tb, v, 1_t)$, with trivial scaling adjustments in derivation of the results.

We remark that there are two ways to specify the valuation component 
of infinite-dimensional Fisher market.
The first one is simply imposing functional form assumptions on $v_i(\cdot)$. For example, one could take $\Theta=[0,1]$ and let $v_i(\cdot)$ be a linear function or a piecewise linear function~\cite[Section 4]{gao2022infinite}.
If we view $v = (v_1,\dots,v_n):\Theta\to\Rnp$ as a random vector, then an alternative way is to simply specify the distribution of $v$.
Formally, let $v$ and $v'$ be identically distributed random vectors representing the values of buyers.
By the form of EG programs \cref{eq:pop_eg,eq:pop_deg},
if $(u^*,\betast)$ are equilibrium utilities and pacing multipliers of the market $\calME(b,v,s)$ and $( u^{*'}, \beta^{*  '}  )$ are those of  $\calME(b,v',s)$, then $(u^*,\beta^*)=  ( u^{*'}, \beta^{* '})$. 
Even though the equilibrium allocations and prices are different in the two markets,
the quantities we care about, e.g., 
individual utilities and  
Nash social welfare are the same.
Moreover, applying the same reasoning to the case of quasilinear market (see~\cref{sec:quasilinear}), it will be clear that identical value distributions implies identical revenues in market equilibrium.
When the distribution of $v$ is absolutely continuous w.r.t.\ the Lebesgue measure on $\Rn$ we use $f_v$ to denote the density function.

\section{Technical Lemmas}
\label{sec:tech_lemmas}

\begin{proof}[Proof of \cref{lm:value_concentration}]
    Recall the event $A_t = \{ \betagam \in C   \}$.
Define $\vbarit = \frac1t \sumtau \vithetau$.

        First we notice concentration of values implies membership of $\betagam$ to $C$, i.e.,
        $\{ 1/2 \leq \vbarit \leq 2,\forall i \} \subset \{\betagam \in C \}$ due to 
        \cref{fact:sample_eg}. Concretely, 
        $\ugami \leq \frac1t \sumtau\vithetau$ and $\ugami \geq \frac1t\frac{b_i}{\sumiton b_i} \sumtau \vithetau$, and through the equation $\betagami = b_i / \ugami$ the inclusion follows.
        Note $0\leq \vithetau\leq \vbar$ is a bounded random variable with mean $\E[\vithetau] =1 $. By Hoeffding's inequality we have
        $   
            \P(|\vbarit - 1| \geq \delta)\leq 2\exp(-\frac{2\delta\sq t}{\vbar\sq})
        $.
        Next we use a union bound and obtain
        \begin{align}
            \label{eq:hoeffding_average_values}
            \P(\betagam \notin C) \leq \P \bigg(
            \bigcup_{i=1}^n \big\{ |\vbarit - 1| \geq \delta \big\} \bigg) \leq 2n \exp \Big(-\frac{2\delta\sq t}{\vbar\sq}\Big)
            \;.
        \end{align}

        By setting $2n \exp(-\frac{2\delta\sq t}{\vbar\sq}) = \eta$ and $\delta = 1/2$ and solving for $t$ we obtain item (i) in claim.
    
        To show item (ii),
        we use the Borel-Cantelli lemma. By choosing $\delta  = 1/2$ in the \cref{eq:hoeffding_average_values} we know $\P(A_t^c) \leq \P(\{ 1/2 \leq \vbarit \leq 2,\forall i \}^c)\leq 2n\exp(-t/(2\vbarsq))$. Then we have
        \begin{align*}
            \sum_{t=1}^\infty \P(A_t^c) < \infty
            \;.
        \end{align*}
        By the Borel-Cantelli lemma it follows that $\P(\{A_t^c \text{ infinitely often}\}) = 0$, or equivalently $\P(A_t \text{ eventually}) = 1$.
    \end{proof}
    
\begin{lemma}[Smoothness and Curvature] \label{lm:smooth_curvature}
    It holds that 
both $H$ and $H_t$ are $L$-Lipschitz and $\lambda$-strongly convex w.r.t\ the $\ell_\infty$-norm on $C$ with $L = 2n+\vbar$ and $\lambda = \ubar{b}/4$. Moreover, $H_t$ and $H$ are $(\vbar + 2\sqrt{n})$-Lipschitz w.r.t.\ $\ell_2$-norm.
\end{lemma}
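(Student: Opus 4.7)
The plan is to write $H = \bar f + \Psi$ (and similarly $H_t = \frac1t\sum_\tau f(\cdot,\theta^\tau) + \Psi$) and handle the two parts separately. Since $f(\cdot,\theta)=\max_i\beta_i v_i(\theta)$ is a maximum of linear functions it is convex, so both $\bar f$ and its sample average are convex and contribute nothing negative to the strong convexity constant; all strong convexity will come from $\Psi$. Conversely, since $\Psi$ is a smooth separable function it is easy to bound the size of its gradient; the only subtlety is to handle the subgradient of the piecewise-linear $f$. Throughout I will use the facts from the blanket assumptions that $\sum_i b_i = 1$ and that, on $C$, every coordinate satisfies $b_i/2 \le \beta_i \le 2$.

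For Lipschitz continuity, I would first observe that $f(\cdot,\theta)$ has subgradient of the form $v_{i^\star(\beta,\theta)}\, e_{i^\star(\beta,\theta)}$ (or a convex combination supported on indices achieving the maximum), so any subgradient $G(\beta,\theta)$ has a single nonzero entry bounded by $\bar v$, giving $\|G(\beta,\theta)\|_1 \le \bar v$ and $\|G(\beta,\theta)\|_2 \le \bar v$. Passing expectations or averages through preserves both Lipschitz constants. For $\Psi$, $\nabla\Psi(\beta) = -(b_1/\beta_1,\dots,b_n/\beta_n)$, and on $C$ each entry satisfies $b_i/\beta_i \le b_i/(b_i/2) = 2$. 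Thus $\|\nabla\Psi\|_1 \le \sum_i b_i/\beta_i \le 2\sum_i 1 \cdot (b_i/b_i) \cdot 1 = 2n$ (alternatively $\sum_i b_i \cdot 2/b_i = 2n$) and $\|\nabla\Psi\|_2 \le 2\sqrt n$. Adding the two bounds via the triangle inequality for the dual norms gives $L = 2n + \bar v$ in $\ell_\infty$ and $\bar v + 2\sqrt n$ in $\ell_2$, as claimed.

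For strong convexity, $\Psi$ is $C^2$ on $\mathbb R^n_{++}$ with Hessian $\nabla^2\Psi(\beta) = \mathrm{Diag}(b_i/\beta_i^2)$. On $C$ we have $\beta_i \le 2\bar\beta = 2$, so $b_i/\beta_i^2 \ge b_i/4 \ge \underline b/4$. Hence $\nabla^2\Psi(\beta) \succeq (\underline b/4)\, I_n$ on $C$, and a Taylor-with-remainder argument along the segment between two points of $C$ (which is convex) yields $\Psi(y)-\Psi(x) - \langle \nabla\Psi(x), y-x\rangle \ge \tfrac{\underline b/4}{2}\|y-x\|_2^2 \ge \tfrac{\underline b/4}{2}\|y-x\|_\infty^2$, the last step because $\|\cdot\|_2 \ge \|\cdot\|_\infty$. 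Thus $\Psi$, and therefore $H$ and $H_t$ (by adding the convex $\bar f$-part), is $\underline b/4$-strongly convex with respect to $\|\cdot\|_\infty$ on $C$.

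The only mildly delicate point is Lipschitzness of $f(\cdot,\theta)$ through its subdifferential rather than gradient, which is handled cleanly by recalling the structure of the subgradients of a finite max (as discussed around Theorem~\ref{thm:first_differentiability}), and by the observation that the $\ell_1$- and $\ell_2$-norms of a standard-basis-like vector with entry bounded by $\bar v$ coincide and equal at most $\bar v$. All other steps are routine calculus on the compact box $C$.
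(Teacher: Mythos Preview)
Your proposal is correct and follows essentially the same approach as the paper: split $H=\bar f+\Psi$, use that every subgradient of $f(\cdot,\theta)$ has $\ell_1$- and $\ell_2$-norm at most $\bar v$, bound $\|\nabla\Psi\|_1\le 2n$ and $\|\nabla\Psi\|_2\le 2\sqrt n$ on $C$ via $\beta_i\ge b_i/2$, and obtain strong convexity from $\nabla^2\Psi\succeq(\ubar b/4)I$ together with $\|\cdot\|_2\ge\|\cdot\|_\infty$. The only cosmetic difference is that the paper bounds the increment $|H_t(\beta)-H_t(\beta')|$ directly, whereas you go through the dual-norm characterization of Lipschitz continuity; the content is identical.
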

\begin{proof}[Proof of \cref{lm:smooth_curvature}]

    Now we verify that $H_t$ and $H$ are $(\vbar + 2n)$-Lipschitz on the compact set $C$ w.r.t.\ the $\ell_\infty$-norm. For $\beta,\beta' \in C$,
    \begin{align*}
        & |H_t(\beta)  - H_t(\beta')| 
        \\
        & \leq \frac1t\sumtau \big|\max_i \{\vithetau \beta_i \} - \max_i \{\vithetau \beta_i' \}\big| + 
        \sumiton b_i \big| \log \beta_i -  \log \beta_i'\big|
        \\
        & \leq \vbar \| \beta - \beta'\|_\infty + \sumiton b_i \cdot \frac{1}{\ubarbetai/2} |\beta_i - \beta_i'|
        \\
        & = (\vbar + 2n) \| \beta - \beta'\|_\infty
        \;.
    \end{align*}
    This concludes the $(\vbar + 2n)$-Lipschitzness of $H_t$ on $C$. Similar argument goes through for $H$. 
    From the above reasoning we can also conclude $|H_t(\beta)  - H_t(\beta')|\leq \vbar \|\beta-\beta'\|_2 + 2 \|\beta - \beta'\|_1 \leq (\vbar + 2\sqrt{n})\|\beta - \beta'\|_2$. This concludes $(\vbar + 2\sqrt n)$-Lipschitzness of $H_t$ w.r.t.\ $\ell_2$-norm.

    Recall $H = \fbar + \Psi$ where $\fbar(\beta) = \E[\max_i\{ v_i(\theta) \betai \}]$ and $\Psi(\beta) = -\sumiton b_i \log \beta_i$. The function $\Psi$ is smooth with the first two derivatives 
    \begin{align*}
        \nabla\Psi(\beta) =  - [b_1/\beta_1, \dots, b_n/\beta_n]\tp, \quad \nabla\sq \Psi(\beta) =  \Diag( \{ b_i/(\betai)\sq \})
        \;.
    \end{align*}
    It is clear that for all $\beta \in C$ it holds $\betai\leq 2$. So 
    $\nabla\sq \Psi(\beta) \succ \min_i\{b_i/4\} I = \lambda I$.
    To verify the storng-convexity w.r.t\ $\|\cdot \|_\infty$ norm, we note
    for all $\beta', \beta \in C$.
    \begin{align*}
        H(\beta ')  - H(\beta) - \lg z + \nabla \Psi(\beta), \beta' - \beta  \rg \geq (\lambda/2) \|\beta' - \beta \|_2\sq \geq (\lambda/2) \|\beta' - \beta \|_\infty\sq
        \;,
    \end{align*}
    where $z \in \partial \fbar(\beta)$ and $z + \nabla \Psi(\beta) \in \partial H(\beta)$.
    This completes the proof.
\end{proof}

\begin{defn}[Definition 7.29 in~\citet{shapiro2021lectures}]
  \label{def:epiconvergence}
    A sequence $f_k:\Rn \to \bar \R$, $k=1,\dots$, of extended real valued functions epi-converge to a function $f:\Rn \to \bar \R$, if for any point $x\in \Rn$ the following conditions hold

    (1) For any sequence $x_k \to x$, it holds $\liminf _{k \rightarrow \infty} f_{k}\left(x_{k}\right) \geq f(x)$,

    (2) There exists a sequence $x_k \to x$ such that $\limsup _{k \rightarrow \infty} f_{k}\left(x_{k}\right) \leq f(x)$.
\end{defn}



\begin{defn} [Definition 3.25, \citet{rockafellar2009variational}, see also Definition 11.11 and Proposition 14.16 from \citet{bauschke2011convex}]
    A function $f:\Rn \to \bar \R$ is level-coercive if $\liminf_{\|x\|\to \infty} f(x) / \|x\| > 0$. It is equivalent to $\lim _{\|x\| \rightarrow+\infty} f(x)=+\infty$.
\end{defn}
\begin{lemma}[Corollary 11.13, \citet{rockafellar2009variational}]
    \label{lm:coercivity_level_boundedness}
    For any proper, lsc function $f$ on $\Rn$, level coercivity implies level boundedness. When $f$ is convex the two
properties are equivalent.
\end{lemma}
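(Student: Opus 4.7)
The plan is to establish the two implications in Lemma~\ref{lm:coercivity_level_boundedness} separately. For the forward direction (level coercivity $\Rightarrow$ level boundedness), the argument is immediate from the stated equivalent characterization of level coercivity, namely $f(x)\to +\infty$ as $\|x\|\to \infty$. Fixing any $\alpha \in \R$, this implies the existence of $R > 0$ such that $f(x) > \alpha$ whenever $\|x\| > R$, so the sublevel set $\{x: f(x)\leq \alpha\}$ is contained in $\bar B(0,R)$ and hence bounded. This direction uses only properness of $f$, not convexity.

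The reverse direction (convex + level boundedness $\Rightarrow$ level coercivity) is the substantive part. My plan is to pass through the recession function. For proper lsc convex $f$, fix any $x_0 \in \dom f$ and define
\[ f^\infty(d) = \lim_{t\to \infty} \frac{f(x_0 + td) - f(x_0)}{t}, \]
where existence and independence of $x_0$ are standard via the monotonicity of convex difference quotients in $t$. I would first show that level boundedness forces $f^\infty(d) > 0$ for every $d\neq 0$: if some direction $d\neq 0$ had $f^\infty(d) \leq 0$, then the difference quotient being nonpositive for all $t$ would place the entire ray $\{x_0 + td : t \geq 0\}$ inside the sublevel set $\{f \leq f(x_0)\}$, contradicting boundedness of that sublevel set.

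The main obstacle, and where convexity really bites, is upgrading this pointwise positivity on directions to the uniform lower bound needed for level coercivity. Here the key facts are that $f^\infty$ is lower semicontinuous and positively homogeneous. Since the unit sphere $\{d:\|d\|=1\}$ is compact and $f^\infty$ is lsc and strictly positive on it, the infimum $c := \inf_{\|d\|=1} f^\infty(d)$ is attained and strictly positive. Positive homogeneity of $f^\infty$ together with the fact that $\liminf_{\|x\|\to \infty} f(x)/\|x\| \geq \inf_{\|d\|=1} f^\infty(d)$ (obtained by rewriting $f(x)/\|x\|$ as a difference quotient along the direction $x/\|x\|$ after recentering at $x_0$, and passing to the limit) yields
\[ \liminf_{\|x\|\to \infty} \frac{f(x)}{\|x\|} \geq c > 0, \]
which is precisely level coercivity. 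The delicate step is the last inequality, which requires care because the directions $x/\|x\|$ vary with $x$; this is where the lower semicontinuity of $f^\infty$ together with compactness of the sphere is essential, and is the real reason the equivalence fails without convexity (a nonconvex $f$ can have bounded sublevel sets while growing arbitrarily slowly along some direction).
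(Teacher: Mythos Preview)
The paper does not supply its own proof of this lemma: it is quoted verbatim as Corollary~11.13 of \citet{rockafellar2009variational} and used as a black box in the proof of \cref{thm:consistency}. There is therefore nothing in the paper to compare your argument against.

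That said, your sketch is essentially the standard recession-function proof one finds in \citet{rockafellar2009variational}. The forward direction is immediate (and does not even need the equivalence asserted in the paper's definition, since $\liminf_{\|x\|\to\infty} f(x)/\|x\|>0$ alone forces $f(x)\to+\infty$). For the converse, your outline is correct: monotonicity of the convex difference quotient gives $f^\infty(d)>0$ for every $d\neq 0$ under level boundedness; lower semicontinuity of $f^\infty$ and compactness of the sphere then yield a uniform lower bound $c>0$. The ``delicate step'' you flag can be made precise as follows: for any sequence $x_k$ with $\|x_k\|\to\infty$, set $t_k=\|x_k-x_0\|$, $d_k=(x_k-x_0)/t_k$, pass to a subsequence with $d_k\to d$, and use convexity to bound $[f(x_0+t_k d_k)-f(x_0)]/t_k \ge [f(x_0+s d_k)-f(x_0)]/s$ for any fixed $s<t_k$; then lower semicontinuity of $f$ at $x_0+sd$ and letting $s\to\infty$ gives $\liminf_k f(x_k)/\|x_k\|\ge f^\infty(d)\ge c$. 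You identified the right ingredients; this fills the one gap you left open.
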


\begin{lemma}[Theorem 7.17,~\cite{rockafellar2009variational}] \label{lm:def_of_epiconv}
    Let $h_n: \R^d \to \bar \R$, $h:\R^d\to \bar \R$ be closed convex and proper. Then $h_n \toepi h$ is equivalent to either of the following conditions.

    (1) There exists a dense set $A \subset \R^d$ such that $h_n(v) \to h(v)$ for all $v\in A$.

    (2) For all compact $C \subset \dom h$ not containing a boundary point of $\dom h$, it holds 
    $$\lim_{n\to \infty} \sup_{v\in C} |h_n(v) - h(v)| = 0
    \;.
    $$
\end{lemma}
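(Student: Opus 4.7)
The plan is to establish the chain of implications (2) $\Rightarrow$ (1) $\Rightarrow h_n \toepi h \Rightarrow$ (2). The implication (2) $\Rightarrow$ (1) is immediate: pick any countable dense subset $A \subset \mathrm{int}(\dom h)$, and uniform convergence on any compact neighborhood of each $v \in A$ yields $h_n(v)\to h(v)$.

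For (1) $\Rightarrow h_n \toepi h$, I would first upgrade pointwise convergence on $A$ to pointwise convergence on all of $\mathrm{int}(\dom h)$. Fix $x \in \mathrm{int}(\dom h)$. By density of $A$, one can find a simplex with vertices $a_0,\dots,a_d \in A$ containing $x$ in its interior. Convexity of $h_n$ and pointwise convergence at the vertices imply $\{h_n\}$ is uniformly bounded above on this simplex, hence (by convexity and a Jensen-style lower-bound using any affine minorant) bounded and equi-Lipschitz on a smaller neighborhood of $x$ for $n$ large. A standard Arzel\`a--Ascoli-type argument combined with uniqueness of the limit along subsequences (since $h$ is itself determined on $A$) gives $h_n(x) \to h(x)$. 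For the epi-limsup inequality at an arbitrary $x \in \R^d$: if $x \in \mathrm{int}(\dom h)$, the constant sequence $x_n=x$ works; if $x$ is a boundary or exterior point, take interior recovery points $x^k \to x$ with $h(x^k) \to h(x)$ (available since $h$ is convex, proper, lsc), invoke pointwise convergence at each $x^k$, and diagonalize. The epi-liminf inequality $\liminf h_n(x_n) \geq h(x)$ for $x_n \to x$ follows by picking any continuous affine minorant $\ell \leq h$ through any interior point, approximating $\ell$ from below by affine minorants of $h_n$ using pointwise convergence at $d+1$ affinely independent points of $A$, and sending the minorants to $h$ via its representation as the supremum of its affine minorants.

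For $h_n \toepi h \Rightarrow $ (2), I would invoke the equi-Lipschitz property of pointwise-bounded families of convex functions on compact subsets of the interior of the domain. By the Attouch--Wets characterization, epi-convergence of closed proper convex functions implies continuous convergence on $\mathrm{int}(\dom h)$: for any $x \in \mathrm{int}(\dom h)$ and $x_n \to x$, $h_n(x_n) \to h(x)$. On any compact $C \subset \mathrm{int}(\dom h)$ not touching the boundary, the $h_n$ are eventually uniformly bounded (apply continuous convergence at finitely many covering points of a slightly inflated compact set still in $\mathrm{int}(\dom h)$), hence uniformly Lipschitz on $C$ with a common constant $L$. Combined with pointwise convergence, equi-Lipschitzness upgrades to uniform convergence on $C$ by a standard $\epsilon/3$ argument on a finite $L^{-1}\epsilon$-net.

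The main obstacle is the boundary behavior in the implication (1) $\Rightarrow$ epi-convergence: pointwise convergence on a dense subset of $\R^d$ (as opposed to $\mathrm{int}(\dom h)$) gives no direct control at points where $h$ may jump up on $\partial(\dom h)$, so the epi-limsup construction must route through the interior and rely on lsc of $h$ at the boundary to close the gap. The convexity and properness of both $h_n$ and $h$ are essential throughout — convexity delivers the local equi-Lipschitz/equi-boundedness estimates that make pointwise information at a dense set propagate, and properness together with lsc ensures the minorant representation used in the epi-liminf step.
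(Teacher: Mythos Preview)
The paper does not prove this lemma; it is quoted as Theorem~7.17 of Rockafellar and Wets and invoked as a black box in the proof of \cref{thm:consistency}. So there is no paper proof to compare your argument against.

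Your outline captures the right mechanism: convexity yields local equi-Lipschitz bounds, which upgrade dense pointwise convergence to locally uniform convergence on the interior of the domain, and that in turn verifies both the liminf and limsup halves of epi-convergence. There is, however, a genuine gap in your cycle at the step (2) $\Rightarrow$ (1): you take $A$ dense in $\mathrm{int}(\dom h)$, but (1) as written demands $A$ dense in $\R^d$. When $\dom h \neq \R^d$ you must also exhibit points $v \notin \dom h$ at which $h_n(v) \to +\infty$, and condition (2) alone carries no such information. Concretely, with $d=1$, $h_n \equiv 0$, and $h(x)=0$ for $x\ge 0$, $h(x)=+\infty$ otherwise, condition (2) holds on every compact $C \subset (0,\infty)$, yet $h_n$ epi-converges to the zero function, not to $h$; so (2) $\Rightarrow$ epi, and hence your cycle, fails as stated. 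This suggests the lemma as transcribed is missing a side condition (e.g., matching domains, or $\mathrm{int}(\dom h)=\R^d$). In the paper's actual application this is harmless: the extended functions $\tilde H_t$ and $\tilde H$ share the domain $\R^n_{++}$, and the paper verifies (1) directly on $\Q^n$, including at points outside the domain where both sides equal $+\infty$. The two implications the paper actually uses --- (1) $\Rightarrow$ epi and epi $\Rightarrow$ (2) --- are precisely the directions your sketch handles correctly.
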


\begin{lemma}[Proposition 7.33,~\citet{rockafellar2009variational}] \label{lm:inf_conv}
    Let $h_n: \R^d \to \bar \R$, $h:\R^d\to \bar \R$ be closed and proper. If $h_n$ has bounded sublevel sets and $h_n \toepi h$, then $\inf_v h_n(v) \to \inf_v h(v)$.
\end{lemma}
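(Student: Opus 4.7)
The plan is to prove $\inf h_n \to \inf h$ by establishing $\limsup_n \inf h_n \le \inf h$ and $\liminf_n \inf h_n \ge \inf h$ separately. The two halves use the two halves of the epi-convergence definition (\cref{def:epiconvergence}), with the lower-bound half additionally requiring the level-boundedness hypothesis to extract a cluster point of approximate minimizers.

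For the upper bound, I would fix any $v \in \R^d$ with $h(v) < \infty$ (such $v$ exists because $h$ is proper). By condition (2) of \cref{def:epiconvergence}, there is a sequence $v_n \to v$ with $\limsup_n h_n(v_n) \le h(v)$. Since $\inf h_n \le h_n(v_n)$ for every $n$, taking limsup gives $\limsup_n \inf h_n \le h(v)$. Minimizing the right-hand side over $v$ yields $\limsup_n \inf h_n \le \inf h$. In particular, $\inf h < +\infty$ forces $\inf h_n$ to be eventually bounded above by any constant strictly larger than $\inf h$.

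For the lower bound, pick approximate minimizers $v_n$ satisfying $h_n(v_n) \le \inf h_n + 1/n$. Combining with the upper bound just proved, $h_n(v_n) \le \inf h + 2$ for all sufficiently large $n$. The hypothesis that $\{h_n\}$ has (uniformly) bounded sublevel sets then forces $\{v_n\}$ to lie eventually in a bounded set, so some subsequence $v_{n_k} \to v^\star \in \R^d$. Applying condition (1) of \cref{def:epiconvergence} to this subsequence gives $h(v^\star) \le \liminf_k h_{n_k}(v_{n_k}) = \liminf_k \inf h_{n_k}$. Since $\inf h \le h(v^\star)$, and since this argument can be repeated along any subsequence of $\{\inf h_n\}$ converging to its liminf, we conclude $\inf h \le \liminf_n \inf h_n$, which combined with the upper bound gives $\inf h_n \to \inf h$.

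The main obstacle I expect is the case where $\inf h = -\infty$ or where $h$ has no minimizer. If $\inf h = -\infty$, the upper-bound argument (applied to a sequence $v^{(m)}$ with $h(v^{(m)}) \to -\infty$) forces $\inf h_n \to -\infty$ directly; the properness assumption rules out $\inf h = +\infty$. If $\inf h$ is finite but unattained, the recovery-sequence step must be applied to a sequence of $\epsilon$-minimizers of $h$ and then $\epsilon \downarrow 0$; this is routine. A subtler point is the precise reading of ``bounded sublevel sets'': the argument really needs \emph{eventual} uniform level-boundedness of the family $\{h_n\}$ (i.e., for each $\alpha$, $\bigcup_{n \ge N}\{h_n \le \alpha\}$ is bounded for some $N$), so the first check when invoking the lemma in practice is to verify that this stronger form — not merely coercivity of each individual $h_n$ — holds in the application at hand.
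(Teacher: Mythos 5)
This lemma is quoted from \citet{rockafellar2009variational} and the paper supplies no proof of its own; your argument is the standard one behind that result (recovery sequences give $\limsup_n \inf h_n \le \inf h$, and equi-level-boundedness plus the liminf inequality applied at a cluster point of approximate minimizers gives the reverse), and it is correct, including the handling of $\inf h = -\infty$ and unattained infima. Your closing caveat is also the right one: the hypothesis must be read as eventual \emph{uniform} level-boundedness of the family $\{h_n\}$ rather than coercivity of each $h_n$ separately, and that stronger form is exactly what the paper verifies when it invokes the lemma in the proof of \cref{thm:consistency}, where for almost every sample path a single coercive minorant $\tfrac12\|\beta\|_\infty - \sum_i b_i \log \beta_i$ bounds $\tilde H_t$ from below for all $t \ge N_\omega$.
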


\begin{lemma}[Theorem 7.31,~\citet{rockafellar2009variational}] \label{lm:appro_sol_conv}
    Let $h_n: \R^d \to \bar \R$, $h:\R^d\to \bar \R$ satisfy $h_n \toepi$ and $-\infty < \inf h < \infty$. Let $S_{n}(\varepsilon)=\left\{\theta \mid h_{n}(\theta) \leq \inf h_{n}+\varepsilon\right\}$ and $S(\varepsilon)=\left\{\theta \mid h(\theta) \leq \inf h+\varepsilon\right\}$. Then $\limsup_n S_n(\varepsilon) \subset S(\varepsilon)$ for all $\varepsilon \geq 0$, and $\limsup_n S_n(\varepsilon_n) \subset S(0)$ whenever $\varepsilon_n \downarrow 0$.

\end{lemma}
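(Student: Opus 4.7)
The plan is to unpack the Painlev\'e--Kuratowski outer-limit description of $\limsup_n S_n(\varepsilon)$ (the set of cluster points of sequences $x_n \in S_n$) and then push an arbitrary such cluster point through the two defining properties of epi-convergence in \cref{def:epiconvergence}. The argument factors cleanly into (a) a pointwise liminf inequality on $h_{n_k}(x_{n_k})$, and (b) a one-sided comparison $\limsup_n \inf h_n \leq \inf h$.

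First, fix $x \in \limsup_n S_n(\varepsilon)$. By definition of the outer set limit, there exist $n_k \uparrow \infty$ and $x_{n_k} \in S_{n_k}(\varepsilon)$ with $x_{n_k} \to x$, so
\[
h_{n_k}(x_{n_k}) \leq \inf h_{n_k} + \varepsilon.
\]
Property~(1) of epi-convergence applied to the sequence $x_{n_k} \to x$ (along the subsequence $n_k$, which still epi-converges in the relevant sense) gives $h(x) \leq \liminf_k h_{n_k}(x_{n_k})$, hence $h(x) \leq \liminf_k \inf h_{n_k} + \varepsilon$.

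Next, I would establish $\limsup_n \inf h_n \leq \inf h$ using property~(2). Fix $\delta > 0$; since $\inf h$ is finite, choose $y \in \R^d$ with $h(y) \leq \inf h + \delta$. By~(2) there exists $y_n \to y$ with $\limsup_n h_n(y_n) \leq h(y)$, and since $\inf h_n \leq h_n(y_n)$ we get $\limsup_n \inf h_n \leq h(y) \leq \inf h + \delta$. Letting $\delta \downarrow 0$ yields $\limsup_n \inf h_n \leq \inf h$. Combining with the previous display, $h(x) \leq \inf h + \varepsilon$, so $x \in S(\varepsilon)$, proving the first inclusion. The same pipeline handles the vanishing-tolerance case: if $\varepsilon_n \downarrow 0$ and $x \in \limsup_n S_n(\varepsilon_n)$, then $x_{n_k} \to x$ with $h_{n_k}(x_{n_k}) \leq \inf h_{n_k} + \varepsilon_{n_k}$, whence
\[
h(x) \leq \liminf_k h_{n_k}(x_{n_k}) \leq \limsup_n \inf h_n + \limsup_k \varepsilon_{n_k} \leq \inf h,
\]
so $x \in S(0)$.

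The main obstacle I anticipate is not any single calculation but the careful bookkeeping around two different approximating sequences (the cluster-point sequence $x_{n_k}$ delivered by property~(1) and the recovery sequence $y_n$ delivered by property~(2)) and the choice to use the one-sided bound $\limsup_n \inf h_n \leq \inf h$ rather than the equality $\inf h_n \to \inf h$. The matching lower bound $\inf h \leq \liminf_n \inf h_n$ generally requires an additional tightness or level-boundedness assumption (cf.\ \cref{lm:inf_conv}), which is not hypothesized here; fortunately the one-sided inequality is all that is needed to close the argument for the outer-limit inclusion, and it is essential to notice this to avoid invoking an unavailable hypothesis.
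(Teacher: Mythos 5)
Your proof is correct. The paper does not prove this lemma at all --- it is imported verbatim as Theorem 7.31 of \citet{rockafellar2009variational} --- and your argument is essentially the standard one from that reference: the liminf half of epi-convergence (\cref{def:epiconvergence}) controls cluster points of $S_n(\varepsilon)$, while the recovery-sequence half yields the one-sided bound $\limsup_n \inf h_n \leq \inf h$, which, as you rightly observe, is all that is needed (the matching lower bound, and hence \cref{lm:inf_conv}, would require level-boundedness that is not assumed here).
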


\begin{lemma}[Theorem 5.7,~\citet{shapiro2021lectures}, Asymptotics of SAA Optimal Value] \label{lm:clt_optimal_value}
    Consider the problem $$\min_{x\in X}  f(x) = \E[F(x,\xi)]$$ where $X$ is a nonempty closed subset of $\R^n$, $\xi$ is a random vector with probability distribution
    $P$ on a set $\Xi$ and $F:X\times \Xi \to \R$. Assume the expectation is well-defined, i.e., $f(x)<\infty$ for all $x\in X$. Define the sample average approxiamtion (SAA) problem $$\min_{x\in X}  f_N(x) = \frac1N \sum_{i=1}^N F(x,\xi_i)$$ where $\xi_i$ are i.i.d.\ copies of the random vector $\xi$. Let $v_N$ (resp., $v^*$) be the optimal value of the SAA problem (resp., the original problem). Assume the following.
\begin{enumconditions}
    \item The set $X$ is compact. \label{it:lm:clt_optimal_value:1}
    \item For some point $x\in X$ the expectation $\E[F(x,\xi)\sq]$ is finite. \label{it:lm:clt_optimal_value:2}
    \item There is a measurable function $C:\Xi \to \R_+$
    such that $\E[C(\xi)\sq] < \infty$ and $\left|F(x, \xi)-F\left(x^{\prime}, \xi\right)\right| \leq C(\xi)\left\|x-x^{\prime}\right\|$ for all $x, x' \in X$ and almost every $\xi \in \Xi$.
    \label{it:lm:clt_optimal_value:3}
    \item The function $f$ has a unique minimizer $x^*$ on $X$. \label{it:lm:clt_optimal_value:4}
\end{enumconditions}
Then $${v}_{N}=  {f}_{N}(x^*)+o_{p}(N^{-1 / 2})
\;, 
\quad \sqrt{N} ( v_N - v^*) \tod N\big(0, \var
\big({F(x^*, \xi)}\big) \big)
\;.
$$ 
\end{lemma}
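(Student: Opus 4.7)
The plan is to reduce the asymptotics of the optimal value $v_N$ to those of the scalar empirical average $f_N(x^*) = \frac{1}{N}\sum_{i=1}^{N} F(x^*,\xi_i)$, which under Conditions~\ref{it:lm:clt_optimal_value:2}--\ref{it:lm:clt_optimal_value:3} has finite variance $\operatorname{Var}(F(x^*,\xi))$ and hence obeys the ordinary CLT
\[ \sqrt{N}\bigl(f_N(x^*) - v^*\bigr) \tod N\bigl(0,\operatorname{Var}(F(x^*,\xi))\bigr). \]
Everything reduces to establishing the expansion $v_N = f_N(x^*) + o_p(N^{-1/2})$, after which the stated CLT follows by Slutsky.

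The first step is consistency of empirical minimizers. Compactness (Condition~\ref{it:lm:clt_optimal_value:1}) together with the Lipschitz hypothesis (Condition~\ref{it:lm:clt_optimal_value:3}) makes the family $\mathcal{F}=\{F(x,\cdot):x\in X\}$ a Glivenko-Cantelli class, so $\sup_{x\in X}|f_N(x)-f(x)| \to 0$ almost surely. Combined with uniqueness of $x^*$ (Condition~\ref{it:lm:clt_optimal_value:4}), a standard subsequence argument forces any $\hat{x}_N \in \arg\min_X f_N$ to satisfy $\hat{x}_N \to x^*$ almost surely.

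The second and essential step is stochastic equicontinuity of the empirical process $\nu_N(x)=\sqrt{N}(f_N(x)-f(x))$ at $x^*$. The Lipschitz hypothesis with square-integrable constant $C(\xi)$ and compact $X \subset \R^n$ yields a bracketing number bound of order $(\operatorname{diam}(X)/\epsilon)^n$ for $\mathcal{F}$ with square-integrable envelope $C(\xi)\operatorname{diam}(X)$, whose bracketing-entropy integral is finite. Thus $\mathcal{F}$ is $P$-Donsker and in particular asymptotically equicontinuous, giving $\nu_N(\hat{x}_N) - \nu_N(x^*) = o_p(1)$. Combining this with the sandwich
\[ v^* + N^{-1/2}\nu_N(\hat{x}_N) \leq v_N = f_N(\hat{x}_N) \leq f_N(x^*) = v^* + N^{-1/2}\nu_N(x^*), \]
where the leftmost inequality uses $f(\hat{x}_N)\geq v^*$, yields the desired $v_N = f_N(x^*) + o_p(N^{-1/2})$.

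The main obstacle will be the stochastic equicontinuity step, where the empirical-process machinery (bracketing entropy and the Donsker theorem) is needed; once it is in place, the expansion combined with the ordinary CLT for the iid sum $f_N(x^*)$ delivers both claims in the lemma.
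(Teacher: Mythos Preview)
The paper does not provide its own proof of this lemma; it is stated as Theorem~5.7 of Shapiro et al.\ (2021) and invoked as a black box in the proof of Theorem~\ref{thm:normality}. Your proposal is a correct and self-contained argument: consistency of $\hat x_N$ via uniform convergence (the Lipschitz class on compact $X$ is Glivenko--Cantelli), stochastic equicontinuity of $\nu_N$ via the Donsker property (the bracketing numbers grow polynomially in $1/\epsilon$, so the bracketing-entropy integral is finite), and then the sandwich $\nu_N(\hat x_N)\le\sqrt{N}(v_N-v^*)\le\nu_N(x^*)$ combined with $\nu_N(\hat x_N)-\nu_N(x^*)=o_p(1)$ yields $v_N=f_N(x^*)+o_p(N^{-1/2})$; the ordinary CLT for the i.i.d.\ sum $f_N(x^*)$ finishes. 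Since the paper simply cites the result, there is no in-paper proof to compare against.
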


\section{Proof of Theorem~\ref{thm:consistency}} 
\label{sec:proof:thm:consistency}
\begin{proof}[Proof of \cref{thm:consistency}]

    We show epi-convergence (see \cref{def:epiconvergence}) of $H_t$ to $H$.
    Epi-convergence is closely related to the question of whether we have convergence of the set of minimizers. In particular, epi-convergence is a suitable notion of convergence under which one can guarantee that the set of minimizers of the sequence of approximate optimization problems converges to the minimizers of the original problem.

    To work under the framework of epi-convergence, we extend the definition of $H_t$ and $H$ to the entire Euclidean space as follows. 
    We extend $\log$ to the entire real by defining $\log(x) = -\infty$ if $x<0$.
    Let
    \begin{align*}
        \tilde F(\beta,\theta) = 
        \begin{cases}
            F(\beta,\theta) = \max_i v_i(\theta)\betai - \sumiton b_i \log \betai & 
            \text{if $\beta \in \Rnpp$}
            \\
            + \infty & \text{else}
        \end{cases},
    \end{align*}
    and
    \begin{align*}
            \tilde H(\beta): \R^n \to \bar \R,  \beta \mapsto \begin{cases} H(\beta) & \text{if $\beta \in \Rnpp$} \\
            +\infty & \text{else} \end{cases},
            \quad 
            \tilde H_t(\beta): \R^n \to \bar \R,  \beta \mapsto \begin{cases} H_t(\beta) & \text{if $\beta \in \Rnpp$} \\
            +\infty & \text{else} \end{cases}.
    \end{align*}
    It is clear that for $\beta \in \Rn$ it holds $\tilde H(\beta) = \E[\tilde F(\beta,\theta)]$ and $\tilde H_t(\beta) = \frac1t \sumtau \tilde F(\beta,\thetau)$.
    In order to prove the result, we will invoke Lemmas~\ref{lm:def_of_epiconv}, \ref{lm:inf_conv}, and \ref{lm:appro_sol_conv}.
    To invoke those lemmas, we will need the following four properties that we each prove immediately after stating them.

    \begin{enumerate}
        \item {Check that $ \tilde H$ is closed, proper and convex, and $\tilde H_t$ is closed, proper and convex almost surely.} Convexity and properness of the functions $ \tilde H_t$ and $\tilde H$ is obvious.
        Recall for a proper convex function, closedness is equivalent to lower semicontinuity \citep[Page 52]{rockafellar1970convex}. 
        It is obvious that $\tilde H_t$ is continuous and thus closed almost surely. 

        It remains to verify lower semicontinuity of $\tilde H$,
        i.e., for all $\beta \in \Rn$, $\liminf_{\beta' \to \beta}\tilde H(\beta') \geq \tilde H(\beta)$.
        For any $\beta \in \Rn$, we have that 
        $\tilde f(\beta,\theta) \defeq \max_i \vithe \beta_i  + \delta_{\Rnp}(\beta)\geq 0$, where $\delta_A(\beta) = \infty $ if $\beta\notin A$ and $0$ if $\beta\in A$. 
        With this definition of $\tilde f$ we have 
        $\tilde F(\beta,\theta) =\tilde f(\beta,\theta)  - \sumiton b_i \log \beta_i$.
        Applying Fatou's lemma (for extended real-valued random variables), we get 
        $\liminf_{\beta'\to\beta} \E[\tilde f(\beta',\theta) ]\geq \E[\liminf_{\beta' \to \beta} \tilde f(\beta',\theta) ] \geq  \E[\tilde f(\beta,\theta) ]$ where in the last step we used lower semicontinuity of $\beta \mapsto \tilde f(\beta,\theta) $.
        And thus 
        \begin{align*}
            & \liminf_{\beta'\to\beta} \tilde H(\beta') 
            \\
            & = \liminf_{\beta'\to\beta}\E \bigg[\tilde f(\beta',\theta)   - \sumiton b_i \log \beta'_i \bigg] 
            \\
            & \geq \liminf_{\beta'\to\beta}\E[ \tilde f(\beta',\theta)  ] - \sumiton b_i \log \betai 
            \\
            & \geq  \E \bigg[ \tilde f(\beta,\theta)   - \sumiton b_i \log \betai \bigg]
            \\
            & =  \tilde H(\beta) 
        \end{align*}
        This shows $\tilde H$ is lower semicontinuous.
        \label{it:consistency:1}
        
        \item {Check $\tilde H_t$ pointwise converges to $\tilde H$ on $\Q^n$.}
        Let $\Q^n$ be the set of $n$-dimensional vectors with rational entries.
        For a fixed $\beta \in \Rn$, define the event $E_\beta \defeq\{   \lim_{t\to \infty} \tilde H_t(\beta)  = \tilde H (\beta)\}$.
        Since $\vithetau \leq \vbar$ almost surely by assumption, the strong law of large numbers implies that $\P( E_\beta) = 1$. Define 
        \begin{align*}
            E\defeq \Big\{ \lim_{t\to \infty} \tilde H_t(\beta)  = \tilde H (\beta), \text{ for all } \beta \in \Q^n \Big\} = \bigcap_{\beta \in \Q^n} E_\beta.
        \end{align*}
        Then by a union bound we obtain $\P(E^c) = \P(\bigcup_{\beta \in \Q^n} E_\beta^c) \leq \sum_{\beta \in \Q^n} \P(E_\beta^c)= 0$, implying $E$
        has measure one.
        \label{it:consistency:2}

        \item {Check $-\infty < \inf_\beta \tilde H < \infty$.} This is obviously true since valuations are bounded.
        \label{it:consistency:3}

        \item {Check that for almost every sample path $\omega$, $\tilde H_t$ has bounded sublevel sets (eventually).} 
        By \cref{lm:coercivity_level_boundedness}, this property is equivalent to eventual coerciveness of $\tilde H_t$, i.e., there is a (random) $N$ such that for all $t \geq N$, it holds $\lim_{\|\beta \| \to \infty} \tilde H_t (\beta) = +\infty $.
        By \cref{lm:value_concentration}, we know for almost every $\omega$, there is a finite constant $N_\omega$ such that for all $t\geq N_\omega$ it holds $\vbarit \geq 1/2$. Then it holds for this $\omega$, all $t\geq N_\omega$, and all $\beta\in \Rn$, 
        \$
            \tilde H_t(\beta) 
            & = 
            \frac1t \sumtau \max_i \vithetau  \beta_i -\sumiton b_i\log \beta_i 
            \\
            & \geq \max_i (\vbarit \beta_i)-\sumiton b_i\log \beta_i  
            \\
            & \geq \frac12 \|\beta\|_\infty -\sumiton b_i\log \beta_i  \to +\infty \quad \text{as $\|\beta\|\to \infty$}
            \;.
        \$
        This implies $\tilde H_t$ has bounded sublevel sets.
        \label{it:consistency:4}

    \end{enumerate}

    With the above \cref{it:consistency:1} and \cref{it:consistency:2}  we invoke \cref{lm:def_of_epiconv} and obtain that 
    \#\P\big( \tilde H_t(\beta) \toepi \tilde H(\beta) \big) = 1
    \;,
    \label{eq:epi_conv}\# 
    and that the convergence is uniform on any compact set.

    The epi-convergence result \cref{eq:epi_conv} along with \cref{it:consistency:4} allows us to invoke \cref{lm:inf_conv} and obtain 
    \#\inf_{\beta \in \Rn} \tilde H_t(\beta) \to \inf_{\beta \in \Rn} \tilde H(\beta) \text{ a.s.} \label{eq:min_conv}\#
    which also implies $\inf_{\Rnpp} H_t \to \inf_\Rnpp H$ a.s.

    With the epi-convergence result \cref{eq:epi_conv} along with \cref{it:consistency:3} we invoke \cref{lm:appro_sol_conv} and obtain 
    \begin{equation} \label{eq:containment}
        \begin{split} 
            \limsup _{t} \cB^\gam(\epsilon) \subset \cB^*(\epsilon) \text { for all } \epsilon \geq 0
            \;,\\
        \limsup _{t} \cB^\gam(\epsilon_t) \subset \cB^*(0) \text { for all } \epsilon_t \downarrow 0
        \;.
        \end{split}
    \end{equation}

    \underline{Putting together.}
    At this stage all statements in the theorem are direct implications of the above results. 
    
    \emph{Proof of \cref{it:thm:consistency:1}}
    
    Convergence of Nash social welfare follows from \cref{eq:min_conv} and strong duality, i.e., $\LNSW^\gam = \inf_{\beta \in \Rnpp } H_t(\beta) + \sumiton (b_i\log b_i - b_i)$ and $\LNSW^* = \inf_{\beta \in \Rnpp } H(\beta)+\sumiton (b_i\log b_i - b_i)$. 

    \emph{Proof of \cref{it:thm:consistency:2}}
    
    Now we show consistency of the pacing multiplier via 
    \cref{lm:def_of_epiconv} and \cref{lm:inf_conv}. 
    Recall the compact set $C = \prod_{i=1}^n  [\ubarbetai/2, 2\betabar] = \prod_{i=1}^n [b_i/2, 2] \subset \R^n$. 
    By construction, $\betast \in C$. 
    First note that for almost every sample path~$\omega$, $1/2 \leq \vbarit \leq 2$ eventually, and thus $\beta^\gamma_i = b_i / u^\gam_i \leq b_i / (b_i \vbarit ) \leq 2$ and $\beta^\gam_i \geq b_i/2$ eventually. 
    So $\betagam \in C$ eventually.
    Now we can invoke \cref{lm:def_of_epiconv} Item (2) to get 
    \begin{align}
        \lim_{t\to\infty}\sup_{\beta \in C} | H_t(\beta) - H(\beta)| \to 1 \quad\text{a.s.}
    \label{eq:as conv ht h}
    \end{align}
    Now we can show that the value of $H$ on the sequence $\beta^\gam$ converges to the value at $\beta^*$:
    \$ 0\leq \lim_{t \to \infty}  H(\beta^\gam) - H(\betast) = \lim_{t \to \infty} [H (\beta^\gam) - H_t(\beta^\gam)] + \lim_{t \to \infty} [H_t(\beta^\gam) - H(\betast)] = 0
    \;.
    \$
    Here the first term tends to zero due to \eqref{eq:as conv ht h},
    and the second term by \cref{eq:min_conv}. 
    For any limit point of the sequence $\{\beta^\gam\}_t$, $\beta^\infty$, by 
    lower semicontinuity of $H$,
    \$ 0 &\leq H(\beta^\infty) - H(\betast) \leq \liminf_\ttinf H(\beta^\gam) - H(\betast) = 0
    \;.
    \$
    So it holds that $H(\beta^\infty) = H(\betast)$ for all limit points $\beta^\infty$. By uniqueness of the optimal solution $\betast$ (see \cref{fact:pop_eg}), we have $\beta^\gam \to \betast$ a.s. 
    
    \emph{Proof of \cref{it:thm:consistency:4}}
    
    Convergence of approximate equilibrium follows from \cref{eq:containment}.
    \end{proof}

\section{Proof of Theorem~\ref{thm:lnsw_concentration}}
\label{sec:proof:thm:lnsw_concentration}
\begin{proof}[Proof of \cref{thm:lnsw_concentration}]

    Recall 
    the set $C 
    = \prod_{i=1}^n  [\ubarbetai/2, 2\betabar]
    = \prod_{i=1}^n [b_i/2, 2] \subset \R^n$ and
    the event $A_t = \{\betagam \in C \}$. 
    By \cref{lm:value_concentration} we know that if $t \geq 2 {\vbarsq {\log(4n/\eta)}}$ then 
    event $A_t$ happens
    with probability $\geq 1-\eta/2$. 
    Now the proof proceeds in two steps.

    \paragraph{Step 1. A covering number argument. }
    Let $\cB^o$ be an $\epsilon$-covering of the compact set $C$, i.e, for all $\beta \in C$ there is a $\beta^o(\beta) \in \cB^o$ such that $\| \beta - \beta^o(\beta) \|_\infty \leq \epsilon$. It is easy to see that such a set can be chosen with cardinality bounded by $|\cB^o|\leq (2/\epsilon)^n$. 

    Recall $H_t$ and $H$ are $L$-Lipschitz w.r.t.\ $\ell_\infty$-norm on $C$.
    Using this fact we get the following uniform concentration bound over the compact set $C$. 
    \begin{align*}
        & \sup_{\beta \in C} |H_t (\beta) - H(\beta)|
        \\
        & \leq \sup_{\beta \in C} \big\{  |H_t(\beta) - H_t(\beta^o(\beta))|
        + |H(\beta) - H(\beta^o(\beta))|
        + |H_t(\beta^o(\beta)) - H(\beta^o(\beta))|\big\}
        \\
        & \leq 2(\vbar + 2n) \epsilon + \sup_{\beta^o \in \cB^o}|H_t(\beta^o) - H(\beta^o)|
        \;.
    \end{align*}

    Next we bound the second term in the last expression. For some fixed $\beta \in C$, let $X^\tau \defeq \max_i \vithetau \beta_i$ and let its mean be $\mu$. Note $0 \leq X^\tau  \leq \vbar \|\beta\|_\infty \leq 2\vbar$ due to $\beta \in C$. So $X^\tau$'s are bounded random variables. By Hoeffding's inequality we have
    \begin{align*}
        \P\big(|H_t(\beta) - H(\beta)| \geq \delta\big)
        = \P\bigg( \Big|\frac1t \sumtau X^\tau - \mu\Big| \geq \delta\bigg)
        \leq 2\exp\bigg(-\frac{\delta\sq t }{2\vbarsq}\bigg)
        \;.
    \end{align*}
    By a union bound we get
    \begin{align*}
        \P\bigg(\sup_{\beta^o \in \cB^o}|H_t(\beta^o) - H(\beta^o)| \geq \delta\bigg)
        \leq 2|\cB^o| \exp\bigg(-\frac{\delta\sq t }{2\vbarsq}\bigg) \leq 2\exp\bigg(-\frac{\delta \sq t }{2\vbarsq} + n\log(2/\epsilon)\bigg)
        \;.
    \end{align*}
    Define the event 
    \begin{align}
        \label{eq:def:eventEt}
    E_t \defeq \Big\{\sup_{\beta^o \in \cB^o}|H_t(\beta^o) - H(\beta^o)| 
    \leq \frac{2\vbar}{\sqrt{t}} \sqrt{\log(4/\eta) + n\log(2/\epsilon)} 
    =: \iota \Big\} 
    \;.
    \end{align}
    By setting $2\exp(-{\delta \sq t }/{(2\vbarsq)} + n\log(2/\epsilon)) = \eta/2$ and solving for $\eta$, we have that $\P(E_t) \geq 1-\eta/2$.

    \paragraph{Step 2. Putting together. }
    Recall the event $A_t = \{ \betagam \in C   \}$.
    Now let events $A_t$ and $E_t$ hold. Note $\P(A_t \cap E_t) \geq 1-\eta$ if $t \geq 2 {\vbarsq {\log(4n/\eta)}}$. Then 
    \begin{align}
        & \Big| \sup_{\beta\in\Rnpp} H_t(\beta) - \sup_{\beta\in\Rnpp} H(\beta) \Big| 
        \notag
        \\
        & = \Big| \sup_{\beta\in C} H_t(\beta) - \sup_{\beta\in C} H(\beta)\Big|  
        \notag
        \\
        & \leq \sup_{\beta \in C} |H_t(\beta)-H(\beta)| 
        \notag
        \\
        & \leq 2(\vbar + 2n) \epsilon + \iota 
        \;,
        \label{eq:08251314}
    \end{align}
    where the first equality is due to event $A_t$ and the last inequality is due to event $E_t$ defined in \cref{eq:def:eventEt}.
    Now we choose the discretization error as $\epsilon = \frac{1}{\sqrt{t} (\vbar + 2n)}$. 
    Then, the expression in \cref{eq:08251314} can be upper bounded as follows.
    \begin{align*}
        & 2(\vbar + 2n) \epsilon + \iota 
        \\
        & = \frac{2}{\sqrt{t}} + 
        \frac{2\vbar}{\sqrt t}  \sqrt{\log(4/\eta) + {n} {\log(2\sqrt{t} (\vbar + 2n))}}
        \;.
    \end{align*}
    This completes the proof.
\end{proof}

\section{Proof of Theorem~\ref{thm:high_prob_containment}}
\label{sec:proof:thm:high_prob_containment}
    \begin{proof}[Proof of \cref{thm:high_prob_containment}]

    The proof idea of this theorem closely follows Section 5.3 of~\citet{shapiro2021lectures}.

    We first need some additional notations. Define the approximate solutions sets of surrogate problems as follows: For a closed set $ A\subset \Rnpp$, let 
    \begin{align*}
        \cB^*_A(\epsilon) &\defeq \{ \beta \in A: H(\beta) \leq \min_A H  +\eps  \}
        \;,
        \\
        \cB^\gam_A(\epsilon) &\defeq\{ \beta \in A: H_t(\beta) \leq \min_A H_t  +\eps \}
        \;.
    \end{align*}

    In words, they solve the surrogate optimization problems which are defined with a new constraint set $A$. Note that if $\betast \in A$ then $\cB^*_A(\epsilon) = A\cap \cB^*(\eps)$. Recall on the compact set $C$, both $H_t$ and $H$ are $L$-Lipschitz and $\lambda$-strongly convex w.r.t\ the $\ell_\infty$-norm, where $L = (\vbar + 2n)$ and $\lambda = \ubar{b}/4$.

    Let $r \defeq \sup\{ H(\beta) - H^*: \beta \in C \}$. Then if $\epsilon \geq r$ then $C\subset \cBst(\epsilon)$ and the claim is trivial. Now we assume $\epsilon < r$. 
    
    Define $a = \min\{2\epsilon, (r + \eps)/2 \}$. Note $\eps < a < r$. Define $S = C\cap \cBst(a)$. The role of $S$ will be evident as follows. We will show that, with high probability, the following chain of inclusions holds 
    \begin{align*}
        \cBgamC(\delta) \overset{(1)}{\subset} \cBgamS (\delta) \overset{(2)}{\subset} \cBstS(\epsilon) \overset{(3)}{\subset} \cBstC(\epsilon) \, .
    \end{align*}
   
    \textbf{Step 1. Reduction to discretized problems.}
    We let $S'$ be a $\nu$-cover of the set $S = \cBst(a) \cap C$.
    Let $X = S' \cup \{\betast \}$. 
    In this part the goal is to show 
    \begin{align*}
        \P\big(     \cBgamC(\delta)\subset \cBstC(\eps)
        \big) 
        \geq 
        \P\big(\cB^\gam_X(\delta')  \subset \cB^*_X(\epsilon') \big)
    \end{align*}
    where 
    \begin{align*}
        \nu = (\eps' -\delta')/4 > 0 \,, \quad \delta' =  \delta + L\nu > 0
        \,, \quad \epsilon' = \epsilon - L\nu >0
        \,.
    \end{align*}

    First, we claim
\begin{claim} \label{claim:1}
It holds    $
    \cB^\gam_X(\delta')  \subset \cB^*_X(\epsilon') 
    \implies \cBgamS (\delta) \overset{}{\subset} \cBstS(\epsilon)
    $ (Inclusion (2)).
\end{claim}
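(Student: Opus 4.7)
The plan is to prove Claim~\ref{claim:1} by a standard discretization/Lipschitz argument, translating an approximate minimizer in the finer set $S$ into one in the discretized set $X$, applying the hypothesized inclusion, and then translating back.

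First I would fix any $\beta \in \cBgamS(\delta)$ and choose $\beta^o \in S'$ with $\|\beta - \beta^o\|_\infty \leq \nu$, which is possible by the covering property of $S'$. Then $\beta^o \in X$. To apply the hypothesis I need to verify that $\beta^o \in \cB^\gam_X(\delta')$. Since $X \subset S$ (because $S' \subset S$ is a net in $S$ and $\betast \in S = C \cap \cBst(a)$, as $\betast \in C$ and $H(\betast) = H^* \leq H^* + a$), I have the easy inequality $\min_S H_t \leq \min_X H_t$. Combining this with the $L$-Lipschitzness of $H_t$ on $C$ and the fact that $\beta \in \cBgamS(\delta)$, I get
\begin{align*}
H_t(\beta^o) \leq H_t(\beta) + L\nu \leq \min_S H_t + \delta + L\nu \leq \min_X H_t + \delta',
\end{align*}
which shows $\beta^o \in \cB^\gam_X(\delta')$.

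Now the hypothesis of the claim gives $\beta^o \in \cB^*_X(\epsilon')$, i.e., $H(\beta^o) \leq \min_X H + \epsilon'$. The key simplification at this step is that $\min_X H = \min_S H$: since $\betast \in X \subset S$ and $\betast$ is the global minimizer of $H$, the minimum of $H$ over $X$ equals $H(\betast)$, which equals the minimum over $S$ as well. Using $L$-Lipschitzness of $H$ on $C$ once more,
\begin{align*}
H(\beta) \leq H(\beta^o) + L\nu \leq \min_X H + \epsilon' + L\nu = \min_S H + \epsilon,
\end{align*}
which is precisely $\beta \in \cBstS(\epsilon)$. This is what inclusion~(2) requires.

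The main obstacle is bookkeeping: one must check that the parameters $\nu = (\epsilon' - \delta')/4$, $\delta' = \delta + L\nu$, $\epsilon' = \epsilon - L\nu$ propagate correctly through both directions of the Lipschitz estimates, and that the inclusions $S' \subset S$ and $\betast \in X$ are genuinely used to identify $\min_X H$ with $\min_S H$ and to relate $\min_S H_t$ with $\min_X H_t$. Once these set-containment facts are in place, the rest is routine arithmetic with the Lipschitz constant.
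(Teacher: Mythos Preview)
Your proof is correct and follows essentially the same route as the paper: discretize $\beta$ to $\beta^o \in X$ via the $\nu$-cover, push it into $\cB^\gam_X(\delta')$ using $L$-Lipschitzness of $H_t$ and $X\subset S$, apply the hypothesis, then transfer back to $\cBstS(\epsilon)$ using $L$-Lipschitzness of $H$ and $\min_X H = H^* = \min_S H$ (since $\betast\in X$). The only cosmetic difference is that the paper routes the final step through $\cBstC(\epsilon)$ before noting $\beta\in S$ forces $\beta\in\cBstS(\epsilon)$, whereas you go straight to $\min_S H$; both are equivalent.
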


    Next, we show 
    \begin{claim}\label{claim:2}
        Inclusion (2) implies Inclusion (1):
    $  \cBgamS (\delta) \overset{}{\subset} \cBstS(\epsilon)
    \implies \cBgamC(\delta) \overset{}{\subset} \cBgamS (\delta)\,.$   
    \end{claim}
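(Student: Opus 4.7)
}
The strategy is to prove the stronger statement $\cBgamC(\delta) \subset S$ by contradiction, after which the desired containment $\cBgamC(\delta) \subset \cBgamS(\delta)$ follows trivially by noting that the minimizer of $H_t$ over $C$ must then lie in $S$, so $\min_C H_t = \min_S H_t$. The three ingredients I will rely on are convexity of $H_t$, convexity of both $C$ (a box) and $S = C \cap \cBst(a)$ (the intersection of a box with a sublevel set of the convex function $H$), and the strict inequality $\epsilon < a$ built into the choice $a = \min\{2\epsilon,(r+\epsilon)/2\}$, which guarantees a positive slack $a - \epsilon$ between the boundary of $\cBstS(\epsilon)$ and the ``$H$-level'' portion of $\partial S$.

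Concretely, suppose for contradiction that some $\beta_0 \in \cBgamC(\delta)$ lies outside $S$, and pick any $\beta' \in \cBgamS(\delta)$ (this set is nonempty since $S$ is compact and $\argmin_S H_t \in \cBgamS(0) \subset \cBgamS(\delta)$). Convexity of $C$ forces the segment $[\beta',\beta_0]$ to lie in $C$. Since $\beta' \in S$ and $\beta_0 \notin S$, this segment must exit $S$; because it never leaves $C$, the only way to exit $S = C \cap \cBst(a)$ is via $H$ rising above $H(\betast) + a$, so there is a crossing point $\beta_1 = (1-\mu)\beta' + \mu\beta_0$ with $\mu \in (0,1]$ and $H(\beta_1) = H(\betast) + a$. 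Convexity of $H_t$, together with the bounds $H_t(\beta') \leq \min_S H_t + \delta$ and $H_t(\beta_0) \leq \min_C H_t + \delta \leq \min_S H_t + \delta$ (the latter using $S \subset C$), gives $H_t(\beta_1) \leq \min_S H_t + \delta$, so $\beta_1 \in \cBgamS(\delta)$. Applying the hypothesis $\cBgamS(\delta) \subset \cBstS(\epsilon)$ then yields $H(\beta_1) \leq H(\betast) + \epsilon$, contradicting $H(\beta_1) = H(\betast) + a > H(\betast) + \epsilon$.

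Once $\cBgamC(\delta) \subset S$ is established, $\argmin_C H_t \in S$ and hence $\min_C H_t = \min_S H_t$; then any $\beta \in \cBgamC(\delta)$ satisfies $\beta \in S$ and $H_t(\beta) \leq \min_C H_t + \delta = \min_S H_t + \delta$, i.e., $\beta \in \cBgamS(\delta)$, as required. The one subtle point to justify carefully is that the segment $[\beta',\beta_0]$ exits $S$ only through the curved level-set portion of $\partial S$, never through $\partial C$; this is immediate because both endpoints lie in the convex set $C$, so the segment cannot touch $\partial C$ in a way that forces an exit from $S$. Notably the argument is purely convex-geometric and uses no concentration or smoothness of $H_t$ beyond convexity, which is why Claim~\ref{claim:1} (Lipschitz discretization) and Claim~\ref{claim:2} (this claim) split the work cleanly between stochastic and deterministic content.
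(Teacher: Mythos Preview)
Your proof is correct and follows essentially the same approach as the paper: both pick a point $\beta_0\in\cBgamC(\delta)\setminus S$, join it by a segment to a point of $\cBgamS(\delta)$ inside $S$, use convexity of $H_t$ to keep the segment inside $\cBgamS(\delta)$, and then find an intermediate point on the segment whose $H$-value lies in the forbidden range $(\epsilon,a]$, contradicting the hypothesis $\cBgamS(\delta)\subset\cBstS(\epsilon)$. The only cosmetic differences are that the paper takes $\beta'=\argmin_S H_t$ specifically and uses the intermediate value theorem to land strictly between the levels $\epsilon$ and $a$, whereas you allow any $\beta'\in\cBgamS(\delta)$ and take the boundary crossing where $H(\beta_1)=H(\betast)+a$; both choices work.
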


    Proofs of \cref{claim:1} and \cref{claim:2} are deferred after the proof of \cref{thm:high_prob_containment}. At a high level, \cref{claim:1} uses the covering property of the set $X$. \cref{claim:2} exploits convexity of the problem.

    Finally, we show Inclusion (3) $ \cBstS(\epsilon) \overset{}{\subset} \cBstC(\epsilon)$.  
    Note that $\betast$ belongs to both $C$ and $S$. And thus for any $\beta \in \cBstS (\epsilon)$, it holds $H(\beta) \leq \min_X H + \eps = \Hst + \eps = \min_S H + \eps \,.$ We obtain $\beta \in \cBstC(\epsilon)$.

    To summarize, \cref{claim:1} shows that $\cB^\gam_X(\delta')  \subset \cB^*_X(\epsilon') $ implies Inclusion (2).
    Inclusion (3) holds automatically.
    By \cref{claim:2} we know Inclusion (2) implies Inclusion (1). So it holds deterministically that 
    \begin{align*}
       \{\cB^\gam_X(\delta')  \subset \cB^*_X(\epsilon') \} 
       \subset \{ \cBgamC(\delta)\subset  \cBstC(\eps) \} 
       \;.
    \end{align*}

    \textbf{Step 2. Probability of inclusion for discretized problems.}
    Now we bound the probability $\P(\cB^\gam_X(\delta')  \subset \cB^*_X(\epsilon') )$.

    For now, we forget the construction $X = S' + \{ \betast\}$ where $S'$ is a $\nu$-cover of $S$.
    Let $X\subset C$ be any discrete set with cardinality $|X|$.

    Let $\betastX \in \argmin_X H$ be a minimizer of $H$ over the set $X$. For $\beta \in X$ define the random variable $Y^\tau_\beta \defeq F(\betastX, \thetau) - F(\beta, \thetau)$.  Also let $\mu_\beta \defeq \E[Y^\tau_\beta]$, which is well-defined by the i.i.d.\ item assumption. Let $D\defeq \sup_{\beta \in X} \| \beta - \betastX\|_\infty$.

    Consider any $0 \leq \delta ' < \epsilon'$. If $X- \cB^*_X(\epsilon')$ is empty, then all elements in $X$ are $\epsilon'$-optimal for the problem $\min_X H$. Next assume $X- \cB^*_X(\epsilon')$ is not empty. We upper bound the probability of the event $\cB^\gam_X(\delta') \not \subset \cB^*_X(\epsilon')$.
    \begin{align}
        &\P \big(\cB^\gam_X(\delta') \not \subset \cB^*_X(\epsilon')\big)
        \notag
        \\
        &= \P \big(\text{there exists } \beta \in X- \cB^*_X(\epsilon'),\,  H_t(\beta)\leq H_t(\betastX) + \delta'\big)
        \notag
        \\
        &\leq \sum_{\beta \in X- \cB^*_X(\epsilon')} \P \big(H_t(\beta)\leq H_t(\betastX) + \delta'\big)
        \notag
        \\
        &= \sum_{\beta \in X- \cB^*_X(\epsilon')} \P\bigg(\frac1t \sumtau Y^\tau_\beta \geq - \delta'\bigg)
        \notag
        \\
        &\leq \sum_{\beta \in X- \cB^*_X(\epsilon')} \P\bigg(\frac1t \sumtau Y^\tau_\beta - \mu_\beta \geq \eps' - \delta'\bigg) \tag{A}
        \\
        &\leq \sum_{\beta \in X- \cB^*_X(\epsilon')} \exp\Big(-\frac{2t (\eps'-\delta')\sq}{L\sq \|\beta - \betast \|_\infty\sq}\Big) \tag{B}
        \\ 
        &\leq |X| \exp\Big(-\frac{2t (\eps'-\delta')\sq}{L\sq \|\beta - \betastX \|_\infty\sq}\Big) \,.
        \label{eq:bound_discretization}
    \end{align}
    Here in (A) we use the fact that $\mu_\beta = H(\betastX) - H(\beta) > - \eps' $ for $\beta \in X- \cB^*_X(\epsilon')$. In (B), using $L$-Lipschitzness of $H$ on the set $C$, we obtain $|Y^\tau_\beta| \leq L \|\beta - \betastX \|_\infty$ and then apply Hoeffding's inequality for bounded random variables. Setting \cref{eq:bound_discretization} equal to $\alpha$ and solving for $t$, we have that if 
    \begin{align}\label{eq:t_for_discrete_X}
        t \geq \frac{L\sq D\sq}{2(\eps' - \delta')\sq } \Big(\log|X| + \log\frac1\alpha\Big) \,,
    \end{align}
    then $\P\big(\cB^\gam_X(\delta') \not \subset \cB^*_X(\epsilon')\big) \leq \alpha$. Note the above derivation applies to any finite set $X\subset S$.

    Now we use the construction $X= S' + \{ \betast\}$.
    Then the cardinality of $X$ can be upper bounded by $(4/\nu)^n$. Note since $\betast \in X$ it holds $\betast = \betastX$.
    We apply the result in \cref{eq:t_for_discrete_X} with the following parameters
    \begin{align*}
       & \nu = (\eps' -\delta')/(4L)
       \,, \quad 
       \delta' =  \delta + L\nu
       \,, \quad 
       \epsilon' = \epsilon - L\nu
       \,, \quad 
       \epsilon' - \delta' = \frac12(\epsilon - \delta)\,,
    \\
       & D =  \min \{\sqrt{2a/\lambda} , 2 \}
       \,, 
       \quad |X| \leq 
       \Big(\frac{16L}{\eps - \delta}\Big)^n 
       \,.
    \end{align*}
    We justify the choice of $D$.   First, $S \subset C$ implies $D \leq 2$.
    By the $\lambda$-strong convexity of $H$ on $C$: for all $\beta \in X\subset S \subset \cBst(a)$, it holds
    \begin{align*}
        &(1/2) \lambda \|\beta - \betastX\|^2_\infty = (1/2) \lambda \|\beta - \betast\|^2_\infty \leq H(\beta) - \Hst \leq a
        \\
        \implies 
       & D = \sup_{\beta\in X}\|X - \betastX\|_\infty \leq \sqrt{2a/\lambda} \,. 
    \end{align*}
    Substituting these quantities into the bound \cref{eq:t_for_discrete_X} the expression becomes
    \begin{align*}
        t 
        & \geq  c'\cdot \frac{ L\sq }{ (\eps -\delta)\sq } 
        \cdot 
        \min \bigg\{\frac{2a}{\lambda },4\bigg\} 
        \cdot 
        \bigg(n\log\Big(\frac{16 L }{\eps - \delta}\Big) + \log \frac1\alpha\bigg) 
       \,.
    \end{align*}
    Here $c'$ is an absolute constant that changes from line to line.
    Moreover, noting that $a \leq 2\epsilon$ and $\delta \leq \epsilon/2$ implies $a / (\epsilon - \delta)\sq \leq 8/\epsilon$, we know that if 
    \begin{align} \label{eq:final_t_bound}
        t & \geq c'\cdot  L \sq  \min \bigg\{ \frac{1}{ \lambda \epsilon} , \frac{1}{\epsilon\sq} \bigg\} \cdot \bigg(n\log\Big(\frac{16L}{\eps - \delta}\Big) + \log \frac1\alpha\bigg)
        \;,
    \end{align}
    then
    $\P\big(\cB^\gam_X(\delta')  \subset \cB^*_X(\epsilon')\big) \geq 1- \alpha$. 
    By plugging in $L = (2n + \vbar)$ and $\lambda = \ubar{b} / 4$, we know
    $\P \Big( \cBgamS (\delta) \overset{}{\subset} \cBstS(\epsilon)\Big) \geq 1-\alpha $ as long as 
    \begin{align*}
        t \geq c'\cdot  (2n+\vbar)\sq  \min \bigg\{ \frac{1}{ \ubar{b}\epsilon} , \frac{1}{\epsilon\sq} \bigg\} \cdot \bigg(n\log\Big(\frac{16(2n+\vbar)}{\eps - \delta}\Big) + \log \frac1\alpha\bigg) 
        \;.
    \end{align*}

    \textbf{Step 3. Putting together.}
    By \cref{lm:value_concentration},
    if $t \geq 2\vbarsq {\log(2n/\alpha)}$ then $\betagam \in C$ with probability $\geq 1 - \alpha$. Under the event $\betagam \in C$, it holds $\cBgamC(\delta) = C \cap \cBgam(\delta)$. Since $\betast\in C$ it holds that $\cBstC(\eps) = C\cap \cBst(\eps)$.
    Moreover, if $t$ satisfies the bound in \cref{eq:final_t_bound}, we know Inclusion (2) holds with probability $\geq 1- \alpha$, which then implies Inclusion (1). 
    So if $t$ satisfies the two requirements, $t \geq 2\vbarsq {\log(2n/\alpha)}$ and \cref{eq:final_t_bound}, then with probability $\geq 1- 2\alpha$,
    \begin{align*}
        C \cap \cBgam(\delta) = \cBgamC(\delta ) \subset \cBstC(\eps) = C\cap \cBst(\eps)\,.
    \end{align*}

\end{proof}

\begin{proof}[Proof of \cref{claim:1}]
    To see this, for $\beta \in \cBgamS(\delta)$ let $\beta' \in X$ be such that $\|\beta - \beta' \|_\infty \leq \nu$. By Lipschitzness of $H_t$ on $C$, we know
    \begin{align*}
        H_t(\beta')
        & \leq H_t(\beta)+L\nu \tag{Lipschitzness of $H_t$}
        \\
        &\leq \min_S H_t + \delta + L\nu  \tag{$\beta \in \cBgamS(\delta)$}
        \\ 
        &\leq \min_X H_t + \delta + L\nu \tag{$X\subset S$} 
        \\
        & =  \min_X H_t + \delta'\,.
    \end{align*}
    This implies the membership $\beta' \in \cBgamX(\delta')$. Furthermore, we have 
    \begin{align*}
        \cBgamX(\delta') \subset \cBstX(\eps') \subset \cBstC (\epsilon') \,.
    \end{align*} 
    Here the first inclusion is simply the assumption that $\cB^\gam_X(\delta')  \subset \cB^*_X(\epsilon')$. The second inclusion follows by the construction of $X$; since $\betast \in X$, we know $\cBstX(\epsilon') \subset \cBstC (\epsilon')$ and thus $\min_X H = \min_X H= \Hst$.  We now obtain
    \begin{align*}
        \beta' \in \cBstC (\epsilon') \,.
    \end{align*}

    Using the Lipschitzness of $H$ on $C$, we have for all $\beta \in \cBgamS(\delta)$
    \begin{align*}
        H(\beta) &\leq H(\beta') + L\nu \tag{Lipschitzness of $H$}
        \\
        &\leq \min_C H + \epsilon' + L\nu \tag{$\beta' \in \cBstC (\epsilon') $}
        \\
        & = \min_C H + \epsilon \,.
    \end{align*}
    So we conclude $\beta \in \cBstC(\epsilon)$, implying $\cBgamS(\delta) \subset  \cBstC(\epsilon)$.     This completes the proof of \cref{claim:1}.
\end{proof}

\begin{proof}[Proof of \cref{claim:2}]

    This claim relies on convexity of the problem.
   
       Assume, for the sake of contradiction, there exists $\betadia \in \cBgamC(\delta)$ but $\betadia \not \in \cBgamS(\delta)$. The only possibility this can happen is $\betadia \in C$ but $\betadia \not \in S = C\cap \cBst(a)$. So $\betadia \not \in \cBst(a)$ (note $a < r$ implies the set $C- \cBst(a)$ is not empty), which by definition means 
       \begin{align}\label{eq:betadia_gt_a}
           H(\betadia) - \Hst > a \,.
       \end{align}
       Now define 
       \begin{align*}
           \betabar =  \argmin_{\beta \in S} H_t(\beta) \in \cBgamS(\delta) \,.
       \end{align*}
       By the assumption $\cBgamS (\delta) \overset{}{\subset} \cBstS(\epsilon)$, we know $\betabar \in \cBstS(\eps)$ and so
       \begin{align}\label{eq:betabar_leq_eps}
           H(\betabar) - \Hst \leq \epsilon \,.
       \end{align}
       
       Next, let $\beta^c = c \betabar + (1-c) \betadia$ with $c\in [0,1]$, which is a point lying on the line segment joining the two points $\betabar$ and $\betadia$. By the optimality of $\betadia \in \cBgamC(\delta)$ and $\betabar \in C$, we know $H_t(\betadia) \leq H(\betabar) + \delta$.
       By convexity of $H_t$, we have for all $c\in[0,1]$,
       \begin{align}\label{eq:line_segment_implications}
           H_t(\beta^c) \leq \max\{ H_t(\betabar), H_t(\betadia)\} \leq H_t(\betabar) + \delta \,.
       \end{align} 
   
       Now consider the map $K: [0,1] \to \R_+, c \mapsto H(\beta^c) - \Hst$. Since any convex function is continuous on its effective domain~\cite[Corollary 10.1.1]{rockafellar1970convex}, we know $H$ is continuous. Continuity of $H$ implies continuity of $K$.
       Note $K(0) = H(\betadia) - \Hst > a$ by \cref{eq:betadia_gt_a} and $K(1) = H(\betabar) - \Hst \leq \epsilon$ by \cref{eq:betabar_leq_eps}. By intermediate value theorem, there is $c^* \in [0,1]$ such that $\eps < H(\beta^{c*}) - \Hst < a$. Moreover, by $H(\beta^{c*}) - \Hst < a$ and $\beta^{c*} \in C$ we obtain $\beta^{c*} \in S = \cBst(a) \cap C$. In addition, recalling $H_t(\beta^{c*}) \leq H_t(\betabar) + \delta$ (\cref{eq:line_segment_implications}), we conclude by definition $\beta^{c*}\in \cBgamS(\delta)$.
       
       At this point we have shown the existence of a point $\beta^{c*}$ such that
       \begin{align*}
           \beta^{c*} \in \cBgamS(\delta) \,, \quad \beta^{c*} \not \in \cBst(\eps)\,.
       \end{align*}
       This clearly contradicts the assumption $\cBgamS (\delta) \overset{}{\subset} \cBstS(\epsilon) = \cBst(\epsilon) \cap S$.
       This completes the proof of \cref{claim:2}.
   
   \end{proof}

\begin{proof}[Proof of \cref{cor:H_concentration}]
    Under the event $\{ \betagam \in C\}$, the set $C \cap \cBgam (0) = \{ \betagam \}$. 
    Moreover, $\betagam \in C \cap \cBst (\epsilon)$ implies $H(\betagam) \leq H(\betast) + \epsilon$. This completes the proof.
\end{proof}

\begin{proof}[Proof of \cref{cor:beta_u_concentration}]
    Under the event $\{ \betagam \in C\}$, we use strong convexity of $H$ over $C$ w.r.t.\ $\ell_2$-norm and obtain $\frac{\lambda}{2} \| \betagam - \betast\|_2 \sq \leq H(\betagam) - H(\betast)$ where $\lambda = \ubar{b}/4$ is the strong-convexity parameter. 

    For the second claim we use the equality $\betagami = b_i / \ugami$ and $\betasti = b_i / \usti$. 
    For $\beta, \beta' \in C$, it holds $|\frac{1}{\betai} - \frac{1}{\beta'_i} | \leq \frac{4}{{b_i}^2} | \betai - \beta'_i|$.
    And so $\| \ugam - \ust \|_2 = \sumi (b_i)\sq (\frac{1}{\betagami} - \frac{1}{\betasti})\sq \leq \sumi \frac{16}{(b_i)\sq} | \betagami - \betasti|\sq \leq \frac{16}{(\ubar{b})\sq} \| \betagam - \betast\|_2\sq$. 
    So we obtain $\|\ugam -\ust\|_2 \leq \frac{4}{\ubar{b}} \|\betagam - \betast\|_2$.
    We complete the proof.
\end{proof}

\section{Proof of Theorems~\ref{thm:normality} and \ref{thm:clt_beta_u}} 
\label{sec:proof:thm:normality}
\begin{proof}[Proof of \cref{thm:normality}]

    We aim to apply \cref{lm:clt_optimal_value} to our problem. To do this we first introduce surrogate problems 
    \begin{align*}
        H^\gam_C \defeq \min_{\beta \in C} H_t(\beta) 
        \;,\quad  
        H^*_C   \defeq \min_{\beta \in C} H(\beta)
        \;.
    \end{align*}
    Since $\betast \in C$ we know $H^*_C = \Hst$. We write down the decomposition
    \begin{align*}
        \sqrt{t} (H^\gam - \Hst) = \sqrt{t} (H^\gam -  H^\gam_C ) + \sqrt{t}( H^\gam_C  - H^*_C)
        \;.
    \end{align*}

    For the first term we show that $\sqrt{t} (H^\gam -  H^\gam_C ) \toprob 0$ (which implies convergence in distribution). Choose any $\eps > 0$, define the event $A^\eps_t = \{ \sqrt{t} |H^\gam - H^\gam_C | \geq \eps\}$. By \cref{lm:value_concentration} we know that with probability 1, $\betagam \in C$ eventually and so $H^\gam - H^\gam_C = 0$ eventually. This implies 
    $\P(\limsup_{t\to\infty}A^\eps_t) = \P(A^\eps_t \text{ eventually}) = 0$. By Fatou's lemma, 
    \begin{align*}
        \limsup_{t\to\infty} \P (A^\eps_t) \leq \P \Big(\limsup_{t\to\infty}A^\eps_t \Big) = 0
        \;.
    \end{align*} 
    We conclude for all $\epsilon > 0$, $\lim_{t\to\infty} \P(\sqrt{t} | H^\gam -  H^\gam_C | > \epsilon)=0$.

    For the second term, we invoke \cref{lm:clt_optimal_value} and obtain $\sqrt{t}( H^\gam_C  - H^*_C) \tod N(0,\var[F(\betast,\theta)])$, where we recall $F(\beta,\theta) = \max_i \betai v_i(\theta) - \sumiton b_i \log \beta_i$.
    To do this we verify all hypotheses in \cref{lm:clt_optimal_value}.
    \begin{itemize}
        \item The set $C$ is compact and therefore \cref{it:lm:clt_optimal_value:1} is satisfied. 
        \item The function $F$ is finite for all $\beta \in \Rnpp$ and thus \cref{it:lm:clt_optimal_value:2} holds. 
        \item The function $F(\cdot, \theta)$ is $(2n+\vbar)$-Lipschitz on $C$ for all $\theta$, and thus \cref{it:lm:clt_optimal_value:3} holds. 
        \item \cref{it:lm:clt_optimal_value:4} holds because the function $H$ has a unique minimizer over $C$. 
    \end{itemize}
Now we calculate the variance term. 
    \begin{align*}
        \var(F(\betast, \theta)) &= \var \Big(\max_i \{ v_i(\theta) \betasti\}\Big)
        \\
        &= \var (p^*(\theta)) 
        \\
        &= \int_\Theta (p^*)\sq \diff S (\theta) - \bigg(\int_\Theta p^* \diff S (\theta) \bigg) \sq
        \\
        &= \int_\Theta (p^*)\sq \diff S (\theta) - 1
        \;,
    \end{align*}
    where in the last equality we used $\int p^* = \sumiton b_i = 1$.
    By Slutsky's theorem, we obtain the claimed result.

\end{proof}

\begin{proof}[Proof of Theorem~\ref{thm:clt_beta_u}]

    We verify all the conditions in Theorem 2.1 from~\citet{hjort2011asymptotics}.

    Because $H$ is $C^2$ at $\betast$, there exists a neighborhood $N$ of $\betast$ 
    such that $H$ is continuously differentiable on $N$. By \cref{thm:first_differentiability} this implies that 
    the random variable $\textsf{bidgap}(\beta,\cdot)\inv$ is finite almost surely for each $\beta \in N$. This implies $I(\beta,\theta)$ is single valued a.s.\ for $\beta \in N$.

    Define $D(\theta)\defeq \nabla F(\betast,\theta) = G(\betast,\theta) - \nabla \Psi(\betast)$ where we recall the subgradient $G(\betast,\theta) = e_{i(\betast,\theta)}v_{i(\betast,\theta)}$ and $i(\betast,\theta) = \argmax_i \betasti v_i(\theta)$ is the winner of item $\theta$ when the pacing multiplier of buyers is $\betast$. 
    Let $$R(h, \theta) \defeq [ F(\betast + h,\theta) - F(\betast,\theta) - D(\theta) \tp h ] / \| h \|_2$$ measure the first-order approximation error.
    By optimality of $\betast$ we know $\nabla H (\betast)=\E[D(\theta)] = 0$. Moreover, by twice differentiability of $H$ at $\betast$, the following expansion holds: 
    \begin{align*}
        H(\betast + h) - H(\betast) = \frac12 h \tp \big(\nabla \sq H(\betast)\big) h + o(\| h\|_2\sq)
        \;.
    \end{align*}

    To invoke Theorem 2.1 from~\citet{hjort2011asymptotics}, we check the following stochastic version of differentiability condition holds 
    \begin{align}
        \E [R(h,\theta) \sq ] = o(\| h \|_2\sq) \quad \text{as } \|h\|_2 \downarrow 0
        \;.
        \label{eq:diff_of_D}
    \end{align} 

By $H$ being differentiable at $\betast$, 
we know $R(\t, h)  \toas 0$. Since we assume $\max_i \esssup v_i(\t) < \infty$, we know the sequence of random variables $R(\t, h)$ is bounded. 
    We conclude \cref{eq:diff_of_D} holds true.

    At this stage we have verified all the conditions in Theorem 2.1 from~\citet{hjort2011asymptotics}. Invoking the theorem we obtain
    \begin{align*}
        \sqrt{t} (\betagam - \betast) = - [\nabla \sq H(\betast)]\inv \Bigg( \frac{1}{\sqrt t}\sumtau D(\thetau) \Bigg) + o_p(1)
        \;.
    \end{align*} 
    In particular,  $\sqrt{t} (\betagam - \betast) \tod \cN (0, [\nabla \sq H(\betast)]\inv \var(D) [\nabla \sq H(\betast)]\inv )$.
    Finally, noting $D = \must$ we obtain the claimed result.

    \emph{Proof of CLT for $\beta$}
    This follows from the discussion above.

    \emph{Proof of CLT for $u$.}  We use the delta method. Take $g(\beta) = [b_1/\beta_1,\dots, b_n/\beta_n]$. Then the asymptotic variance of $\sqrt{t}(g(\betagam) - g(\betast))$ is $\nabla g(\betast) 
    \tp 
    \Sigma_\beta
    \nabla g (\betast)$. Note $\nabla g(\betast)$ is the diagonal matrix $\Diag(\{-b_i/\betasti\sq \})$. 
    From here we obtain the expression for $\Sigma_u$.

\end{proof}

\section{Proofs for Analytical Properties of the Dual Objective}
\label{proof:sec:analytical_properties_of_dual_obj}
\begin{remark}[Comment on \cref{thm:first_differentiability}]
    We briefly discuss why differentiability is related to the gap in buyers' bids.
Recall $\fbar(\beta)= \E[\max_i \beta_i v_i(\theta)]$.
Let $\delta \in \Rnp$ be a direction with positive entries, and let $I(\beta, \theta) = \argmax_i{\beta_i v_i(\theta)}$ be the set of winners of item $\theta$ which could be multivalued. Consider the directional derivative of $\fbar$ at $\beta$ along the direction $\delta$:
\begin{align*}
    & \lim_{t \downarrow 0} 
     \E\bigg[\frac{\max_i (\beta_i + t \delta_i) v_i(\theta) - \max_i \beta_i v_i(\theta)}{t}\bigg] 
    \\
    & =  
    \E \bigg[\lim_{t \downarrow 0} \frac{\max_i (\beta_i + t \delta_i) v_i(\theta) - \max_i \beta_i v_i(\theta)}{t}\bigg] 
    \\
    & =
    \E\Big[\max_{i \in I(\beta,\theta)} v_i (\theta) \delta_i\Big] 
    ;,
\end{align*}
where the exchange of limit and expectation is justified by the dominated convergence theorem.
Similarly, the left limit is
\begin{align*}
    \lim_{t \uparrow 0} \E\bigg[\frac{\max_i (\beta_i + t \delta_i) v_i(\theta) - \max_i \beta_i v_i(\theta)}{t}\bigg] = \E\Big[\min_{i \in I(\beta,\theta)} v_i(\theta) \delta_i\Big] 
    \;.
\end{align*}
If there is a tie at $\beta$ with positive probability, i.e., the set $I(\beta,\theta)$ is multivalued for a non-zero measure set of items, then the left and right directional derivatives along the direction $\delta$ do not agree. Since differentiability at a point $\beta$ implies existence of directional derivatives, we conclude differentiability implies \cref{eq:as:notie}.

\end{remark}

\begin{proof}[Proof of \cref{thm:first_differentiability}]
    Recall $f(\beta,\theta) = \max_i \beta_i \vithe$. Note $f$ is differentiable at $\beta$ if and only if $ \textsf{bidgap}(\beta,\theta) > 0$. 
    Let $\Theta_{\mathrm{diff}}(\beta) \defeq \{ \theta: f(\beta,\theta) \text{ is continuously differentiable at $\beta$} \}$. Then 
    \begin{align*}
        \Theta_{\mathrm{diff}}(\beta) = \bigg\{\theta: \frac{1}{\textsf{bidgap}(\beta,\theta)} < \infty \bigg\} = \{ \theta: I(\beta,\theta) \text{ is single-valued}\} 
        \;.
    \end{align*}
    
    By Proposition 2.3 from~\citet{bertsekas1973stochastic} we know $\fbar(\beta) = \E[f(\beta,\theta)] = \int_\Theta f(\beta,\theta) \diff S(\theta)$ is differentiable at $\beta$ if and only if $S( \Theta_{\mathrm{diff}}(\beta)) = 1$.
    From here we obtain  \cref{thm:first_differentiability}.

\end{proof}
\begin{remark}
    Suppose \cref{eq:as:notie} holds in a neighborhood $N$ of $\betast$,
    i.e., $\frac{1}{\textsf{bidgap}(\beta,\theta)}$ is finite a.s.\ for each $\beta \in N$, then $H$ is \emph{continuously} differentiable on $N$.
    See Proposition 2.1 from \citet{shapiro1989asymptotic}.
\end{remark}

\begin{proof}[Proof of \cref{thm:int_implies_hessian}]
    \cref{eq:as:UI}
    holds in a neighborhood $N$ of $\betast$
    implies the \cref{eq:as:notie} holds on $N$, 
    and thus $H$ is differentiable on $N$ with gradient $\nabla H(\beta) = \E[v_{i(\beta,\theta)} e_{i(\beta,\theta)}] + \nabla\Psi = \E[G(\beta,\theta)] + \nabla\Psi$. To compute the Hessian w.r.t.\ the first term, we look at the limit 
    \begin{align}
        \label{eq:G_ratio}
        \lim_{\|h\| \downarrow 0} \E \bigg[\frac{G(\betast + h,\theta) - G(\betast, \theta)}{\|h\|}\bigg]
        \;.
    \end{align}

\begin{lemma}[$\textsf{bidgap}(\beta,\theta)$ as   Lipschitz parameter of $G$] 
        \label{lm:epsilon_as_lip}
        Suppose for some $\beta\in \Rnpp$ the gap function $\textsf{bidgap}(\beta,\theta) > 0$.
         Let $\beta' = \beta + h$. 
        \begin{itemize}
            \item If $\|h\|_\infty \leq \textsf{bidgap}(\beta,\theta)/\vbar$ then 
            $I(\beta',\theta)$ is single-valued and moreover 
            $i(\beta',\theta) = i(\beta,\theta)$, implying $G(\beta',\theta) = G(\beta,\theta)$.
            \item It holds ${ \| G(\beta + h,\theta) - G(\beta, \theta) \|_2}{} \leq 6 \vbar\sq \cdot \frac{1}{ \textsf{bidgap}(\beta,\theta)} \|h\|_2 $ for all $\beta + h \in\Rnp$.
        \end{itemize}
\end{lemma}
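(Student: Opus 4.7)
The plan is to reduce both statements to elementary bookkeeping about which buyer wins item $\theta$ before and after a small perturbation $h$ of the pacing multiplier, using that the gap $\epsilon(\beta,\theta)$ quantifies how much slack the current winner has.

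For the first bullet, let $i^\star = i(\beta,\theta)$, which is well-defined since $\epsilon(\beta,\theta)>0$ means the winner is unique. By the very definition of the gap,
\[
\beta_{i^\star} v_{i^\star}(\theta) \;\geq\; \beta_k v_k(\theta) + \epsilon(\beta,\theta) \quad \text{for every } k\neq i^\star.
\]
Expanding the perturbed bids,
\[
(\beta_{i^\star}+h_{i^\star})v_{i^\star}(\theta)-(\beta_k+h_k)v_k(\theta)
\;\geq\; \epsilon(\beta,\theta)-|h_{i^\star}v_{i^\star}(\theta)-h_k v_k(\theta)|
\;\geq\; \epsilon(\beta,\theta)-2\vbar\,\|h\|_\infty,
\]
which is strictly positive once $\|h\|_\infty$ is below the stated threshold (possibly up to a factor of $2$ in the constant). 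Consequently $i(\beta+h,\theta)=i^\star$, the set $I(\beta+h,\theta)$ is still a singleton, and $G(\beta+h,\theta)=e_{i^\star}v_{i^\star}(\theta)=G(\beta,\theta)$.

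For the second bullet, I would split on $\|h\|_\infty$ relative to the threshold from the first bullet. When $\|h\|_\infty$ is below the threshold, the previous paragraph gives $G(\beta+h,\theta)=G(\beta,\theta)$ and the inequality is trivial. When $\|h\|_\infty$ exceeds the threshold, I use the brute-force bound $\|G(\beta+h,\theta)-G(\beta,\theta)\|_2\le 2\vbar$ (valid even at ties, since any subgradient element is a convex combination of vectors of the form $e_j v_j(\theta)$, each of norm at most $\vbar$), combined with $\|h\|_2\ge\|h\|_\infty\gtrsim \epsilon(\beta,\theta)/\vbar$, to conclude
\[
\|G(\beta+h,\theta)-G(\beta,\theta)\|_2
\;\leq\; 2\vbar
\;\leq\; \frac{2\vbar \cdot 2\vbar/\epsilon(\beta,\theta)}{1}\,\|h\|_2
\;\leq\; \frac{6\vbar^2}{\epsilon(\beta,\theta)}\,\|h\|_2,
\]
absorbing the constants loosely to match the stated bound.

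The main (and only mildly delicate) point is the second bullet's treatment of possible ties at $\beta+h$: one must recall that $G$ is merely a measurable selection from $\partial f(\cdot,\theta)$, and use Theorem 3.50 of Beck to describe the subgradient as the convex hull of $\{e_j v_j(\theta): j\in I(\beta+h,\theta)\}$, so that the uniform norm bound $\vbar$ still applies regardless of whether the perturbed argmax is single- or multi-valued. Once this is in hand, everything else is just a case analysis driven by the first bullet.
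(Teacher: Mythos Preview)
Your approach is correct and essentially identical to the paper's: both argue the first bullet by bounding the change in each bid by $\vbar\|h\|_\infty$ against the gap, and both prove the second bullet by a case split on whether $\|h\|_\infty$ is below or above the threshold, using $G=G'$ in the small-$h$ case and the crude bound $\|G'-G\|_2\le 2\vbar$ in the large-$h$ case. The paper takes the threshold to be $\epsilon(\beta,\theta)/(3\vbar)$ in the case split (which is exactly what produces the constant $6=2\cdot 3$), matching your remark that the constant may need a small adjustment; your additional care about $G$ at ties via the convex-hull description of the subdifferential is a nice touch that the paper leaves implicit.
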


    Suppose we could exchange expectation and limit in \cref{eq:G_ratio}, then the above expression would become zero: for a fixed $\theta$, since \cref{eq:as:notie} holds at $\betast$, i.e, $\textsf{bidgap}(\betast,\theta) > 0$, we apply \cref{lm:epsilon_as_lip} and obtain
    $
        \lim_{\|h\| \downarrow 0}  {(G(\betast + h,\theta) - G(\betast, \theta))}/{\|h\|} = 0.
    $
    This implies that $H$ is twice differentiable at $\betast$ with Hessian $\nabla\sq H(\betast) = \nabla\sq \Psi(\betast)$.
    It is then natural to ask for sufficient conditions for exchanging limit and expectation.

    By \cref{lm:epsilon_as_lip}, we know the ratio 
    ${(G(\betast + h,\theta) - G(\betast, \theta))}/{\|h\|}$ is dominated by $6\vbar \textsf{bidgap}(\betast,\theta)\inv$, which by \cref{eq:as:UI} is integrable. 
    By dominated convergence theorem, we can exchange limit and expectation, and the claim follows.
\end{proof}
    
\begin{proof}[Proof of \cref{lm:epsilon_as_lip}]

    Note that for any $\beta$ and $\beta' = \beta + h$,
\begin{align} 
    \label{eq:lip_of_G}
    \frac{ \| G(\beta + h,\theta) - G(\beta, \theta) \|_2}{\|h\|_2} \leq 6 \vbar\sq \cdot \frac{1}{ \textsf{bidgap}(\beta,\theta)}
    \;.    
\end{align}
To see this, we notice that on one hand, if $\| h \|_\infty\leq \epsilon / (3\vbar )$ where $\epsilon = \textsf{bidgap}(\beta,\theta)$, then for $i = i(\beta,\theta)$ and all $\theta \in \Theta_i(\beta)$,
\begin{align*}
    \beta'_i v_i(\theta) 
    & = (\betai + h_i) v_i(\theta)
    \\
    &\geq \betai v_i(\theta) -  \epsilon /3 \tag{A}
    \\
    &\geq \beta_k v_k(\theta) + \epsilon -  \epsilon /3
    \tag{B}
    \\
    &\geq \beta'_k  v_k(\theta)  - \epsilon /3 + \epsilon -  \epsilon /3 \tag{C} 
    \;,
\end{align*}
where $(A)$ and $(C)$ use the fact $\|h\|_\infty \leq \epsilon / (3\vbar)$, and (B) uses the definition of $\epsilon$.
This implies $\argmax_i \beta'_i \vithe = \argmax_i \betai \vithe$ and thus
$G(\beta + h ,\theta) - G(\beta  ,\theta) = 0$.
On the other hand, if $\| h \|_\infty > \epsilon / (3\vbar)$, then $\| h\|_2\geq  \|h\|_\infty > \epsilon / (3\vbar)$. Using the bound $\| G \|_2 \leq \vbar$, we obtain \cref{eq:lip_of_G}.
This completes proof of \cref{lm:epsilon_as_lip}.
\end{proof}

\begin{proof}[Proof of \cref{thm:linear_value_implies_hessian}]

    By Lemma 5 from~\citet{gao2022infinite}, we know that at equilibrium the there exists unique breakpoints 
    $ 0 = a_0^* < a_{1}^{*}<\cdots<a_{n}^{*}=1$ 
    such that buyer $i$ receives the item set $[a_{i-1}^*, a_i^*] \subset \Theta$. 
    Moreover, it holds 
    \begin{align*}
        & \beta^*_1 d_1 > \beta^*_2 d_2 > \dots > \beta^*_n d_n \;,
        \\
        & \beta^*_1 c_1 < \beta^*_2 c_2 < \dots < \beta^*_n c_n \;.
    \end{align*}

    Now we consider a small enough neighborhood $N$ of $\betast$. For each $\beta\in N$, we define the 
    breakpoint $a_i^*(\beta) $ by solving for $\theta$ through 
    $\beta_i(c_i \theta + d_i) = \beta_{i+1}(c_{i+1} \theta + d_{i+1})$ for $i\in[n-1]$, $a_0^*(\beta) = 0$, and $a_n^*(\beta) = 1$. 
    As a sanity check, note $a_i^*(\betast) = a_i^*$ for all $i$.
    Recall $\Theta_i(\beta) = \{\theta \in \Theta: v_i(\theta)\beta_i \geq v_k(\theta)\beta_k, \forall k\neq i \}$ is the winning item set of buyer $i$ when pacing multiplier equal $\beta$. We will show later $\Theta_i(\beta) = [a_{i-1}^*(\beta), a_{i}^*(\beta) ]$ for $\beta \in N$ when $N$ is appropriately constructed.
    
    Now we recall the gradient expression 
    \begin{align*}
        \nabla \fbar(\beta) 
        & = \E[e_{i(\beta,\theta)} v_{i(\beta,\theta)}] 
        \\
        & = \sumiton e_i \int_{a_{i-1}^*(\beta) }^{a_i^*(\beta)} c_i \theta + d_i \diff \theta
        \\
        & = \sumiton e_i \bigg( \frac{c_i}{2}\big( [a_{i}^*(\beta)]\sq - [a_{i-1}^*(\beta)]\sq \big) + d_i \big(a_{i}^*(\beta) - a_{i-1}^*(\beta)\big) \bigg) 
        \;.
    \end{align*}
    On $N$, the breakpoints $a_i^*(\beta)= (-\beta_i d_i + \beta_{i+1}d_{i+1}) / (\beta_i c_i - \beta_{i+1} c_{i+1})$ is $C^1$ in the parameter $\beta$. We conclude $\nabla \fbar(\beta) $ is continuously differentiable. 

    What remains is to construct such a neighborhood $N$. Define $$\delta = \min
    \bigg\{
        \frac12 \ubar{\Delta}_{\beta d} / \bar{\Delta}_d
        , \frac12 \ubar{\Delta}_{\beta c} / \bar{\Delta}_c 
        ,
        \frac {1}{4} \ubar{\Delta}_a \ubar{\Delta}_{\beta c} / \vbar 
        \bigg\}\; ,$$ where 
        $\ubar{\Delta}_a\defeq \min |a_i - a_{i-1}|$,
    $ \ubar{\Delta}_{\beta c} \defeq \min \{  \beta^*_{i-1}c_{i-1} -\beta^*_i c_i \} > 0$, 
    $ \bar{\Delta}_c \defeq \max_i\{c_{i-1} - c_{i}\} > 0$, and $ \ubar{\Delta}_{\beta d} > 0$ and $ \bar{\Delta}_d > 0$ are similarly defined. 
    Let $N = \{ \beta: \|\beta - \betast\|_\infty \leq \delta \}$.
    The neighborhood $N$ is constructed so that on $N$ it holds 
    \begin{align*}
        & \beta_1 d_1 > \beta_2 d_2 > \dots > \beta_n d_n \; ,
        \\
        & \beta_1 c_1 < \beta_2 c_2 < \dots < \beta_n c_n \; ,
        \\
        & 0 = a_0^*(\beta) < a_{1}^{*} (\beta) <\cdots<a_{n}^{*}(\beta)=1 \; ,
    \end{align*}
    where the first inequality follows from 
    $\delta \leq \frac12 \ubar{\Delta}_{\beta d} / \bar{\Delta}_d
    $, the second inequality from $\delta \leq \frac12 \ubar{\Delta}_{\beta c} / \bar{\Delta}_c$, and the third inequality follows from the $\delta \leq \ubar{\Delta}_a \ubar{\Delta}_{\beta c} / (4\vbar)$, where $\vbar = \max_i \sup_{\theta \in [0,1]} c_i \theta + d_i$.
    From here we can see that the partition of item set $\Theta$ induced by these breakpoints corresponds to exactly the winning item sets when buyers' pacing multiplier is $\beta$. So we have shown $\Theta_i(\beta) = [a_{i-1}^*(\beta), a_{i}^*(\beta) ]$ for $\beta \in N$.
    This finishes the proof of \cref{thm:linear_value_implies_hessian}.
\end{proof}

\begin{proof}[Proof of \cref{thm:smooth_density_implies_hessian}]
    We need the following technical lemma on the continuous differentiability of integral functions.
    \begin{lemma} [Adapted from Lemma 2.5 from~\citet{wang1985distribution}]
        \label{lm:integral_is_cont_diff}
        Let $u = [u_1, \dots, u_n]\in \Rnpp$, and $$I(u) = \int_{0}^{u_1} \diff t_1 \dots \int_{0}^{u_n}  h (t_1, t_2,\dots, t_n)  \diff t_n \; ,$$ where $h$ is a continuous density function of a probabilistic distribution function on $\Rnp$ and such that all lower dimensional density functions are also continuous. Then the integral $I(u)$ is continuously differentiable.
    \end{lemma}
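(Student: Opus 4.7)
The plan is to prove the lemma directly by computing each partial derivative of $I$ and then verifying joint continuity of these partial derivatives. Alternatively, one could proceed by induction on $n$, reducing the $n$-dimensional statement to the $(n-1)$-dimensional one plus continuity of one marginal; the direct argument below essentially mirrors the inductive step.

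First I would compute $\partial I/\partial u_j$. Since $h\ge 0$ and $\int_{\R_+^n} h = 1$, Fubini's theorem lets me move the $t_j$ integration to the outermost position, writing
$$I(u) = \int_0^{u_j} G_j(t_j;u_{-j})\,dt_j, \qquad G_j(t_j;u_{-j}) := \int_{\prod_{i\neq j}[0,u_i]} h(t_1,\ldots,t_n)\,\prod_{i\neq j} dt_i.$$
The one-dimensional fundamental theorem of calculus then yields $\partial I/\partial u_j(u) = G_j(u_j;u_{-j})$, provided $t_j\mapsto G_j(t_j;u_{-j})$ is continuous at $u_j$. This partial continuity in $t_j$ follows from pointwise continuity of $h$ together with convergence of $\int_{\R_+^{n-1}} h(\cdot,t_j,\cdot)\,\prod_{i\neq j} dt_i = h_j(t_j)$, the one-dimensional marginal, which is continuous by hypothesis.

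Next I would establish joint continuity of $u\mapsto G_j(u_j;u_{-j})$. Splitting via a triangle inequality,
$$|G_j(u_j^k;u_{-j}^k)-G_j(u_j;u_{-j})|\le |G_j(u_j^k;u_{-j}^k)-G_j(u_j;u_{-j}^k)| + |G_j(u_j;u_{-j}^k)-G_j(u_j;u_{-j})|.$$
Continuity in $u_{-j}$ alone (second term) is routine dominated convergence with dominator $h(\cdot,u_j,\cdot)$. For the first term I use Scheff\'e's lemma: $h(\cdot,u_j^k,\cdot)\to h(\cdot,u_j,\cdot)$ pointwise by continuity of $h$, and their $L^1$-norms over $\R_+^{n-1}$ equal $h_j(u_j^k)\to h_j(u_j)$ by continuity of the marginal $h_j$, so the convergence is in $L^1$, which bounds the first term uniformly in $u_{-j}^k$.

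The main obstacle I anticipate is exactly this exchange of limit and $(n-1)$-dimensional integral as $u_j$ varies: pointwise continuity of $h$ alone does not supply a uniform integrable dominator, so one genuinely needs the continuity of the lower-dimensional marginal $h_j$ to apply Scheff\'e-type reasoning. This is where the "all lower dimensional density functions are continuous" hypothesis enters; the analogous hypothesis on higher marginals would be needed if one instead proves $C^1$ by induction, where the inductive step repeatedly calls on marginals of intermediate dimension.
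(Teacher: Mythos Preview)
The paper does not supply its own proof of this lemma; it is quoted (with a minor adaptation from $\Rn$ to $\Rnpp$) from \citet{wang1985distribution}, accompanied only by a remark explaining that adaptation. So there is no in-paper argument to compare against.

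Your direct argument is correct. The computation of $\partial I/\partial u_j$ via Fubini and the one-variable fundamental theorem of calculus is the natural move, and the key step---joint continuity of $u\mapsto G_j(u_j;u_{-j})$---is handled cleanly. The Scheff\'e step is the right tool: pointwise convergence of $h(\cdot,u_j^k,\cdot)$ to $h(\cdot,u_j,\cdot)$ (from continuity of $h$) together with convergence of the total masses $h_j(u_j^k)\to h_j(u_j)$ (from continuity of the one-dimensional marginal) yields $L^1(\R_+^{n-1})$-convergence, which controls the variation in $u_j$ uniformly over all boxes $\prod_{i\neq j}[0,u_i^k]$. The remaining variation in $u_{-j}$ is dominated convergence with integrable majorant $h(\cdot,u_j,\cdot)$, whose integral is $h_j(u_j)<\infty$ by the same marginal hypothesis. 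One observation worth recording: your direct route uses only continuity of the one-dimensional marginals $h_j$, which is strictly weaker than the lemma's stated hypothesis that \emph{all} lower-dimensional marginals are continuous; this is not a gap in your proof, just an indication that the lemma's assumption has some slack under your approach (as you note, the fuller hypothesis would be consumed by an inductive proof).
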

    \begin{remark}
        The difference between the above lemma and the original statement is that  the original theorem works with density $h$ and integral function $I(u)$ both defined on $\Rn$, while the adapted version works with density $h$ and integral $I(u)$ defined only on $\Rnpp$. 
    \end{remark}

    Recall the gradient expression $$\nabla \fbar(\beta) = \E[e_{i(\beta,\theta)} v_{i(\beta,\theta)}] = \sumiton e_i \int v_i \indi(V_i(\beta)) f(v)\diff v \; ,$$
    where the set $V_i(\beta) = \{ v \in \Rnp: v_i \beta_i \geq v_k \beta_k, k\neq i\}, i \in [n]$, is the values for which buyer $i$ wins. 
    For now, we focus on the first entry of the gradient, i.e., $\int v_1 \indi(V_1(\beta)) f(v)\diff v$.
    We write the integral more explicitly as follows. By Fubini's theorem,
    \begin{align}
        & \int v_1 \indi(V_1(\beta)) f(v)\diff v
        \\
        &  = \int_{0}^{\infty} \diff v_1 \int_0^{\frac{\beta_1 v_1}{\beta_2}} \diff v_2 \int_0^{\frac{\beta_1 v_1}{\beta_3}} \diff v_3 \dots 
        \int_0^{\frac{\beta_1 v_1}{\beta_n}}
        \underbrace{(v_1 f(v_1,\dots, v_n) )}_
        {=: A_1(v)}\diff v_n 
        \label{eq:integral_before_change_of_var}
        \;.
    \end{align}
    To apply the lemma we use a change of variable. Let $t = T(v) = [v_1,\frac{v_2}{v_1}, \dots, \frac{v_n}{v_1}]$ and $v = T\inv (t) = [t_1, t_2 t_1, \dots, t_n t_1]$. Then \cref{eq:integral_before_change_of_var} is equal to 
    \begin{align}
        \int_{0}^{\infty} \diff t_1 \int_0^{\frac{\beta_1}{\beta_2}} \diff t_2 \int_0^{\frac{\beta_1 }{\beta_3}} \diff t_3 \dots 
        \int_0^{\frac{\beta_1}{\beta_n}} \underbrace{\big(t_1^n f(t_1, t_2 t_1, \dots, t_n t_1)\big)}_
        {=: A_2(t)}
        \diff t_n 
        \;.
        \label{eq:after_change_of_var}
    \end{align}
    Note $\E[v_1(\theta)] = \int_\Rnpp A_1(v) \diff v= \int_\Rnpp A_2(t) \diff t = 1 $.
    We use Fubini's theorem and obtain 
    \begin{align*}
        \cref{eq:after_change_of_var} = \int_0^{\frac{\beta_1}{\beta_2}} \diff t_2 \int_0^{\frac{\beta_1 }{\beta_3}} \diff t_3 \dots 
        \int_0^{\frac{\beta_1}{\beta_n}} h(t_{-1}) \diff t_{-1}
        \;,
    \end{align*}
    where we have defined $h(t_{-1}) = \int_\Rp t_1^n f(t_1, t_2 t_1, \dots, t_n t_1) \diff t_1$. 
    By the smoothness assumption on $h$ and \cref{lm:integral_is_cont_diff}, 
    we know that the map $u_{-1} \mapsto \int_{0}^{u_2} \diff t_2 \dots \int_{0}^{u_n}  h (t_{-1})  \diff t_n $ is $C^1$ for all $u_{-1} \in \Rnmpp $. 
    Moreover, 
    the map $\beta \mapsto [\frac{\beta_1}{\beta_2}, \dots, \frac{\beta_1}{\beta_n}]$ is $C^1$. We conclude the first entry of $\nabla \fbar(\beta)$ is $C^1$ in the parameter $\beta$. 
    A similar argument applies to other entries of the gradient. We complete the proof of \cref{thm:smooth_density_implies_hessian}.
\end{proof}

\section{Proof of Theorem~\ref{thm:ci_lnsw}}
\label{sec:proof_var_est}
\begin{proof}[Proof of \cref{thm:ci_lnsw}]
    Define the functions 
    \begin{align*}
        \hat \sigma \sq (\beta) & \defeq \frac{1}{t} \sumtau\Big( F(\beta, \thetau)  - H_t(\beta)\Big)\sq
        \;,
        \\
        \sigma \sq (\beta) & \defeq \var(F(\beta,\theta)) = \E\big[(F(\beta,\theta) - H(\beta))\sq \big]
        \;.
    \end{align*}
    We will show uniform convergence of $\hat \sigma \sq$ to $\sigma\sq$ on $C$, i.e., $\sup_{\beta \in C} |\hat \sigma \sq - \sigma\sq|  \toas 0$.
    We first rewrite $\sighatsq$ as follows 
    \begin{align*}
        \sighatsq(\beta) = \underbrace{\frac{1}{t}   \sumtau \big(F(\beta,\thetau) - H(\beta)\big)\sq}_
        {=: \I(\beta)}
        - \underbrace{(H_t(\beta)  - H(\beta))\sq  }_
        {=: \II(\beta)}
        \;.
    \end{align*}
    By Theorem 7.53 of~\citet{shapiro2021lectures}, the following uniform convergence results hold
    \begin{align*}
        \sup_{\beta \in C} |\I(\beta)  - \sigma\sq(\beta) | \toas 0
        \;,
        \quad
        \sup_{\beta \in C} |\II(\beta)| \toas 0 
        \;.
    \end{align*}
    The above two inequalities imply $\sup_{\beta \in C} |\hat \sigma \sq - \sigma\sq|  \toas 0$. Note the variance estimator 
    $ \hat \sigma\sq_{\NSW} = \hat \sigma\sq (\betagam)$ 
    and the asymptotic variance 
    $ \sigma\sq_{\NSW} = \sigma \sq(\betast) $.
    By $\betagam \toas \betast$ we know, 
    \begin{align*}
        & 
        | \hat \sigma\sq_{\NSW} - \sigma\sq_{\NSW}| 
        \\
        & = |\hat \sigma\sq(\betagam) - \sigma \sq(\betast) | 
        \\
        & \leq |\hat \sigma\sq (\betagam) -  \sigma\sq (\betagam)|
        + |  \sigma\sq (\betagam) -  \sigma\sq(\betast) |
        \\
        & \to 0 \quad \text{ a.s.}
    \end{align*}
    where the first term vanishes by the uniform convergence just established, the second term by continuity of $\sigma \sq(\cdot)$ at $\betast$. 
    
    Now we have shown $\hat \sigma\sq_{\NSW}$ is a consistent variance estimator for the asymptotic variance. Then note 
    \begin{align*}
        \sqrt n \frac{\LNSW^\gam - \LNSW^*}{ \hat \sigma _\NSW} 
        = \sqrt n \frac{\LNSW^\gam - \LNSW^*}{  \sigma _\NSW} 
        \cdot \frac{ \sigma _\NSW}{\hat \sigma _\NSW} 
        \;. 
    \end{align*}
    Since $\sqrt n\frac{\LNSW^\gam - \LNSW^*}{  \sigma _\NSW} \tod N(0,1)$ by \cref{thm:normality}, and $\frac{ \sigma _\NSW}{\hat \sigma _\NSW}  \toprob 1$ which is a constant, by Slutsky's theorem we know $ \sqrt n \frac{\LNSW^\gam - \LNSW^*}{ \hat \sigma _\NSW} \tod N(0,1)$. This completes the proof of \cref{thm:ci_lnsw}.
\end{proof}
\begin{proof}[Proof of \cref{thm:estiamtion_Omega}]
    The assumption that $\E [\textsf{bidgap}(\betast, \theta)\inv ]< \infty$ on a neighborhood $N$ of $\betast$ implies that 
    the set $I(\beta,\theta)$ is single-valued for all $\beta \in N$ and almost all $\theta$. So $G(\beta,\theta)  = v_{i(\beta,\theta)}e_{i(\beta,\theta)}$ is well-defined. 
    Let $G_i(\beta,\theta) = v_i \indi\{i = i(\beta,\theta) \}$ be the $i$-th entry of the vector $G(\beta,\theta)$.
    For any $\beta'\in N$ define
    \begin{align*}
        \hat \Omega\sq_i  (\beta') \defeq \frac1t \sumtau \Bigg( G_i(\beta',\thetau) - \bigg(\frac1t \sumtau G_i(\beta',\thetau)\bigg) \Bigg)\sq
        \;.
    \end{align*}
    By $\betagam \toas \betast$, we know that for large enough $t$, $\betagam \in N$. 
    Moreover, we claim 
    \begin{align*}
        \betagam \in N \implies G_i(\betagam,\thetau) = t \ugamtaui \;. 
    \end{align*}
    To see this, 
    $\betagam \in N$ implies that 
    the set of items that incurs a tie is zero-measure, i.e, $S(\theta: I(\betagam,\theta) \text{ multivalued}) = 0$.
    By the first-order condition of finite sample EG (\cref{fact:sample_eg}),
    the equilibrium allocation in the observed market is then unique and pure (no splitting of items, $\xgamtaui \in \{0,\frac1t \}$), in which case $G_i(\betagam,\theta) =  \indi\{i = i(\beta,\theta) \} \vithetau = t \xgamtaui \vithetau    = t\ugamtaui $ and thus 
    $\hat \Omega\sq_i  (\cdot)|_{\betagam} = \hatOmesqi $. By Theorem 7.53 of~\citet{shapiro2021lectures} one can show the following uniform convergence result 
    \begin{align*}
        \sup_{\beta \in N} | \hat \Omega\sq_i (\beta) - \var[ G_i(\beta,\theta)] | \toas 0 \;. 
    \end{align*}
    Noting $\Omega_i \sq = \var[G_i(\betast,\theta)]$, the uniform convergence result implies $\hat \Omega\sq_i (\betagam) - \var[G_i(\betast,\theta)] \toprob 0$, which is equivalent to $\hatOmesqi \toprob \Omega_i\sq$.

    This completes the proof of \cref{thm:estiamtion_Omega}.
\end{proof}

\section{Proof of Theorem~\ref{thm:rev_convergence}}
\label{sec:proof:thm:rev_convergence}

We start by formally defining quasilinear market equilibrium.
\begin{defn}
    The {Long-run QME}, $\QME(b,v,s)$ is an allocation-utility-price tuple $(x^*, u^*, p^*) \in (L^\infty_+)^n \times \Rnp \times L^1_+$ such that conditions in \cref{def:long-run_market} hold,
    except that buyer optimality is now defined as  
    \begin{itemize}
        \item[2'] $x^*_i \in D_i (p^*)$ and $u^*_i = \lg v_i - \pst , x_i \rg$ for all $i$ where 
        the {demand} $D_i$ of buyer $i$ is its set of utility-maximizing allocations given the prices and budget:
        \[ D_i (p) := \argmax \{ \langle v_i - p, x_i \rangle : x_i \in L^\infty_+,\, \langle p, x_i\rangle \leq b_i \} \;.\] 
    \end{itemize}
\end{defn}

\begin{defn}
    The {observed QME}, $\QMEgam(b,v,\sfs)$, given an item sequence $\gamma$, is an allocation-utility-price tuple $({x}^\gam, u^\gam, p^\gam) \in (\R^t_+)^n \times \Rnp\times \R^t_+$ such that \cref{defn:observed_market} holds except that 
    buyer optimality is now defined as 
\begin{itemize}
    \item[2'] ${x}^\gam_i \in D_i (p^\gam)$ and $u^\gam_i = \lg v_i(\gam) - \pgam ,x_i\rg $ for all $i$, where (overloading notations) $$D_i (p) := \argmax \{   \langle {{v}}_i(\gam) - p , {x}_i \rangle : {x}_i \geq 0,\,  \langle p,  {x}_i\rangle \leq b_i \} \;.
    $$ 
\end{itemize}
\end{defn} 
We will need the following convex program characterizations of infinite-dimensional QME introduced in Secion~6 of~\citet{gao2022infinite}.
First we state the primal and dual convex programs.
\begin{align} 
    \tag{\small P-QEG}
    \label{eq:pop_qeg}
    \sup \; & \sumiton \left(b_{i} \log \mu_{i}-\delta_{i}\right) 
    \\
    \text { s.t. } & \mu_{i} \leq\left\langle v_{i}, x_{i}\right\rangle+\delta_{i}, \forall i \in[n] 
    \notag
    \\
    & \sumiton x_{i} \leq {s} 
    \notag
    \\
    & \mu_{i} \geq 0, \delta_{i} \geq 0, x_{i} \in L_{1}(\Theta)_{+}, \forall i \in[n]
    \notag
\end{align}
\begin{align}
     \inf \; & \langle p, {1}\rangle- \sumiton b_{i} \log \beta_{i} 
    \tag{\small P-DQEG}
    \label{eq:pop_qdeg}
    \\
     \text { s.t. } & p \geq \beta_{i} v_{i}, \beta_{i} \leq 1, \forall i \in[n] 
    \notag
    \\
    &  p \in L_{1}(\Theta)_{+}, \beta \in \mathbb{R}_{+}^{d}
    \notag
\end{align}

\begin{fact}[Theorem~10 from~\citet{gao2022infinite} and Appendix~C in~\citet{gao2021online}]
    \label{fact:qeg}
 The following holds.

    1. First-order conditions. For any feasible solutions $(x^*_{\QEG},\mu^*, \delta^*)$ to \cref{eq:pop_qeg} and a feasible solution $(p^*_\QEG, \beta^*)$ to \cref{eq:pop_qdeg}. They are optimal both to the respective convex programs if and only if 
\begin{align*}
    & p^{*}_\QEG=\max _{i} \beta_{i}^{*} v_{i}
    \\
    & \Big\langle p^{*}_\QEG, {1}-\textstyle{\sum_{i}} x_{\QEG,i}^{*}\Big \rangle=0 
    \\
    &
    \mu_{i}^{*} = \frac{b_{i}}{\beta_{i}^{*}}, \forall i 
    \\
    &
    \delta_{i}^{*}\left(1-\beta_{i}^{*}\right)=0, \forall i 
    \\
    & \left\langle p^{*}_\QEG-\beta_{ i}^{*} v_{i}, x_{i}^{*}\right\rangle=0, \forall i
\end{align*}

2. Equilibrium. A pair of allocations and prices $(\xst, \pst)$ is a QME if and only if there exists a $\delta^*$ and $\betast$ such that $(\xst,\delta^*)$ and $(\pst, \betast)$ are optimal solutions to  \cref{eq:pop_qeg} and \cref{eq:pop_qdeg}, respectively.

3. Bounds on $\betast$. It holds $\frac{b_i}{\nu_i+ b_i} \leq \betasti \leq 1$ for all $i$, where $\nui = \int v_i(\theta)\diff S(\theta)$.      
\end{fact}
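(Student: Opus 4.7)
The plan is to treat Fact~\ref{fact:qeg} as the standard Lagrangian/KKT analysis of the convex primal--dual pair \cref{eq:pop_qeg,eq:pop_qdeg}, lifted to the infinite-dimensional setting as in \citet{gao2022infinite}. First I would associate scalar multipliers $\beta_i \geq 0$ to the utility constraints $\mu_i \leq \lg v_i, x_i \rg + \delta_i$ and an $L^1_+$ multiplier $p$ to the supply constraint $\sumiton x_i \leq s$. Strong duality follows from a Slater-type feasible point (take $x_i\equiv 0$ and $\mu_i = \delta_i = \eps$ for small $\eps > 0$, which is strictly feasible) combined with primal attainment from the infinite-dimensional EG theory. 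The KKT identities in Part~1 then fall out of Lagrangian stationarity: differentiating in $\mu_i > 0$ gives $\mu_i^* = b_i/\beta_i^*$; the sign/complementary condition at $\delta_i \geq 0$ gives $\beta_i^* \leq 1$ and $\delta_i^*(1-\beta_i^*)=0$; pointwise stationarity in $x_i(\theta) \geq 0$ yields $p^* \geq \beta_i^* v_i$ together with $\lg p^* - \beta_i^* v_i, x_i^*\rg = 0$; and supply complementary slackness gives $\lg p^*, s - \sumiton x_i^*\rg = 0$. Pooling $p^* \geq \beta_i^* v_i$ across $i$ with the two complementary slackness conditions forces $p^* = \max_i \beta_i^* v_i$ almost surely, and one may redefine $p^*$ on a zero-measure set to make this identity pointwise.

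For Part~2 (QME correspondence) I would prove both directions. Forward (KKT $\Rightarrow$ QME): market clearance is already a KKT condition, so only buyer optimality needs verification. For any feasible $x_i \geq 0$ with $\lg p^*, x_i\rg \leq b_i$, the inequality $p^* \geq \beta_i^* v_i$ gives $\lg v_i - p^*, x_i\rg \leq (1-\beta_i^*)\lg v_i, x_i\rg$, and I would split on the value of $\beta_i^*$. If $\beta_i^* = 1$, the right-hand side is $\leq 0$, which matches $\lg v_i - p^*, x_i^*\rg = 0$ because $p^* = v_i$ on the support of $x_i^*$. If $\beta_i^* < 1$, then $\delta_i^* = 0$, so $\lg v_i, x_i^*\rg = \mu_i^* = b_i/\beta_i^*$, and the budget constraint combined with $p^* \geq \beta_i^* v_i$ forces $\lg v_i, x_i\rg \leq b_i/\beta_i^*$, so again $x_i^*$ attains the QL maximum. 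Reverse (QME $\Rightarrow$ KKT): given a QME $(x^*, u^*, p^*)$, I would define $\beta_i^*$ from the effective price-per-value ratio on the buyer's winning set (capped at $1$), set $\mu_i^* = b_i/\beta_i^*$ and $\delta_i^* = \mu_i^* - \lg v_i, x_i^*\rg \geq 0$, and then verify each KKT identity using the QL demand structure.

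For Part~3 the upper bound $\beta_i^* \leq 1$ is dual feasibility from Part~1. For the lower bound, combine $\mu_i^* = b_i/\beta_i^*$ with the binding $\mu_i^* = \lg v_i, x_i^*\rg + \delta_i^*$ (binding because $\beta_i^* > 0$ forces the constraint to be tight). If $\beta_i^* = 1$ the bound is trivial since $\nui \geq 0$. Otherwise $\delta_i^* = 0$, so $\mu_i^* = \lg v_i, x_i^*\rg \leq \int v_i\, s\, \d\mu = \nui$ via the pointwise bound $x_i^*(\theta) \leq s(\theta)$ implied by $\sumiton x_i^* \leq s$; this yields $\beta_i^* \geq b_i/\nui \geq b_i/(\nui + b_i)$. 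The chief obstacle throughout is the infinite-dimensional nature of the programs: one must work with subgradients in the non-reflexive $(L^1,L^\infty)$ pairing, verify the constraint qualification required for a valid $L^1$-valued price multiplier, and justify exchanging ``$\max$'' with the pointwise stationarity conditions; these are the places where I would lean on the measure-theoretic infrastructure of \citet{gao2022infinite} and \citet{gao2021online} rather than re-deriving it, so that the KKT computation itself can proceed almost as in the finite-dimensional case.
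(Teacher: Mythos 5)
The paper does not prove this statement at all: it is imported verbatim as a ``Fact'' from Theorem~10 of \citet{gao2022infinite} and Appendix~C of \citet{gao2021online}, so your blind derivation is being compared against a citation rather than an argument. That said, your Lagrangian/KKT route is exactly the standard argument underlying the cited results, and the outline is sound: stationarity in $\mu_i$ gives $\mu_i^*=b_i/\betasti$; the sign condition at $\delta_i\geq 0$ gives $\betasti\leq 1$ and $\delta_i^*(1-\betasti)=0$; pointwise stationarity in $x_i$ gives $p^*\geq \betasti v_i$ plus the two complementary-slackness identities; and your case split on $\betasti=1$ versus $\betasti<1$ correctly establishes buyer optimality in Part~2 (using that the utility constraint binds because the objective is increasing in $\mu_i$). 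Your Part~3 argument in fact yields the stronger bound $\betasti\geq\min\{1,\,b_i/\nui\}$, which implies the stated $\betasti\geq b_i/(\nui+b_i)$. Two caveats. First, the Slater point you propose ($x_i\equiv 0$, $\mu_i=\delta_i=\eps$) satisfies $\mu_i\leq \lg v_i,x_i\rg+\delta_i$ with equality, not strictly; more importantly, a Slater argument in the $(L^1,L^\infty)$ pairing only delivers a multiplier in $(L^\infty)^*$ (a finitely additive measure), and showing the price functional is representable by an $L^1$ density is precisely the nontrivial content of the cited infinite-dimensional duality theorem --- you are right to lean on \citet{gao2022infinite} there rather than re-derive it, but the phrase ``strong duality follows from a Slater-type feasible point'' understates what is being borrowed. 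Second, the reverse direction of Part~2 (QME $\Rightarrow$ optimal solutions) is only gestured at; constructing $\betasti$ from the bang-per-buck structure of quasilinear demand and verifying $p^*=\max_i\betasti v_i$ globally (not just on winning sets) is where the real work sits in that direction. Neither caveat is a fatal gap given that the statement is explicitly an imported fact.
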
 

Note the variable $\mu$ in \cref{eq:pop_qeg} doest not correspond to the equilibrium utility of buyer $i$ at optimality. The equilibrium utility of buyer $i$ is $\lg v_i - \pst, x^*_i \rg$. By the discussion in Section~6 of~\citet{gao2022infinite}, if $\betasti < 1$, then $\lg v_i-\pst, \xsti\rg = (1-\betasti) \mu^*_i$.  If $\betasti = 1$,  then $\lg v_i-\pst, \xsti\rg = 0$. Moreover, $\delta^*_i$ represents the leftover budget in equilibrium~\citep{conitzer2022pacing}.

\begin{proof}[Proof of \cref{thm:rev_convergence}]

    Given the above equivalence results, we use $(\pst, \xst)$ to denote both the equilibrium prices and allocations, as well as the optimal $p$ and $x$ variables in the quasilinear EG programs.
 
    Now the study of the convergence of revenue is reduced to 
    that of the convergence behavior of convex programs 
    \begin{align*}
        \min_{0 < \beta \leq 1_n} H_t (\beta ) \quad ``\!\! \implies \!\!" \quad  \min_{0<\beta \leq 1_n} H (\beta)
    \end{align*}
    Note that in contrast to the EG programs for linear utilities (\cref{eq:pop_deg} and \cref{eq:sample_deg}), we now have an upper bound on the variables $\beta$. 
        
    By repeating the proof of \cref{thm:consistency} we obtain $\betagam \toas \betast$. To show almost sure convergence of revenue, we note 
    \begin{align*}
        & \Big|\frac1t \sumtau \ptau -  \int_\Theta p^*(\theta) s(\theta) \diff \mu(\theta)\Big|
        \\
        & \leq \frac1t \sumtau |\max_i \{ \vithetau \beta^\gam_i\} - \max_i \{ \vithetau \beta^*_i \} | + 
        \Big| \frac1t \sumtau  \max_i \{ \vithetau \beta^*_i \}  -  \int_\Theta p^*(\theta) s(\theta) \diff \mu(\theta)\Big|
        \\
        & \leq  \vbar \| \beta^\gam - \betast\|_\infty +    
        \Big|\frac1t \sumtau  \max_i \{ \vithetau \beta^*_i \}  -  \int_\Theta p^*(\theta)\diff S (\theta )\Big|
        \toas 0 \;.
    \end{align*}
    Here the first term converges to zero a.s.\ by $\beta^\gam \toas \betast$, and the second term converges to $0$ a.s.\ by strong law of large numbers and noting $\E[\max_i \{v_i(\theta)\beta^*_i \}] = \E[p^*(\theta)]$.
    This proves the first part of the statement.

    Define $\ubarbetaQi \defeq  \frac{b_i}{\nui + b_i}$ and $\betabarQ \defeq 1$. We know $\ubarbetaQi \leq \betasti \leq \betabarQ$ from \cref{fact:qeg}. 
    Here we use subscript $Q$ to denote quantities related to the quasilinear market. Define the set
    \begin{align*}
        C_Q \defeq \prod_{i=1}^n \bigg[\frac{b_i}{2 \nu_i + b_i}, 1 \bigg]
        \;.
    \end{align*}
    Clearly we have $\betast \in C_Q$. Furthermore, for $t$ large enough $\betagam \in C_Q$ with high probability.
    To see this, if $t$ satisfies $t \geq 2 (\vbar / \min_i \nui) \sq \log(2n/\eta)$, then $\frac1t \sumtau \vithetau \leq 2 \E[v_i(\theta)]$ for all $i$ with probability $\geq 1-\eta$. By a bound on $\betagam$ in the QME
    \begin{align*}
        \betagami \geq \frac{b_i}{b_i +  \frac1t \sumtau \vithetau},
    \end{align*}
    (see Section 6 in \citet{gao2022infinite}), we obtain $\betagami \geq \frac{b_i}{b_i + 2\nu_i}$ (recall $\nu_i = \E[v_i(\theta)]$).
 
    To obtain the convergence rate, we simply repeat the proof of \cref{thm:high_prob_containment}. Let $L_Q$ and $\lambda_Q$ be the Lipschitz constant and strong convexity constant of $H$ and $H_t$ on $C_Q$.
    We obtain from \cref{eq:final_t_bound} that with probability $\geq 1-2\alpha$,
    there exists a constant $c'$ such that
    as long as 
    \begin{align}
        \label{eq:t_bound_in_quasilinear_case}
        t & \geq c'\cdot  L_Q \sq  \min \bigg\{ \frac{1}{ \lambda_Q \epsilon} , \frac{1}{\epsilon\sq} \bigg\} \cdot \bigg(n\log\Big(\frac{16L_Q}{\eps - \delta}\Big) + \log \frac1\alpha\bigg)
        \;,
    \end{align}
    it holds $|H(\betagam)- H(\betast)| < \epsilon$ and that $\betagam \in C_Q$ (see \cref{cor:H_concentration}).

    Next we calculate $L_Q$ and $\lambda_Q$. Note on $C_Q$, the minimum eigenvalue of $\nabla\sq\Psi  (\beta )= \Diag\{ \frac{b_i}{(\betai)\sq} \}$ can be lower bounded by $\ubar{b}$. So we conclude $\lambda_Q = \ubar{b}$.
    And the Lipschitzness constant can be seen by the following. For $\beta,\beta' \in C_Q$,
    \begin{align*}
        & |H_t(\beta)  - H_t(\beta')| 
        \\
        & \leq \frac1t \sumtau \big|\max_i \{\vithetau \beta_i \} - \max_i \{\vithetau \beta_i' \}\big| + 
        \sumiton b_i \big| \log \beta_i -  \log \beta_i'\big|
        \\
        & \leq \vbar \| \beta - \beta'\|_\infty + \sumiton b_i \cdot \frac{1}{b_i / (2\nui + b_i)} |\beta_i - \beta_i'|
        \\
        & \leq \big(\vbar + 2\nubar n + 1\big) \| \beta - \beta'\|_\infty
        \;.
    \end{align*}
    Similar argument shows that $H$ is also $(\vbar + 2\nubar n + 1)$-Lipschitz on $C_Q$. 
    We conclude $L_Q =(\vbar + 2\nubar n + 1)$.
    Now \cref{eq:t_bound_in_quasilinear_case} shows that for $ t  = \Omega(\ubar{b}\inv) $ 
    (so that the $1/(\lambda_Q \epsilon)$ term in the min becomes dominant) we have 
    \begin{align*}
        |H(\betagam) - H(\betast)| = 
        \tilde{O}_p
        \bigg(
            \frac{n\big(\vbar + 2\nubar n + 1\big)\sq  }{\ubar{b} t}
        \bigg)
        \;,
    \end{align*}
    where we use $\tilde{O}_p$ to ignore logarithmic factors of $t$. Moreover, 
    \begin{align*}
        \| \betagam - \betast\|_\infty \leq  \sqrt{2|H(\betagam) - H(\betast)|  / \lambda_Q}
        = \tilde O_p\bigg(\frac{\sqrt{n} \big(\vbar + 2\nubar n + 1\big)}{\ubar{b} \sqrt t}\bigg)
        \;.
    \end{align*}
    From here we obtain
    \begin{align*}
        & |\REV^\gam - \REV\st| 
        \\
        & \leq \vbar \| \betagam - \betast\|_\infty +\Big|\tfrac1t \sumtau  \max_i \{ \vithetau \beta^*_i \}  -  \int_\Theta p^*(\theta)\diff S (\theta )\Big|
        \\
        & = \tilde O_p\bigg(\frac{\vbar \sqrt{n} \big(\vbar + 2\nubar n + 1\big)}{\ubar{b} \sqrt t}\bigg) + O_p \bigg(\frac{\vbar}{\sqrt{t}}\bigg)
        \\
        & = \tilde O_p\bigg(\frac{ \vbar \sqrt{n} \big(\vbar + 2\nubar n + 1\big)}{\ubar{b} \sqrt t}\bigg)
        \;.
    \end{align*}
    We conclude
    $ |\REV^\gam - \REV\st| = 
   \tilde{O}_p
   \Big(\frac{\vbar \sqrt{n} (\vbar + 2\nubar n + 1 ) }{ \ubar{b} \sqrt{t}}\Big)
   $. This completes the proof of \cref{thm:rev_convergence}.
\end{proof}

\section{Experiments}
\label{sec:experiments}

We conduct experiments to validate the theoretical findings, namely, the convergence of $\LNSW^\gamma$ to $\LNSW^*$ (\cref{thm:consistency}) and CLT (\cref{it:thm:normality:1}).

\textbf{Verify convergence of NSW to its infinite-dimensional counterpart in a linear Fisher market.}
First, we generate an infinite-dimensional market $\mathcal{M}_1$ of $n=50$ buyers each having a linear valuation $v_i(\theta) = a_i \theta + c_i$ on $\Theta=[0,1]$, with randomly generated $a_i$ and $c_i$ such that $v_i(\theta)\geq 0$ on $[0,1]$. 
Their budgets $b_i$ are also randomly generated. 
We solve for $\LNSW^*$ using the tractable convex conic formulation described in \citet[Section 4]{gao2022infinite}. 
Then, following \cref{sec:data}, for the $j$-th ($j\in[k$]) sampled market of size $t$, we randomly sample $\{ \theta^{t,\tau}_j\}_{\tau \in [t]}$ uniformly and independently from $[0,1]$ and obtain markets with $n$ buyers and $t$ items, with individual valuations $v_{i}(\theta^{t,\tau}_j) = a_i \theta^{t,\tau}_j + c_i $, $j\in [t]$. 
We take $t = 100, 200, \dots, 5000$ and $k=10$.
We compute their equilibrium Nash social welfare, i.e., $\LNSW^\gamma$, and their means and standard errors over $k$ repeats across all $t$. 
As can be seen from \cref{fig:linear-one-dim-plot}, $\LNSW^\gamma$ values quickly approach $\LNSW^*$, which align with the a.s.\ convergence of $\LNSW^\gamma$ in \cref{thm:consistency}. 
Moreover, $\LNSW^\gamma$ values increase as $t$ increase, which align with the monotonicity observation in the beginning of \cref{sec:inference}.

\begin{figure}
    \centering
    \includegraphics[scale=.6]{./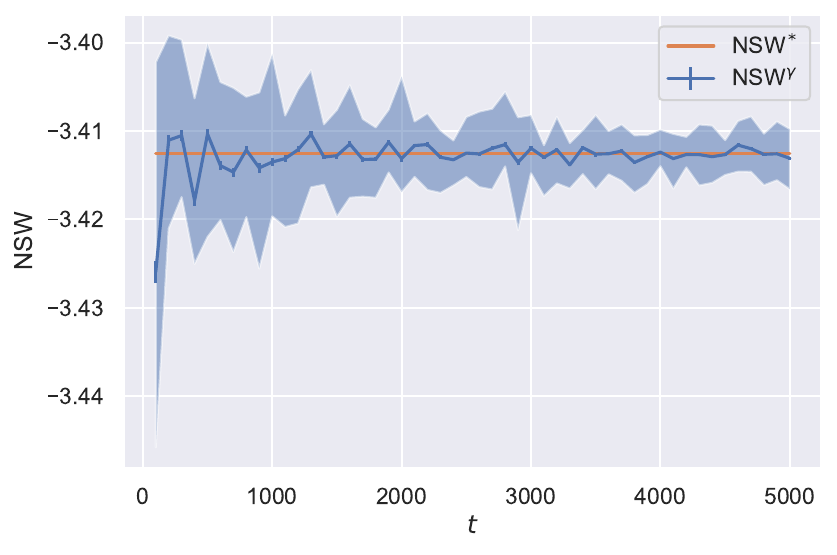}
    \caption{\small
        Mean and standard errors of $\LNSW^\gamma$ of observed markets of sizes $t=100, 200, \dots, 5000$ ($k=10$ repeats) sampled from the infinite-dimensional market $\mathcal{M}_1$ with linear valuations $v_i(\theta) = a_i\theta+c_i$.}
    \label{fig:linear-one-dim-plot}
\end{figure}

\textbf{Verify asymptotic normality of NSW in a linear Fisher market.}
Next, for the same infinite-dimensional market $\mathcal{M}_1$, we set $t=5000$, sample $k=50$ markets of $t$ items analogously, and compute their respective $\LNSW^\gamma$ values. We plot the enpirical distribution of $\sqrt{t}({\LNSW^\gamma - \LNSW^*})$ and the probability density of $N(0, \sigma^2_{\NSW})$, where $\sigma^2_{\NSW}$ is defined in \cref{thm:normality}.\footnote{To compute $\sigma^2_{\NSW}$, we use the fact that  $p^* = \max_i \beta^*_i v_i(\theta) $ is a piecewise linear function, since $v_i$ are linear. Following \citep[Section 4]{gao2022infinite}, we can find the breakpoints of the pure equilibrium allocation $0=a_0 < a_1 < \dots < a_{50} = 1$, and the corresponding interval of each buyer $i$. Then, $\int_0^1 (p^*(\theta))^2 dS(\theta)$ amounts to integrals of quadratic functions on intervals.}
\cref{thm:normality} shows that $\sqrt{t}({\LNSW^\gamma - \LNSW^*}) \tod N(0, \sigma^2_\NSW)$.
As can be seen in \cref{fig:clt-one-dim}, the empirical distribution is close to the limiting normal distribution.
A simple Kolmogorov-Smirnov test shows that the empirical distribution appears normal, that is, the alternative hypothesis of it not being a normal distribution is not statistically significant.
This is further corroborated by the Q-Q plot in \cref{fig:clt-qq}, as the plots of the quantiles of $\sqrt{t}(\LNSW^\gamma - \LNSW^*)$ values against theoretical quantiles of $N(0, \sigma^2_\NSW)$ appear to be a straight line. 

\begin{figure}
    \centering
    \includegraphics[scale=.6]{./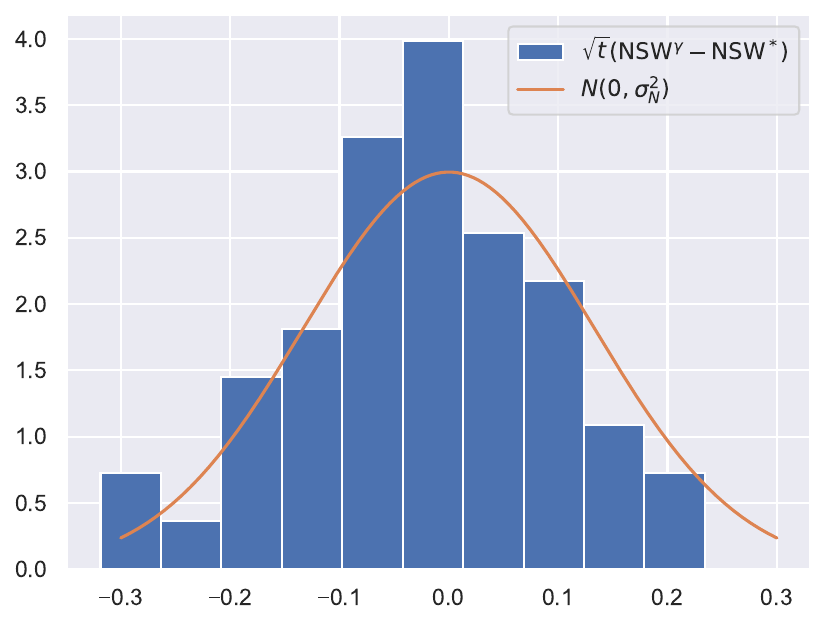}
    \caption{\small
        Empirical distribution of $\sqrt{t}(\LNSW^\gamma - \LNSW^*)$ and $N(0, \sigma^2_\NSW)$. Kolmogorov-Smirnov test null hypothesis: $\sqrt{t}(\LNSW^\gamma - \LNSW^*)$ values are sampled i.i.d. from $N(0, \sigma^2_\NSW)$; alternative hypothesis: they are not sampled i.i.d. from $N(0, \sigma^2_\NSW)$; test statistic: $0.1256$; $p$-value: $0.3779$. }
    \label{fig:clt-one-dim}
\end{figure}

\begin{figure}
    \centering
    \includegraphics[scale=.6]{./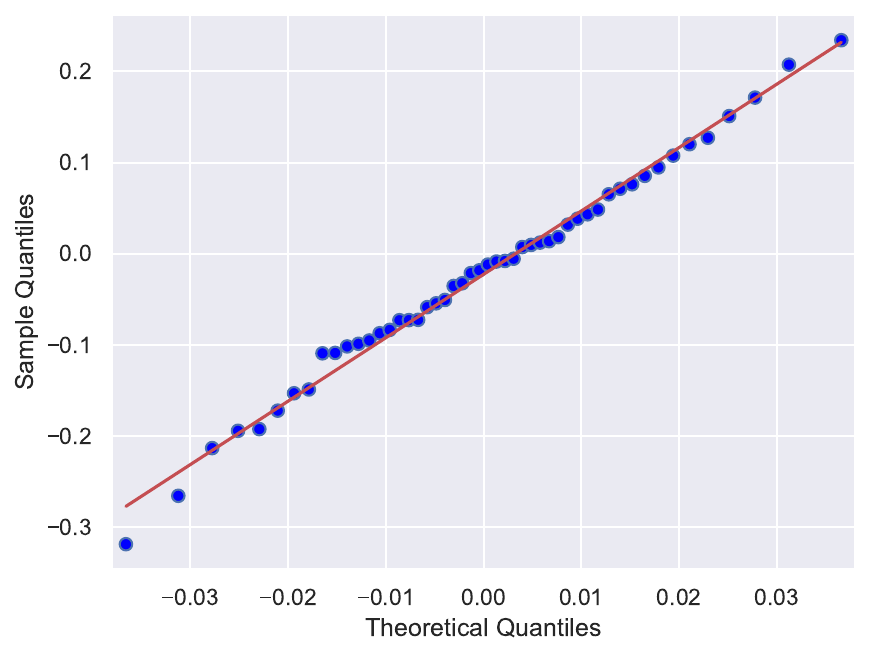}
    \caption{\small
        Q-Q Plot of $\sqrt{t}(\LNSW^\gamma - \LNSW^*)$ values against theoretical quantiles of $N(0, \sigma^2_\NSW)$; a (near) straight line indicates that $\sqrt{t}(\LNSW^\gamma - \LNSW^*)$ values appear to be normal. }
    \label{fig:clt-qq}
\end{figure}

\textbf{Verify NSW convergence in a multidimensional linear Fisher market.}
Finally, we consider an infinite-dimensional market $\mathcal{M}_2$ with multidimensional linear valuations $v_i(\theta) = a_i^\top \theta + c_i $, $a_i\in \R^{10}$. 
We similarly sample markets of sizes $t=100, 200, ..., 5000$ from $\mathcal{M}_2$, where the items $\theta^t_j$, $j\in [k]$ are sampled uniformly and independently from $[0,1]^{10}$. 
As can be seen from \cref{fig:linear-one-dim-plot}, $\LNSW^\gamma$ values increase and converge to a fixed value around $-1.995$. In this case, the underlying true value $\LNSW^*$ (which should be around $-1.995$) cannot be captured by a tractable optimization formulation.

\begin{figure}
    \centering
    \includegraphics[scale=.6]{./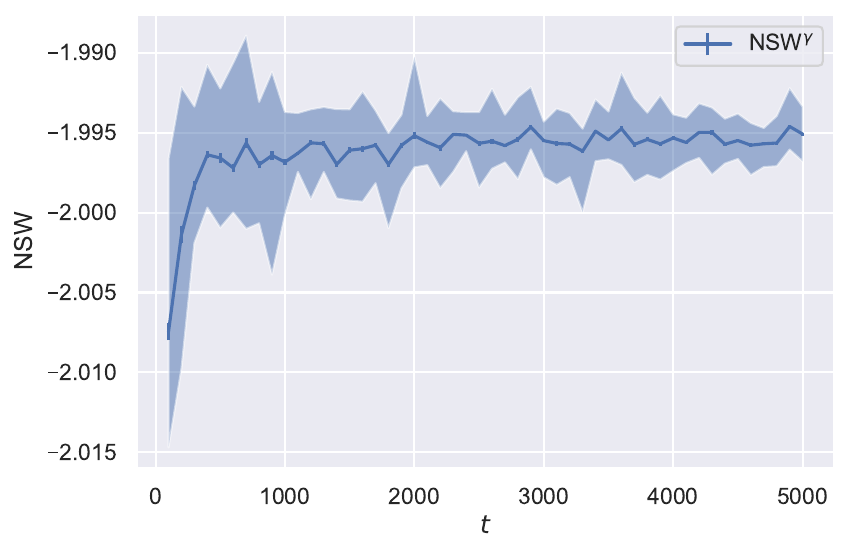}
    \caption{\small
        Mean and standard errors of $\LNSW^\gamma$ of observed markets of sizes $t=100, 200, \dots, 5000$ ($k=10$ repeats) sampled from the infinite-dimensional market $\mathcal{M}_2$ with linear valuations $v_i(\theta) = a_i^\top \theta+c_i$, $a_i\in \R^{10}$. }
    \label{fig:clt-one-dim}
\end{figure}

\end{document}